\documentclass[11pt,a4paper]{article}

\usepackage[a4paper,top=2.10cm,bottom=2.18cm,left=2.28cm,right=2.28cm,headheight=15pt]{geometry}
\usepackage{fontspec}

\usepackage{amsmath,amssymb,amsthm,mathtools,bm,mathrsfs}
\usepackage{booktabs,longtable,tabularx,array,multirow}
\usepackage{enumitem}
\usepackage{graphicx}
\usepackage{tikz}
\usetikzlibrary{arrows.meta,positioning,calc,fit,shapes.geometric,matrix}
\usepackage{xcolor}
\usepackage{microtype}
\usepackage{fancyhdr}
\usepackage{caption}
\usepackage{hyperref}
\usepackage[nameinlink,noabbrev]{cleveref}
\usepackage{bookmark}
\usepackage{titlesec}
\usepackage{setspace}

\hypersetup{
  hidelinks,
  pdfauthor={Xianwei Meng},
  pdftitle={SN-ASMO: Satellite-Navigation Array Spatial-Manifold Precise Observation Theory},
  pdfsubject={Observation Formation, Unified U(1) Geometry, Intrinsic Information, and Preservation of the RTK Integer Structure},
  pdfkeywords={GNSS,array manifold,U(1),phase transport,Fisher information,Fubini-Study,RTK,integer ambiguity}
}

\titleformat{\section}{\Large\bfseries}{\thesection}{0.65em}{}
\titleformat{\subsection}{\large\bfseries}{\thesubsection}{0.65em}{}
\titleformat{\subsubsection}{\normalsize\bfseries}{\thesubsubsection}{0.65em}{}
\titlespacing*{\section}{0pt}{1.45ex}{0.85ex}
\titlespacing*{\subsection}{0pt}{1.20ex}{0.65ex}
\titlespacing*{\subsubsection}{0pt}{1.0ex}{0.50ex}
\setlist[itemize]{leftmargin=2.25em,itemsep=0.25em,topsep=0.35em}
\setlist[enumerate]{leftmargin=2.45em,itemsep=0.30em,topsep=0.38em}
\allowdisplaybreaks[3]
\newtheoremstyle{snstyle}{0.72em}{0.72em}{\normalfont}{}{\bfseries}{.}{0.55em}{}
\theoremstyle{snstyle}
\newtheorem{principle}{Principle}[section]
\newtheorem{definition}{Definition}[section]
\newtheorem{assumption}{Assumption}[section]
\newtheorem{theorem}{Theorem}[section]
\newtheorem{proposition}{Proposition}[section]
\newtheorem{corollary}{Corollary}[section]
\newtheorem{remark}{Remark}[section]

\newcommand{\R}{\mathbb{R}}
\newcommand{\C}{\mathbb{C}}
\newcommand{\Z}{\mathbb{Z}}
\newcommand{\Torus}{\mathbb{T}}
\newcommand{\Uone}{\mathrm{U}(1)}
\newcommand{\SOthree}{\mathrm{SO}(3)}
\newcommand{\CP}{\mathbb{CP}}
\newcommand{\CN}{\mathcal{CN}}
\newcommand{\E}{\mathbb{E}}
\newcommand{\calX}{\mathcal{X}}
\newcommand{\calE}{\mathcal{E}}
\newcommand{\calZ}{\mathcal{Z}}
\newcommand{\calY}{\mathcal{Y}}
\newcommand{\calH}{\mathcal{H}}
\newcommand{\calV}{\mathcal{V}}
\newcommand{\calM}{\mathcal{M}}
\newcommand{\calC}{\mathcal{C}}
\newcommand{\calD}{\mathcal{D}}
\newcommand{\calS}{\mathcal{S}}
\newcommand{\calI}{\mathcal{I}}
\newcommand{\calR}{\mathcal{R}}

\newcommand{\calN}{\mathcal{N}}
\newcommand{\calG}{\mathcal{G}}

\newcommand{\dd}{\mathrm{d}}
\newcommand{\jj}{\mathrm{j}}
\newcommand{\ee}{\mathrm{e}}
\newcommand{\ph}{\operatorname{ph}}
\newcommand{\Arg}{\operatorname{Arg}}
\newcommand{\rank}{\operatorname{rank}}
\newcommand{\Range}{\operatorname{Range}}
\newcommand{\Ker}{\operatorname{Ker}}
\newcommand{\Span}{\operatorname{span}}
\newcommand{\tr}{\operatorname{tr}}
\newcommand{\diag}{\operatorname{diag}}
\newcommand{\cov}{\operatorname{cov}}
\newcommand{\Var}{\operatorname{Var}}
\newcommand{\Rea}{\operatorname{Re}}
\newcommand{\Ima}{\operatorname{Im}}

\newcommand{\crossmat}[1]{[#1]_{\times}}
\newcommand{\norm}[1]{\left\lVert#1\right\rVert}
\newcommand{\abs}[1]{\left\lvert#1\right\rvert}

\newcommand{\trans}{^{\mathsf T}}
\newcommand{\herm}{^{\mathsf H}}

\newcommand{\bmxi}{\bm\xi}
\newcommand{\bmeta}{\bm\eta}
\newcommand{\bmu}{\bm u}
\newcommand{\bmw}{\bm w}
\newcommand{\bmc}{\bm c}
\newcommand{\bma}{\bm a}
\newcommand{\bmg}{\bm g}
\newcommand{\bmz}{\bm z}
\newcommand{\bmn}{\bm n}
\newcommand{\bmtheta}{\bm\theta}
\newcommand{\bmvartheta}{\bm\vartheta}
\newcommand{\nav}{\mathrm{nav}}
\newcommand{\obs}{\mathrm{obs}}

\newcommand{\PI}{\mathrm{PI}}
\newcommand{\hw}{\mathrm{hw}}
\newcommand{\DD}{\mathrm{DD}}

\crefname{equation}{Eq.}{Eqs.}
\Crefname{equation}{Equation}{Equations}
\crefname{theorem}{theorem}{theorems}
\crefname{proposition}{proposition}{propositions}
\crefname{definition}{definition}{definitions}
\crefname{assumption}{assumption}{assumptions}
\crefname{corollary}{corollary}{corollaries}
\crefname{principle}{principle}{principles}
\crefname{remark}{remark}{remarks}
\crefname{figure}{figure}{figures}
\crefname{table}{table}{tables}
\crefname{section}{section}{sections}

\begin{document}
\hypersetup{pageanchor=false}
\begin{titlepage}
\thispagestyle{empty}
\centering
\vspace*{1.55cm}
\vspace{0.30cm}
\vspace{0.75cm}
\vspace{1.25cm}
{\LARGE\bfseries SN-ASMO: Satellite-Navigation Array\par}
\vspace{0.18cm}
{\LARGE\bfseries Spatial-Manifold Precise Observation Theory\par}
\vspace{0.40cm}
{\large Dual-Axis Formation, Phase Transport, and RTK\par}
\vfill
{\Large Xianwei Meng\par}
\vspace{0.35cm}
{\large Hefei University of Technology; Anhui Zhongke Yujiang Technology Co., Ltd.\par}
\vspace{0.25cm}
{\large Corresponding author: \href{mailto:mxianwei@hfut.edu.cn}{mxianwei@hfut.edu.cn}\par}
\vfill
{\large Manuscript THE-INNOVATION-D-26-01250\par}
\vspace{0.20cm}
{\large Final submission manuscript\par}
\vspace{0.45cm}
{\normalsize August 2026\par}
\end{titlepage}

\doublespacing
\pagenumbering{Roman}
\setcounter{page}{1}
\hypersetup{pageanchor=true}

\section*{Public Summary}
\addcontentsline{toc}{section}{Public Summary}
\begin{itemize}
\item Synchronized weights applied to the same GNSS samples show that receiver configuration is an irreducible input to observation formation.
\item The complete behavioral quotient then yields the minimal physical-state and observer-state dual axis; the second axis is derived rather than assumed.
\item On nonzero-response paths, exact same-source phase transport preserves carrier continuity through interference-suppressing weight changes and admitted high dynamics.
\item Under explicit phase-closure, tracking, accuracy, and integrity conditions, high-precision RTK remains compatible with anti-jamming.
\end{itemize}

\section*{Abstract}
\addcontentsline{toc}{section}{Abstract}
Adaptive GNSS arrays must suppress interference while accounting for platform motion and weight-dependent changes in the carrier statistic used by tracking and RTK. We derive the required two-state structure. Applying synchronized admissible weights to one same-source element-sample block at fixed physical state can yield unequal complex conditional means, ruling out dependence on physical state alone. The complete behavioral quotient produces the behavior-minimal observer-state space \(\calC_{\rm obs}\) and admissible joint domain \(\calE\subseteq\calX\times\calC_{\rm obs}\), instantiating the Dual-Axis Axiom of Observation in GNSS. On \(g_s\neq0\), the normalized response defines the one-form \(\Ima(\dd g_s/g_s)\) and, under standard radian normalization and response-tracking compatibility, the compatible \(\Uone\) connection; its horizontal equation is the phase-transport fundamental equation. PTFE path transport requires an explicit nonzero path; same-source inverse transport additionally requires synchronization and phase anchoring. We derive reference covariance, correlated-noise propagation, conditional Fisher increments, and joint residual bias--covariance propagation to RTK integer risk. This analysis separates transport from estimation, relative alignment from absolute closure, and circular phase from integer cycle count. Given a feasible common weight sequence and the stated front-end, synchronization, nonzero-response, phase-closure, tracking, regularity, accuracy, and integrity assumptions, interference suppression and admitted high-dynamic reception are conditionally compatible with high-precision RTK within one receiver chain.

\noindent\textbf{Keywords:} satellite navigation; array spatial manifold; precise observation; observer; \(\Uone\) group; quotient geometry; phase transport; Fisher information; Fubini--Study geometry; RTK; integer ambiguity

\clearpage
\hypersetup{pageanchor=false}
\pagenumbering{arabic}
\setcounter{page}{1}
\hypersetup{pageanchor=true}
\section{Introduction}
\subsection{Background}
Classical GNSS observation theory begins with formed pseudorange, carrier-phase, Doppler, and signal-to-noise measurements, then estimates position, clock offset, atmospheric delay, and integer ambiguity. Carrier phase is modeled through geometric range, clock and propagation terms, hardware biases, and integer ambiguities; antenna and channel effects enter through phase-center models, calibration parameters, and measurement noise. This formulation is effective when the receiver formation configuration is fixed or modeled externally.

Adaptive interference suppression changes this setting. An array receiver updates its spatial or space--time weights as the interference environment changes, while platform translation, rotation, and deformation alter the direction of arrival in array coordinates, inter-element phase, and channel response. The receiver configuration therefore enters the formation of the composite statistic. At a fixed physical state, different admissible weights can produce different complex outputs; under motion, fixed weights sample a changing array response.

Platform motion enters through two typed pathways. The \textbf{propagation pathway} changes geometric range, line-of-sight velocity, and navigation Doppler through \(\alpha_s^{\nav}\). The \textbf{array-response pathway} changes the direction of arrival in array coordinates and hence \(g_s\) through \(D_{\bmxi}g_s\). Weight, channel, synchronization, and processing changes enter separately through \(D_{\bmeta}g_s\). Without this separation, attitude- or weight-induced response phase can be misassigned to propagation range and carried into tracking, differencing, and integer fixing.

Previous studies have addressed array-induced code and carrier-phase bias, bias estimation, phase-stable processing, RTK under interference, and multi-antenna array-aided RTK\cite{delorenzo2006,church2007,church2009,obrien2008,xie2021,li2023,wang2024,wang2025,li2025artk,bamberg2023}. They establish the engineering context but address mainly particular algorithms, constraints, calibration structures, or positioning methods. The remaining interface problem is how to represent physical content, coordinate freedom, measurable relations, statistical information, and integer structure consistently when receiver configuration participates in observation formation.

\subsection{Gaps in Existing Theory}
We organize this problem around four unresolved interfaces.

\begin{enumerate}
\item \textbf{Gap in the observation-formation mechanism.} Conventional models often treat the receiver as the terminal device of the observation chain. Under array reception, however, the array hardware, channels, weights, synchronization, and correlators all participate in forming the observation. If the observer state is omitted from the formation map, physical-state changes and receiver-configuration-induced output changes are confounded within that map.
\item \textbf{Gap in phase structure and identifiability.} The composite array phase is often represented using a selected reference element, a normalization convention, or a calibration quantity, but these coordinate objects are not identical to the absolute composite phase relative to the navigation reference point. What purely relative branches can and cannot measure, and what external conditions are required for absolute closure, must be determined by identifiability rather than intuition.
\item \textbf{Gap between geometry and statistics.} Complex-logarithmic differentials, quotient spaces, Fisher information, and Fubini--Study geometry can describe array responses. Without a correct determination of genuine nuisance freedoms, tangent-space constraints, random weights, and correlated-noise models, however, formal geometric elegance does not guarantee valid statistical conclusions.
\item \textbf{Gap in preservation of the RTK integer structure.} A residual composite array-response phase can generate both a deterministic non-integer bias and additional random noise. Analyzing covariance or nominal fix rate alone cannot guarantee correct fixing. Bias and covariance must be propagated in parallel to the integer Voronoi region and false-fix risk, and the hierarchy between the \(\Uone\) phase branch and the tracking-loop integer cycle count must be explicit.
\end{enumerate}

The common origin of these gaps is that conventional theory focuses on estimating unknown states from given observations without first explaining how the physical world and the observer jointly generate those observations.

\subsection{How the Dual Axis Grows from GNSS Observation Facts}
The Dual-Axis Axiom of Observation is not imposed as a receiver-model assumption. Its set-theoretic foundation constructs complete behavioral quotients from typed observation records and yields a behavior-minimal two-state interface \cite{meng2026dualaxis}. The application-specific task of SN-ASMO is therefore to provide a GNSS nontriviality witness showing that the observer input cannot be eliminated.

Before any coordinate model is chosen, the observation protocol distinguishes two empirically typed roles: the GNSS propagation/platform condition and the receiver configuration that forms the record. This typing is supplied by the experimental protocol and physical model, not inferred from numerical output values. The formation role includes the array and channel condition, coherent reference, digital weights, synchronization, and processing rule. Two configurations represent the same \(\bmeta\) precisely when, for every \(\bmxi\), their admissibility agrees and, whenever both are admissible, their complete output behavior or conditional law agrees. Thus \(\calC_{\rm obs}\) is the behavioral quotient of formation configurations, not a catalogue of implementation labels.

By construction, \(\calC_{\rm obs}\) is behavior-minimal: distinct classes differ in complete GNSS formation behavior on at least one admissible physical state, and no coarser behavior-preserving quotient can merge them. The foundational result establishes the existence, unique descent, and task-relative uniqueness of this quotient \cite{meng2026dualaxis}; the proposition below establishes its nontriviality for GNSS. The descended application interface is
\begin{equation}
\calE\subseteq\calX\times\calC_{\rm obs},
\qquad
\calH:\calE\to\calZ.
\label{eq:dual-axis-descent}
\end{equation}
The product expresses two logical input types; it does not assert that every pair is executable, statistically independent, orthogonal, or symmetric.

\begin{proposition}[GNSS strong witness for a non-degenerate observer axis]
Fix a physical state \(\bmxi_0\), a satellite \(s\), and one coherent element-sample block. Apply two synchronized, admissible digital weights \(\bmw_0\) and \(\bmw_1\) to that same block while holding all other array, channel, synchronization, and processing coordinates fixed. Let \(\bmeta_1\) and \(\bmeta_2\) be the behavioral classes represented by these two configurations, respectively, so that \((\bmxi_0,\bmeta_k)\in\calE\) for \(k=1,2\). Suppose that
\begin{equation}
\alpha_s^{\nav}(\bmxi_0)
(\bmw_0-\bmw_1)\herm\bmc_s^{(O)}(\bmxi_0)\neq0.
\label{eq:gnss-strong-witness}
\end{equation}
Then the two configurations belong to distinct classes in \(\calC_{\rm obs}\), and the complete formation map on \(\calE\) cannot factor through \(\calX\) alone.
\label{prop:gnss-dual-axis-witness}
\end{proposition}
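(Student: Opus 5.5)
The plan is to compute the complex conditional mean of the formed output under each weight, show that the two means differ, and then use the behavioral-quotient definition of \(\calC_{\rm obs}\) to conclude. Since the paper has not yet displayed an explicit signal model, I first fix the standard one. The coherent element-sample block at \(\bmxi_0\) is
\[
\bm x=\alpha_s^{\nav}(\bmxi_0)\,\bmc_s^{(O)}(\bmxi_0)\,p_s+\bm n .
\]
Here \(\bmc_s^{(O)}\) is the array manifold vector referred to the navigation origin \(O\), \(\alpha_s^{\nav}\) carries the propagation amplitude and phase, \(p_s\) is the known replica or correlation kernel, and \(\bm n\) is zero-mean noise plus interference. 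Because that noise is conditionally zero-mean given \(\bmxi_0\), the complex conditional mean of the correlator output \(y_k=\bmw_k\herm\bm x\) (after replica correlation) is
\[
\mu_k:=\E[y_k\mid\bmxi_0,\bmw_k]=\alpha_s^{\nav}(\bmxi_0)\,\bmw_k\herm\bmc_s^{(O)}(\bmxi_0),
\]
up to a common known nonzero correlation gain. The key point is that both weights act on the \emph{same} block and all other coordinates are held fixed, so the common factors cancel in the difference. This gives
\[
\mu_0-\mu_1=\alpha_s^{\nav}(\bmxi_0)(\bmw_0-\bmw_1)\herm\bmc_s^{(O)}(\bmxi_0),
\]
which is nonzero by \eqref{eq:gnss-strong-witness}. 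Synchronization is used here: it guarantees that no relative timing or phase offset enters one branch and not the other. Without it, a nonzero difference could be attributed to a desynchronized physical sampling rather than to the configuration.

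Next I invoke the definition of the behavioral quotient. Two configurations represent the same \(\bmeta\) only if, for every admissible \(\bmxi\), their complete conditional laws agree. At \(\bmxi_0\) both configurations are admissible, since \((\bmxi_0,\bmeta_k)\in\calE\). Their output laws have different first moments, so the laws differ, and equal laws would force equal means. Hence \(\bmeta_1\neq\bmeta_2\) in \(\calC_{\rm obs}\).

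For non-factorization, suppose \(\calH(\bmxi,\bmeta)=F(\bmxi)\) on \(\calE\) for some \(F\) defined on \(\calX\). Then \(\calH(\bmxi_0,\bmeta_1)=\calH(\bmxi_0,\bmeta_2)\), which contradicts the difference in laws just established. This holds whether \(\calZ\) is read as the space of output laws or of conditional means, since the mean is a function of the law.

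I expect the main obstacle to be justifying that the conditional mean is exactly linear in \(\bmw\) with the same \(\alpha_s^{\nav}\) and \(\bmc_s^{(O)}\) in both branches. This requires that the digital weight enters after all analog channel effects, so that the channel responses are absorbed identically into \(\bmc_s^{(O)}\). It also requires that the noise and interference contribution be conditionally zero-mean even though the weights may be interference-adaptive. I would handle this by treating \(\bmw_0\) and \(\bmw_1\) as fixed, conditioned-upon quantities and stating explicitly the front-end assumption that the noise mean vanishes given \(\bmxi_0\).
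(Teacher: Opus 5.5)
Your proposal is correct and follows essentially the same route as the paper. The paper computes the noiseless output difference \(z_{s,0}^{(0)}-z_{s,1}^{(0)}=\alpha_s^{\nav}(\bmxi_0)(\bmw_0-\bmw_1)\herm\bmc_s^{(O)}(\bmxi_0)\neq0\), concludes that the classes are distinct, and rules out factorization through \(\calX\) by the same contradiction; it remarks, as you do, that in the stochastic case the argument runs on conditional means or complete laws rather than on isolated noise realizations. Your conditioning on fixed weights and the common correlation gain only adds detail; note that this gain should multiply your \(\mu_0-\mu_1\), which is harmless because it is nonzero.
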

\begin{proof}
The corresponding noiseless complex outputs obey
\begin{equation}
z_{s,0}^{(0)}-z_{s,1}^{(0)}
=\alpha_s^{\nav}(\bmxi_0)
(\bmw_0-\bmw_1)\herm\bmc_s^{(O)}(\bmxi_0)\neq0.
\label{eq:witness-output-difference}
\end{equation}
The unequal outputs place the two configurations in different behavioral classes. If \(\calH\) factored through \(\calX\) alone, both outputs at \(\bmxi_0\) would be equal, contradicting \Cref{eq:witness-output-difference}. For stochastic observations, the same argument applies to conditional means or, more generally, complete conditional laws; two isolated noise realizations are insufficient.
\end{proof}

Physical-axis nondegeneracy is established separately. For a fixed \(\bmeta_0\), it is enough to exhibit \(\bmxi_0,\bmxi_1\) with \((\bmxi_i,\bmeta_0)\in\calE\), \(i=0,1\), such that
\begin{equation}
\alpha_s^{\nav}(\bmxi_0)g_s(\bmxi_0;\bmeta_0)
\neq
\alpha_s^{\nav}(\bmxi_1)g_s(\bmxi_1;\bmeta_0).
\label{eq:physical-axis-witness}
\end{equation}
A change in range, propagation phase, or direction of arrival supplies this witness whenever it changes the displayed product. Together, the two witnesses yield two nondegenerate input axes. They imply neither statistical independence, orthogonality, symmetry, nor global Cartesian-product admissibility. In the three-layer description, \(\bmxi\) and \(\bmeta\) are inputs and \(\bmz\) is the output; \(\bmz\) is not a third input axis.

The resulting deterministic observation-formation relation is
\begin{equation}
\bmz^{(0)}=\calH(\bmxi;\bmeta),
\qquad (\bmxi,\bmeta)\in\calE.
\label{eq:joint-map-intro}
\end{equation}
When an additive stochastic model is justified, it is written separately as \(\bmz=\bmz^{(0)}+\bmn\). Thus geometry is assigned to the formation map and randomness to the conditional measurement law.

\begin{principle}[Preservation of observation physics]
At fixed \(\bmxi\), changing \(\bmeta\) changes how the record is formed but does not alter the navigation-propagation process represented by that physical state. The model must therefore distinguish variations of \(\alpha_s^{\nav}\) and the physical dependence of \(g_s\) from variations induced through \(D_{\bmeta}g_s\). Any quotient, transport, inverse transport, or statistical elimination applied to the latter must preserve propagation geometry, physical motion, continuous carrier phase, and the integer-ambiguity structure.
\label{prin:physics-preservation}
\end{principle}

For the witness pair at \(\bmxi_0\), where \(\bmeta_1\) and \(\bmeta_2\) are the classes represented by the configurations using \(\bmw_0\) and \(\bmw_1\), respectively,
\begin{equation}
\calH(\bmxi_0;\bmeta_1)\neq\calH(\bmxi_0;\bmeta_2).
\label{eq:observer-change}
\end{equation}
Thus a fixed physical state can yield different outputs as the observer state varies, and a change in \(\bmz\) cannot by itself be attributed to a change in \(\bmxi\). SN-ASMO therefore seeks covariance relations, exactly quotientable freedoms, and navigation quantities that remain invariant under admissible observer-state changes.

The theoretical chain developed in this paper is
\begin{equation}
\begin{aligned}
\text{typed GNSS formation records}
&\longrightarrow \text{fixed-physical-state strong witness}\\
&\longrightarrow \text{behavioral quotient and minimal observer state}\\
&\longrightarrow \calE\subseteq\calX\times\calC_{\rm obs},
\quad \calH:\calE\to\calZ\\
&\longrightarrow \text{nonzero complex response and phase transport}\\
&\longrightarrow \text{intrinsic information and RTK cycle continuity}.
\end{aligned}
\label{eq:theory-chain}
\end{equation}

In the basic model retaining only direction of arrival and array weights, extract from \(\bmxi\) the unit direction vector \(\bmu_s\) and from \(\bmeta\) the weight vector \(\bmw\). Then
\begin{equation}
g_s=\calH_A(\bmu_s;\bmw)=\bmw\herm\bmc(\bmu_s).
\label{eq:basic-array-response}
\end{equation}
\Cref{eq:basic-array-response} is only a specialization of the complete mapping; it does not identify \(\bmxi\) with \(\bmu_s\), \(\bmeta\) with \(\bmw\), or \(\bmz\) with \(g_s\).

\subsection{Main Contributions}
This work makes five contributions.

\begin{enumerate}
\item \textbf{Dual-axis observation formation.} A synchronized same-source weight comparison provides the GNSS nontriviality witness, and the behavioral quotient yields \(\calC_{\rm obs}\) and the admissible domain \(\calE\). Under the stated local regularity conditions, the joint map then admits typed physical- and observer-state differentials.
\item \textbf{Response-derived phase transport.} On \(g_s\neq0\), the normalized response determines the response one-form; with the standard radian normalization and response-tracking compatibility fixed, the form determines the compatible \(\Uone\) connection and the phase-transport fundamental equation \cite{meng2026ptfe}. Path transport, endpoint algebra, relative alignment, and absolute phase closure are distinguished explicitly.
\item \textbf{Intrinsic geometry and information.} The paper establishes reference covariance, regular and singular domains, genuine-nuisance quotients, intrinsic Jacobians, and conditional Fisher increments for correlated observations and random weights.
\item \textbf{RTK cycle preservation and integer risk.} Circular phase, continuous lifts, and PLL cycle counts are separated, while residual mean and covariance are propagated jointly to float ambiguities, integer Voronoi regions, and false-fix risk.
\item \textbf{Physical tests.} The theory is paired with reproducible tests for synchronized weight changes, high-dynamic motion, timing mismatch, near-null response, conditional information, and RTK bias risk.
\end{enumerate}

\subsection{Boundary with Related Work and Scope of Novelty}
Prior work has established array-induced carrier-phase bias, bias estimation, phase-stable processing, and joint spatial filtering and RTK under interference \cite{church2009,obrien2008,li2023,wang2024,bamberg2023}. We therefore claim no novelty for these ingredients, projective coordinates, Fisher-information methods, or RTK integer theory considered separately. The contribution lies in their typed composition: the GNSS witness and behavioral quotient determine the joint observation interface, from which reference covariance, response-derived transport, intrinsic information, and RTK preservation conditions are derived under explicit domain assumptions. E-optimal multiweight design, closed-loop path planning, and recursive safety control remain outside the present scope.

\begin{table}[htbp]
\centering
\small
\caption{Boundary between related research directions and SN-ASMO}
\label{tab:boundary}
\begin{tabularx}{\textwidth}{>{\raggedright\arraybackslash}p{0.20\textwidth}*{4}{>{\raggedright\arraybackslash}X}}
\toprule
Research direction & Array bias and compensation & Joint observation formation & Genuine-nuisance quotient & RTK integer risk\\
\midrule
Array-bias modeling and calibration & Primary focus & Local treatment & Usually not central & Local impact analysis\\
Phase-stable or low-distortion spatial processing & Primary focus & Partial treatment & Local constraints & Partial treatment\\
Array-aided RTK/ARTK & Partial treatment & Partial treatment & Not a unified object & Primary focus\\
SN-ASMO (this paper) & Included in the joint model & Explicit joint map & Explicit quotient under a stated action & Parallel bias--covariance propagation\\
\bottomrule
\end{tabularx}
\end{table}

\section{Results: Dual-Axis State and Joint Observation Model}
\subsection{Physical-State Manifold}
\begin{definition}[Physical-state manifold]
Let \(\calX\) be the physical-state manifold of the satellite-navigation array system, with a particular state denoted by \(\bmxi\in\calX\). In a local coordinate chart, it may be written as
\begin{equation}
\bmxi=(\bm r,\bm v,\bm C_a^n,\bmxi_{\rm sat},\bmxi_{\rm env}),
\qquad \bm C_a^n\in\SOthree,
\label{eq:physical-state}
\end{equation}
where \(\bm r\) and \(\bm v\) are the position and velocity of the platform's physical reference point; \(\bm C_a^n\) transforms vectors from array coordinates to navigation coordinates; \(\bmxi_{\rm sat}\) collects satellite-orbit and clock states; and \(\bmxi_{\rm env}\) collects the atmosphere, multipath, and the objective interference field.
\end{definition}

The actual platform attitude belongs to the physical state; the array-body geometry, channel calibration, weights, and processing rules belong to the observer state. For satellite \(s\), the line-of-sight unit vector in navigation coordinates and the direction-of-arrival unit vector in array coordinates are
\begin{equation}
\bmu_s^n=\bmu_s^n(\bmxi),
\qquad
\bmu_s^a=\bm C_n^a\bmu_s^n,
\qquad
\bm C_n^a=(\bm C_a^n)\trans,
\qquad
\norm{\bmu_s^n}=\norm{\bmu_s^a}=1.
\label{eq:los-frames}
\end{equation}
The direction of arrival and attitude are not two independent sets of physical degrees of freedom; \(\bmu_s^a\) already contains the attitude action.

The same physical state enters the observation through two channels with distinct physical roles:
\begin{equation}
\bmxi\longmapsto \alpha_s^{\nav}(\bmxi),
\qquad
(\bmxi,\bmeta)\longmapsto g_s(\bmxi;\bmeta),
\label{eq:two-channels}
\end{equation}
where
\begin{equation}
\alpha_s^{\nav}(\bmxi)
=A_s(\bmxi)\ee^{\jj\phi_s^{\nav}(O;\bmxi)}
\label{eq:nav-factor}
\end{equation}
carries the amplitude and phase of normal navigation propagation, and \(g_s\) is the composite array response under the given observer state. The noiseless joint single-satellite model is
\begin{equation}
 z_s^{(0)}(\bmxi;\bmeta)
=\alpha_s^{\nav}(\bmxi)g_s(\bmxi;\bmeta).
\label{eq:single-joint-model}
\end{equation}
Its complete local differential is
\begin{equation}
\dd z_s^{(0)}
=\underbrace{\left[
g_sD_{\bmxi}\alpha_s^{\nav}
+\alpha_s^{\nav}D_{\bmxi}g_s
\right]\dd\bmxi}_{\text{physical-axis variation}}
+\underbrace{\alpha_s^{\nav}D_{\bmeta}g_s\,\dd\bmeta}_{\text{observer-axis variation}}.
\label{eq:complete-dual-axis-differential}
\end{equation}
This equation is the local first-order counterpart of the finite GNSS witness. The physical axis affects both ordinary propagation and the array response through direction of arrival, platform attitude, and environment; only the \(D_{\bmeta}g_s\) term is observer-state-induced. The two input axes must therefore not be confused with the two physical pathways inside the first bracket.

\subsection{Observer-State Space}
\begin{definition}[Observer state]
Let \(\calC_{\rm obs}\) be the behavior-minimal observer-state space established by the complete behavioral quotient described in \Cref{eq:dual-axis-descent}, and let \(\bmeta\in\calC_{\rm obs}\) denote one state. In local functional coordinates,
\begin{equation}
\bmeta=(\bmeta_{\rm arr},\bmeta_{\rm ch},\bmeta_{\rm proc}),
\label{eq:observer-state}
\end{equation}
where \(\bmeta_{\rm arr}\) includes the actual array-body geometry, element responses, polarization state, and mutual coupling; \(\bmeta_{\rm ch}\) contains the actual channel amplitude, phase, and delay states; and \(\bmeta_{\rm proc}\) includes weights, spatial or space--time filtering structure, operating frequency, coherent reference, code/carrier NCO settings, integration window, sample timing, and data-bit wipeoff rules. Calibration or model estimates of these quantities are written with hats and are not identified with the true state.
\end{definition}

This division does not mechanically split the physical hardware into two devices called ``physics'' and ``observer.'' It distinguishes theoretical identities: the physical state states what condition the objective system is in, whereas the observer state states through what structure and processing rules that condition is converted into a measurement.

The scalar factorization in \Cref{eq:single-joint-model} uses one fixed common coordinate for the NCO and correlator. In this coordinate, \(\alpha_s^{\nav}(\bmxi)\) is the objective pre-NCO propagation factor and \(g_s=\bmw\herm\bmc_s^{(O)}\) is the array response. If the NCO or correlator changes, its receiver phase must enter the generalized response part of \(\calH\); it may not be reassigned silently to \(\alpha_s^{\nav}\).

\subsection{Time-Multiplexed Single Weights and Simultaneous Multiple Weights}
With one weight vector \(\bmw[n]\) at epoch \(n\), the composite response of satellite \(s\) is
\begin{equation}
 g_s[n]=\bmw[n]\herm\bmc_s^{(O)}[n].
\label{eq:single-weight-epoch}
\end{equation}
If adjacent epochs use different weights, their observer states are different; in a dynamic scene, one generally also has \(\bmxi[n]\neq\bmxi[n-1]\). An inter-epoch difference therefore contains both physical-state evolution and observer variation and cannot be interpreted entirely as geometric Doppler without a model.

If \(K\) digital weight vectors are applied synchronously to the same element samples, define
\begin{equation}
\bm W[n]=[\bmw_1[n],\ldots,\bmw_K[n]]\in\C^{M\times K},
\label{eq:multiweight-matrix}
\end{equation}
the joint response and observation vectors as
\begin{equation}
\bmg_s[n]=\bm W[n]\herm\bmc_s^{(O)}[n]
=[g_{s,1}[n],\ldots,g_{s,K}[n]]\trans,
\label{eq:multiweight-response}
\end{equation}
\begin{equation}
\bmz_s[n]=\alpha_s^{\nav}[n]\bmg_s[n]+\bmn_s[n].
\label{eq:multiweight-observation}
\end{equation}
These branches share the same physical signal, clock, NCOs, raw element samples, and part of the noise sources. They are joint statistical branches rather than \(K\) independent physical observations. The architecture studied here always retains one principal working output for ordinary acquisition, tracking, and navigation. Auxiliary weights support phase transport, state identification, weight evaluation, and inverse-transport estimation; they do not create separate physical integer ambiguities.

\subsection{Array Spatial Manifold, Physical Reference Point, and Internal Reference Element}
Two reference objects must be distinguished. The physical reference point \(O\), used to define geometric range and normal propagation phase, belongs to the propagation model. The internal reference element \(\mu\), selected only to represent relative element responses, is a local coordinate choice for the array response.

Suppose the array contains \(M\) elements and \(\calD\subseteq S^2\) is the direction domain. Given an array-observer substate \(\bmeta_A\) and physical reference point \(O\), define the complete element-response map
\begin{equation}
\bmc^{(O)}(\cdot;\bmeta_A):\calD\to\C^M,
\qquad
\bmu^a\mapsto\bmc^{(O)}(\bmu^a;\bmeta_A),
\label{eq:array-map}
\end{equation}
whose response image
\begin{equation}
\calM_A^{(O)}(\bmeta_A)
=\{\bmc^{(O)}(\bmu^a;\bmeta_A):\bmu^a\in\calD\}
\subseteq\C^M
\label{eq:array-manifold}
\end{equation}
is denoted by \(\calM_A^{(O)}(\bmeta_A)\). On a constant-rank regular domain this image carries the corresponding immersed array-spatial-manifold structure; before that condition is established it is only a response image and may contain folds, self-intersections, or strata. The response point associated with satellite \(s\) is
\begin{equation}
\bmc_s^{(O)}(\bmxi;\bmeta_A)
=\bmc^{(O)}(\bmu_s^a(\bmxi);\bmeta_A).
\label{eq:manifold-point}
\end{equation}

If the response of element \(\mu\) satisfies \(c_{\mu,s}^{(O)}\neq0\), define the local reference-element coordinate
\begin{equation}
\bma_s^{[\mu]}
=\frac{\bmc_s^{(O)}}{c_{\mu,s}^{(O)}},
\qquad
a_{\mu,s}^{[\mu]}=1,
\label{eq:internal-chart}
\end{equation}
and set
\begin{equation}
\beta_{\mu,s}^{(O)}=c_{\mu,s}^{(O)},
\qquad
\bmc_s^{(O)}=\beta_{\mu,s}^{(O)}\bma_s^{[\mu]}.
\label{eq:chart-factor}
\end{equation}
Any nonzero element may serve as the internal reference element; changing \(\mu\) merely changes the local chart of the same complete response.

Under ideal narrowband far-field conditions, if elements \(m\) and \(\mu\) are at positions \(\bm b_m^a\) and \(\bm b_\mu^a\) in array coordinates, then
\begin{equation}
 a_{m,s}^{[\mu]}
=\Gamma_{m\mu,s}
\exp\!\left[-\jj\kappa(\bm b_m^a-\bm b_\mu^a)\trans\bmu_s^a\right],
\qquad \kappa=\frac{2\pi}{\lambda},
\label{eq:ideal-relative-response}
\end{equation}
where \(\Gamma_{m\mu,s}\) subsumes element patterns, polarization, mutual coupling, and calibrated channel response. Common-weight combining gives
\begin{equation}
 g_s=\bmw\herm\bmc_s^{(O)}
=\beta_{\mu,s}^{(O)}\bmw\herm\bma_s^{[\mu]}.
\label{eq:scalar-response-chart}
\end{equation}
The quantity \(\Arg(\bmw\herm\bma_s^{[\mu]})\) is the nonlinear argument of a complex superposition; in general it cannot be decomposed into two independent global scalars called a ``weight phase'' and an ``array phase.''

\begin{principle}[Non-physicality of reference selection]
An internal reference element is used only to establish relative element-response coordinates. It is not a physical channel retained separately in the PI composite output, nor is it selected automatically by an RTK differencing equation. A legitimate chart change may only redistribute a common complex factor and a normalized response; the complete element response, composite response, restored navigation phase, and RTK integer structure must not depend on the index of the internal reference element.
\label{prin:reference-nonentity}
\end{principle}

\subsection{Joint Total Space, Local Trivialization, and Legitimate Partial Derivatives}
The two state types are \(\calX\) and \(\calC_{\rm obs}\), but executability, synchronization, calibration validity, and tracking lock can constrain their joint admissibility. Consistently with \Cref{eq:dual-axis-descent}, define the joint total space
\begin{equation}
\calE\subseteq\calX\times\calC_{\rm obs},
\qquad
\pi:\calE\to\calX,
\quad
\pi(\bmxi,\bmeta)=\bmxi.
\label{eq:joint-total-space}
\end{equation}
The inclusion need not be equality: dual-axis typing does not imply that every pair \((\bmxi,\bmeta)\) is implementable.

\begin{assumption}[Local trivialization]
The admissible domain \(\calE\) is a smooth manifold and the map \(\pi:\calE\to\calX\) is a smooth surjective submersion. For every point of interest \(e_0\in\calE\), there exist a neighborhood \(U\subseteq\calX\), a typical fiber \(F\), and a local trivialization
\begin{equation}
\mathscr{T}_U:\pi^{-1}(U)\longrightarrow U\times F.
\label{eq:local-trivialization}
\end{equation}
Every occurrence of \(D_{\bmxi}\calH\) and \(D_{\bmeta}\calH\) in this paper refers to partial derivatives in a selected local trivialization. Under a change of trivialization, they transform covariantly according to the chain rule.
\label{ass:local-trivialization}
\end{assumption}

At \(e\in\calE\), the vertical subspace is
\begin{equation}
\mathsf{Ver}_e=\Ker(\dd\pi_e).
\label{eq:vertical-space}
\end{equation}
After selecting an Ehresmann horizontal distribution \cite{kobayashi1963},
\begin{equation}
T_e\calE=\mathsf{Hor}_e\oplus\mathsf{Ver}_e.
\label{eq:horizontal-vertical-split}
\end{equation}
For \(\dot e=\dot e^{\rm hor}+\dot e^{\rm ver}\), the differential of the joint observation is intrinsically
\begin{equation}
\dd\calH_e[\dot e]
=\dd\calH_e[\dot e^{\rm hor}]
+\dd\calH_e[\dot e^{\rm ver}].
\label{eq:intrinsic-total-differential}
\end{equation}
Only in local product coordinates does \Cref{eq:intrinsic-total-differential} take the familiar form
\begin{equation}
\dd\bmz
=D_{\bmxi}\calH\,\dd\bmxi
+D_{\bmeta}\calH\,\dd\bmeta.
\label{eq:local-total-differential}
\end{equation}
Thus, the first-order physical/observer decomposition depends on an explicit local comparison rule; it is not a canonical global direct-sum decomposition on an arbitrary total space.

For a fixed local observer coordinate \(\bmeta_0\in F\), define the local section
\begin{equation}
\sigma_{\bmeta_0}(\bmxi)=\mathscr{T}_U^{-1}(\bmxi,\bmeta_0),
\end{equation}
and the restricted mapping
\begin{equation}
\calH_{\bmeta_0}=\calH\circ\sigma_{\bmeta_0}:U\to\calZ.
\label{eq:restricted-local-map}
\end{equation}
The statement ``vary the physical state while holding the observer state fixed'' is defined by this local section and is not misrepresented as an unconditional global comparison between different fibers.

\subsection{Exact Correlation Integral and Conditions for a Single Complex-Scalar Model}
Before correlation, the element-level complex baseband signal may be written as
\begin{equation}
\bm x(t)
=\sum_{s\in\calS}
A_s(t)\ee^{\jj\phi_s^{\nav}(O,t)}
\bmc_s^{(O)}(t)d_s(t-\tau_s)
+\bm j(t)+\bm n(t),
\label{eq:array-signal}
\end{equation}
where \(d_s\) denotes known code and data modulation, \(\bm j\) is interference, and multipath and unmodeled satellite terms may be included in the effective residual.

For the \(n\)th coherent integration interval \(\calI_n\) of satellite \(s\), let \(\bmw(t)\) and \(q_{s,n}(t)\) denote the weight vector and correlation kernel. The exact correlation output is
\begin{equation}
 z_{s,n}
=\int_{\calI_n}
q_{s,n}^{*}(t)\bmw(t)\herm\bm x(t)\,\dd t.
\label{eq:exact-correlator}
\end{equation}
After data-bit or pilot wipeoff, denote the residual navigation phase of the desired satellite by \(\delta\phi_s^{\nav}(t)\), and define
\begin{equation}
F_s(t)
=A_s(t)\ee^{\jj\delta\phi_s^{\nav}(t)}g_s(t),
\qquad
g_s(t)=\bmw(t)\herm\bmc_s^{(O)}(t).
\label{eq:window-composite}
\end{equation}
If, about the window center \(t_n\), the correlation kernel satisfies
\begin{equation}
 m_0=\int_{\calI_n}q_{s,n}^{*}(t)\,\dd t,
\qquad
m_1=\int_{\calI_n}(t-t_n)q_{s,n}^{*}(t)\,\dd t=0,
\label{eq:window-moments}
\end{equation}
then a Taylor expansion gives
\begin{equation}
 z_{s,n}=m_0F_s(t_n)+n_{s,n}^{\rm eff}+r_{s,n}^{\rm frz},
\label{eq:frozen-scalar-model}
\end{equation}
with deterministic remainder bound
\begin{equation}
\abs{r_{s,n}^{\rm frz}}
\le
\frac12
\left(\int_{\calI_n}\abs{q_{s,n}(t)}\abs{t-t_n}^2\,\dd t\right)
\sup_{t\in\calI_n}\abs{\ddot F_s(t)}.
\label{eq:freeze-remainder-bound}
\end{equation}
Thus, the commonly used single complex-scalar model is not automatically valid under arbitrary rapid weight variation; it is an approximation to \Cref{eq:exact-correlator} under a coherent-window freezing condition. A convenient first-order dimensionless engineering audit index is
\begin{equation}
\epsilon_{\rm frz}
=T_{\rm coh}
\sup_{t\in\calI_n}
\left(
\abs{\frac{\dot A_s}{A_s}}
+\abs{\dot{\delta\phi}_s^{\nav}}
+\abs{\frac{\dot g_s}{g_s}}
\right)
\ll1.
\label{eq:freeze-index}
\end{equation}

\begin{assumption}[Front-end linearity and the boundary of recoverable information]
At the interference levels under study, active antenna components, RF/IF chains, AGC, ADC, and digital array channels do not undergo unmodeled saturation, clipping, or irreversible quantization distortion. If information has already been lost through nonlinearity before weighting, a subsequent \(\Uone\) inverse action cannot recover it.
\label{ass:front-end-linearity}
\end{assumption}

\begin{assumption}[Narrowband or suitably bandwidth-resolved processing]
The single-carrier conclusions apply to narrowband channels representable by \(\bmc_s^{(O)}\). Under wideband beam squint, group delay, code-correlation-shape distortion, near-field spherical waves, or distributed scattering, frequency and delay must enter \(\calH\); the carrier-phase results in this paper do not replace broadband propagation analysis.
\label{ass:narrowband}
\end{assumption}

\subsection{Non-Identifiability and Closure of the Composite Additional Phase}
\begin{theorem}[Non-identifiability of the absolute composite phase]
In the noiseless same-source model
\begin{equation}
\bmz_s=\alpha_s^{\nav}\bmg_s,
\label{eq:bilinear-factor}
\end{equation}
let \(\alpha_s^{\nav}\in\C^\times\) and \(\bmg_s\in\C^K\setminus\{\bm0\}\). Suppose the admissible factorization is closed under the \(\C^\times\) action below and no model, calibration, or informative absolute prior restricts that action. Then, for every admissible \(a\in\C^\times\), the transformation
\begin{equation}
(\alpha_s^{\nav},\bmg_s)
\longmapsto
(\alpha_s^{\nav}a^{-1},a\bmg_s)
\label{eq:factor-ambiguity}
\end{equation}
leaves \(\bmz_s\) unchanged. Under this common-complex-scale auxiliary model, the data identify only the projective class \([\bmg_s]\in\CP^{K-1}\), not the common complex scale of \(\bmg_s\). A model, prior, or calibration that breaks the action lies outside this non-identifiability statement and may provide absolute closure.
\label{thm:absolute-unidentifiable}
\end{theorem}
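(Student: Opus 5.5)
The plan has two parts: first verify that the action leaves the data unchanged, then show that the projective class \([\bmg_s]\) is exactly what survives the quotient.

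\textbf{Invariance.} Fix \(a\in\C^\times\). Bilinearity gives
\begin{equation*}
(\alpha_s^{\nav}a^{-1})(a\bmg_s)=\alpha_s^{\nav}\bmg_s=\bmz_s .
\end{equation*}
The image pair stays in the admissible set \(\C^\times\times(\C^K\setminus\{\bm0\})\), because \(a\neq0\) and \(\bmg_s\neq\bm0\). Closure of the admissible factorization under the action is an explicit hypothesis, so the transformed pair is again admissible. Hence the map \((\alpha,\bmg)\mapsto\alpha\bmg\) is constant on \(\C^\times\)-orbits.

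\textbf{Fibres are exactly orbits.} Next I will show that the fibres of this map are precisely the orbits, so nothing finer than the orbit is identified. Suppose \(\alpha\bmg=\alpha'\bmg'\) with both pairs admissible. Choose an index \(k\) with \(g_k\neq0\). Then \(g'_k=(\alpha/\alpha')g_k\neq0\). Set \(a=\alpha/\alpha'\). Componentwise this gives \(\bmg'=a\bmg\) and \(\alpha'=\alpha a^{-1}\).

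\textbf{Identifying the quotient.} It follows that the observation determines the orbit and nothing more. In particular, any transformation \(a=r\ee^{\jj\theta}\) with arbitrary \(r>0\) and \(\theta\) is consistent with the data, so the common complex scale of \(\bmg_s\) cannot be recovered. Projecting the orbit to its \(\bmg\)-component gives \(\{a\bmg_s:a\in\C^\times\}\), which is precisely the point \([\bmg_s]\in\CP^{K-1}\). The reverse direction also holds: \([\bmg_s]\) is a function of the data, since \([\bmz_s]=[\bmg_s]\) because \(\alpha_s^{\nav}\neq0\). So the data identify exactly the projective class and no finer \(\bmg\)-information.

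\textbf{Scope and main obstacle.} The delicate point is the scope clause, not the algebra. The conclusion must be stated relative to the action that the model leaves unrestricted. If a calibration, model constraint, or prior fixes some component of \(a\) (for example \(\abs{a}\) or \(\Arg a\)), the fibre intersected with the constraint set shrinks. The orbit argument then no longer applies to that component. I would make this explicit by saying that the theorem's hypothesis is the condition that the constraint set be a union of full \(\C^\times\)-orbits. Conversely, a constraint set that meets each orbit once supplies a section, and therefore closure. This justifies the final sentence of the statement without further computation.
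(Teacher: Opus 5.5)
Your proposal is correct and takes the same approach as the paper: the product \(\alpha_s^{\nav}\bmg_s\) is invariant under the \(\C^\times\) action, and the orbit of that action is exactly the unidentifiable freedom. The paper only asserts that the orbit is \emph{precisely} this freedom, whereas you prove it by showing that the fibres of \((\alpha,\bmg)\mapsto\alpha\bmg\) are the orbits and that \([\bmz_s]=[\bmg_s]\), which fills in that step rather than changing the route.
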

\begin{proof}
The transformation in \Cref{eq:factor-ambiguity} leaves the bilinear product unchanged; its \(\C^{\times}\) orbit is precisely the unidentifiable freedom.
\end{proof}

Knowledge of only the normalized response \(\bma_s^{[\mu]}\) is likewise insufficient for absolute composite-phase restoration, because \(\bmc_s^{(O)}=\beta_{\mu,s}^{(O)}\bma_s^{[\mu]}\), and the phase of \(\beta_{\mu,s}^{(O)}\) contains the common array phase relative to the physical reference point.

Absolute closure must be supplied by at least one of the following sources:
\begin{enumerate}
\item \textbf{Model closure:} compute \(g_s\) from calibrated \(\bmc_s^{(O)}\) and the current weights;
\item \textbf{Calibrated-branch closure:} know the composite phase of a same-source anchor branch relative to \(O\), and transport it to the working branch through a relative measurement;
\item \textbf{Joint-state closure:} treat the absolute composite phase or its continuous lift as a dynamical state and estimate it jointly from a model and an informative absolute initial prior or anchor, with relative observations propagating subsequent evolution. Relative data and smoothness alone leave an unknown constant phase.
\end{enumerate}
None of these approaches requires an individual raw reference element to track the carrier independently under strong interference.

\subsection{Basic Assumptions and the Minimal Theoretical Skeleton}
In addition to the preceding assumptions, the following regularity conditions are used throughout.

\begin{assumption}[Reference coordinates and nonzero response]
Using the chart associated with internal reference element \(\mu\) requires \(c_{\mu,s}^{(O)}\neq0\); defining the composite phase requires \(g_s\neq0\). The former is a chart condition, whereas the latter is a physical nondegeneracy condition of the output.
\label{ass:chart-and-output}
\end{assumption}

\begin{assumption}[Smoothness]
On the regular domain, \(\bmc^{(O)}\), the weight trajectory, and the joint observation map are at least continuously differentiable. Whenever curvature, Hessians, or freezing remainders are used, the required second derivatives exist.
\label{ass:smoothness}
\end{assumption}

\begin{assumption}[Synchronized same-source branches]
Whenever ``direct same-source cancellation'' is claimed, the parallel auxiliary branches for a given satellite share the same element-sample identity, clock, code/carrier NCOs, data-bit wipeoff, and coherent integration operator. Independent local oscillators, unknown branch phases, or uncalibratable delays must enlarge both the nuisance group and the statistical model.
\label{ass:same-source}
\end{assumption}

\begin{assumption}[Conditional weight statistics]
The local proper-complex-Gaussian noise, covariance, and Fisher-information formulas are conditioned on a given weight set, or require the data used to estimate the weights to be independent of the current inference block. If adaptive weights are estimated from the same noisy, interfered data block, a joint model \(p(\bmz,\widehat{\bm W}\mid\bmxi,\bmeta)\) must be used, or the weight errors and their cross-covariances must be propagated explicitly.
\label{ass:conditional-weight}
\end{assumption}

\begin{definition}[SN-ASMO observation-geometric system]
A regular local SN-ASMO system is denoted by
\begin{equation}
\calG_{\rm SN}=(\calX,\calE,\pi,\calZ,\calH),
\label{eq:sn-system}
\end{equation}
and is supplemented, according to the task, by a physical reference point \(O\), an internal-reference atlas, an array-response mapping, the genuine nuisance group \(G_{\rm nui}\), a cross-state comparison rule, singular sets, and a statistical model.
\end{definition}

\begin{figure}[htbp]
\centering
\resizebox{0.98\textwidth}{!}{%
\begin{tikzpicture}[node distance=1.75cm and 1.45cm,>=Stealth]
\node[draw,rounded corners,fill=white,minimum width=3.4cm,minimum height=1.05cm,align=center] (e) {Admissible joint domain $\calE$\\state $(\bmxi,\bmeta)$};
\node[draw,rounded corners,fill=white,minimum width=3.0cm,minimum height=0.95cm,left=2.0cm of e,yshift=0.72cm,align=center] (x) {Physical state $\calX$\\$\bmxi$};
\node[draw,rounded corners,fill=white,minimum width=3.0cm,minimum height=0.95cm,left=2.0cm of e,yshift=-0.72cm,align=center] (s) {Observer state $\calC_{\rm obs}$\\$\bmeta$};
\node[draw,rounded corners,fill=white,minimum width=3.1cm,minimum height=1.05cm,right=2.0cm of e,align=center] (z) {Observation world $\calZ$\\$\bmz=\calH(\bmxi;\bmeta)$};
\draw[->,thick] (x) -- (e);
\draw[->,thick] (s) -- (e);
\draw[->,thick] (e) -- node[above] {$\calH$} (z);
\node[draw,ellipse,below=1.45cm of e,xshift=-2.85cm] (a) {Local chart $\bma_s^{[\mu]}$};
\node[draw,ellipse,right=0.75cm of a] (c) {Complete response $\bmc_s^{(O)}$};
\node[draw,ellipse,right=0.75cm of c] (g) {Composite response $g_s=\bmw\herm\bmc_s^{(O)}$};
\draw[->] (a) -- node[above] {$\times\,\beta_{\mu,s}^{(O)}$} (c);
\draw[->] (c) -- (g);
\draw[dashed,->] (e) -- (a);
\draw[dashed,->] (g) -- (z);
\end{tikzpicture}%
}
\caption{Dual-axis SN-ASMO formation chain. Physical and observer states jointly enter the admissible domain; the observation world is the output. The internal reference element changes only local coordinates, not the complete response, composite response, or observation.}
\label{fig:object-chain}
\end{figure}
\section[Complete Observation, Reference Covariance, and Composite Phase]{Complete Observation, Reference Covariance,\\and Preservation of the Composite Phase}
\subsection{Complete Single-Satellite Signal and Composite Amplitude--Phase Decomposition}
With respect to the physical reference point \(O\), the narrowband coherent signal of satellite \(s\) at the element level is
\begin{equation}
\bm x_s(t)
=A_s(t)\ee^{\jj\phi_s^{\nav}(O,t)}\bmc_s^{(O)}(t)+\bmn_s(t).
\label{eq:complete-array-signal}
\end{equation}
After common-weight combining, and neglecting the coherent-integration freezing remainder, one obtains
\begin{equation}
 z_s(t)
=A_s(t)\ee^{\jj\phi_s^{\nav}(O,t)}g_s(t)+n_s^{\rm out}(t),
\qquad
g_s(t)=\bmw(t)\herm\bmc_s^{(O)}(t).
\label{eq:complete-scalar-output}
\end{equation}
The internal reference element provides only the factorization
\begin{equation}
 g_s=\beta_{\mu,s}^{(O)}\gamma_{\mu,s},
\qquad
\gamma_{\mu,s}=\bmw\herm\bma_s^{[\mu]},
\label{eq:local-response-factorization}
\end{equation}
and does not alter \(g_s\) itself.

On the regular domain where \(g_s\neq0\), its unique polar decomposition is
\begin{equation}
 g_s=\varrho_s^A h_s^A,
\qquad
\varrho_s^A=\abs{g_s}>0,
\qquad
 h_s^A=\frac{g_s}{\abs{g_s}}=\ee^{\jj\psi_s^A}\in\Uone.
\label{eq:polar-composite-response}
\end{equation}
For every \(z\in\C^\times\), define its phase element by
\begin{equation}
\ph(z):=\frac{z}{\abs{z}}\in\Uone.
\label{eq:phase-element-definition}
\end{equation}
Here, \(\psi_s^A\) is the principal-value phase. Along a continuous trajectory that does not cross a zero, its real-valued continuous lift is denoted by \(\widetilde\psi_s^A\). The continuous phase of the noiseless output satisfies
\begin{equation}
\widetilde\phi_s^{\obs}
=\widetilde\phi_s^{\nav}(O)+\widetilde\psi_s^A.
\label{eq:observed-phase-sum}
\end{equation}
This definition reduces all element responses and weight superposition to one final composite additional phase and does not misinterpret intermediate coordinate factors as multiple independently compensable physical phases.

\subsection{Exact and Estimated Phases Must Be Treated Separately}
\begin{theorem}[Inverse action of the exact composite phase]
Suppose \(g_s\neq0\) and the phase element \(h_s^A\) used for inverse transport equals the true composite-response phase. Define
\begin{equation}
\widetilde z_s=(h_s^A)^{-1}z_s.
\label{eq:exact-inverse-action}
\end{equation}
Then
\begin{equation}
\widetilde z_s
=A_s\varrho_s^A\ee^{\jj\phi_s^{\nav}(O)}
+(h_s^A)^{-1}n_s^{\rm out}.
\label{eq:exact-restored-output}
\end{equation}
Thus, the final composite additional phase in the noiseless component is removed exactly, while the normal navigation propagation phase is preserved. For every noise sample, the unit-modulus rotation preserves its absolute value. In the vector case, a deterministic diagonal unitary transformation preserves the covariance eigenvalues, trace, and rank.
\label{thm:exact-inverse}
\end{theorem}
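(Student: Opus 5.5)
The argument is a direct computation starting from the scalar output model \Cref{eq:complete-scalar-output}, followed by elementary facts about unitary rotations. First, by \Cref{ass:chart-and-output}, we have \(g_s\neq0\), so the polar decomposition \Cref{eq:polar-composite-response} is well defined. It gives \(g_s=\varrho_s^A h_s^A\) with \(\varrho_s^A=\abs{g_s}>0\) and \(h_s^A\in\Uone\). In particular \((h_s^A)^{-1}=(h_s^A)^{*}\) exists and has unit modulus.

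Next, multiply \(z_s=A_s\ee^{\jj\phi_s^{\nav}(O)}g_s+n_s^{\rm out}\) by \((h_s^A)^{-1}\) and use linearity. The noiseless term becomes \(A_s\ee^{\jj\phi_s^{\nav}(O)}\varrho_s^A (h_s^A)^{-1}h_s^A\), and since \((h_s^A)^{-1}h_s^A=1\) because \(\Uone\) is a group, this equals \(A_s\varrho_s^A\ee^{\jj\phi_s^{\nav}(O)}\). This is exactly \Cref{eq:exact-restored-output}. Because \(A_s\varrho_s^A>0\), the phase element of the restored noiseless term is \(\ph(\ee^{\jj\phi_s^{\nav}(O)})\). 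So the composite additional phase \(\psi_s^A\) is removed and the navigation phase is unchanged. For continuous lifts along a zero-free path, subtracting \(\widetilde\psi_s^A\) from \Cref{eq:observed-phase-sum} recovers \(\widetilde\phi_s^{\nav}(O)\), up to the choice of lift constant, which is fixed by the exactness hypothesis. The noise statement follows from \(\abs{(h_s^A)^{-1}n}=\abs{h_s^A}^{-1}\abs{n}=\abs{n}\), which holds for each realization.

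For the vector case, write the inverse action as \(\bm U=\diag(h_{s,1}^{-1},\ldots,h_{s,K}^{-1})\). This matrix is deterministic given the conditioning of \Cref{ass:conditional-weight}, and it is unitary. The transformed noise then has covariance \(\bm U\bm\Sigma\bm U\herm\). This is a unitary similarity, so the characteristic polynomial is unchanged. Hence the eigenvalues are preserved, and with them the trace, via \(\tr(\bm U\bm\Sigma\bm U\herm)=\tr(\bm\Sigma)\), and the rank. If the noise is proper, the pseudo-covariance transforms as \(\bm U\bm\Sigma'\bm U\trans\), which stays zero, so properness is preserved as well.

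I do not expect any step to be a real obstacle. The only point that needs care is the word ``deterministic.'' The hypothesis that \(h_s^A\) equals the \emph{true} phase matters here: it makes \(\bm U\) independent of the noise, so the covariance identity \(\cov(\bm U\bmn)=\bm U\cov(\bmn)\bm U\herm\) is legitimate. I will state explicitly that an estimated \(\widehat h_s^A\) correlated with \(\bmn\) lies outside the theorem, consistent with the separation announced in this subsection.
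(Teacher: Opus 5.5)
Your argument is correct and matches the paper's own proof: substitute the polar decomposition \(g_s=\varrho_s^Ah_s^A\) into \Cref{eq:complete-scalar-output} and cancel \((h_s^A)^{-1}h_s^A=1\); then use unit modulus for each scalar noise sample and unitary similarity for the vector covariance. One aside overreaches, though: an exact \(\Uone\) element carries no information about the \(2\pi k\) lift constant (see \Cref{eq:u1-branch-independence}), so the exactness hypothesis does not fix it, and lift consistency must be supplied separately as in \Cref{thm:integer-preservation}; since the theorem is a \(\Uone\)-level claim, you can simply drop that remark.
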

\begin{proof}
Substitute \(g_s=\varrho_s^Ah_s^A\) into \Cref{eq:complete-scalar-output}. The conclusion depends only on the complete composite response and not on its factorization through an internal reference element.
\end{proof}

\begin{theorem}[Residual propagation with an estimated composite phase]
If the phase element actually used is
\begin{equation}
\widehat h_s^A=h_s^A\ee^{\jj\varepsilon_{h,s}},
\label{eq:estimated-phase-element}
\end{equation}
then
\begin{equation}
(\widehat h_s^A)^{-1}z_s
=A_s\varrho_s^A
\ee^{\jj(\phi_s^{\nav}-\varepsilon_{h,s})}
+(\widehat h_s^A)^{-1}n_s^{\rm out}.
\label{eq:estimated-restoration}
\end{equation}
The signal retains the multiplicative phase error \(-\varepsilon_{h,s}\). If \(\widehat h_s^A\) is estimated from the same noisy data, then \(\varepsilon_{h,s}\) and \(n_s^{\rm out}\) are generally correlated, so the inverse-transport residual cannot be interpreted unconditionally as an independent unit-modulus rotation of the original noise.
\label{thm:estimated-inverse}
\end{theorem}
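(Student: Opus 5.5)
The plan is to repeat the substitution behind \Cref{thm:exact-inverse} while keeping track of the extra unit-modulus factor. Work on the regular domain where \(g_s\neq0\), so that the polar decomposition \Cref{eq:polar-composite-response} holds. Substituting \(g_s=\varrho_s^A h_s^A\) into \Cref{eq:complete-scalar-output} gives
\[
z_s=A_s\varrho_s^A h_s^A\ee^{\jj\phi_s^{\nav}(O)}+n_s^{\rm out}.
\]
Next multiply by \((\widehat h_s^A)^{-1}=(h_s^A)^{-1}\ee^{-\jj\varepsilon_{h,s}}\). This inverse exists because \(\widehat h_s^A\) is a unit-modulus element of \(\Uone\) by \Cref{eq:estimated-phase-element}. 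Since \(\Uone\) is abelian, \(h_s^A\) cancels against \((h_s^A)^{-1}\) in the noiseless term, which leaves \(A_s\varrho_s^A\ee^{\jj(\phi_s^{\nav}-\varepsilon_{h,s})}\). The noise term is simply \((\widehat h_s^A)^{-1}n_s^{\rm out}\). This establishes \Cref{eq:estimated-restoration} and shows that the multiplicative signal error is exactly \(\ee^{-\jj\varepsilon_{h,s}}\). Setting \(\varepsilon_{h,s}=0\) recovers \Cref{thm:exact-inverse}, which serves as a consistency check.

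The second claim concerns correlation, and this is the only step that is not pure algebra. I would argue by contrast with the exact case. When \(\widehat h_s^A\) is deterministic, or is conditioned on data independent of the current block in the sense of \Cref{ass:conditional-weight}, the residual is a fixed rotation of \(n_s^{\rm out}\), so it is proper Gaussian with the same variance. When \(\widehat h_s^A=\widehat h_s^A(z_s,\dots)\) is instead a measurable function of the same block, the error \(\varepsilon_{h,s}\) is itself a function of \(n_s^{\rm out}\).

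To make the dependence explicit, I would take a first-order linearization of the phase estimator about the noiseless point:
\[
\varepsilon_{h,s}\approx\Ima\!\left(\frac{n_s^{\rm out}}{A_s g_s\ee^{\jj\phi_s^{\nav}}}\right).
\]
This holds, for example, for the natural estimator \(\ph(z_s)\) combined with a known navigation phase. From it, \(\E[\varepsilon_{h,s}\,n_s^{\rm out}]\) comes out as a nonzero multiple of the noise variance. Hence \(\varepsilon_{h,s}\) and \(n_s^{\rm out}\) are correlated in general. In that case \((\widehat h_s^A)^{-1}n_s^{\rm out}\) is a data-dependent rotation, not an independent unit-modulus rotation, and its conditional law need not equal that of the original noise.

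I expect the main obstacle to be the wording "generally correlated." It is enough to exhibit one such estimator with nonzero cross-moment, which is what the example above does; no claim is needed for every estimator. The conclusion can then be stated as a failure of the independence hypothesis that \Cref{thm:exact-inverse} relied on.
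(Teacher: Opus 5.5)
Your proposal is correct, and its algebraic part is exactly the paper's proof: substitute \((\widehat h_s^A)^{-1}=\ee^{-\jj\varepsilon_{h,s}}(h_s^A)^{-1}\) and \(g_s=\varrho_s^Ah_s^A\) into \Cref{eq:complete-scalar-output}. The paper gives no argument for the ``generally correlated'' sentence, so your linearized example supplements it, and the example holds: with \(\mu_s=A_sg_s\ee^{\jj\phi_s^{\nav}}\) and proper noise of variance \(\sigma_s^2\), one gets \(\E[\varepsilon_{h,s}n_s^{\rm out}]\approx\jj\sigma_s^2/(2\overline{\mu_s})\neq0\); moreover, for that estimator \((\widehat h_s^A)^{-1}z_s=\abs{z_s}\ee^{\jj\phi_s^{\nav}}\) exactly, which shows that the rotated noise is coupled to \(\varepsilon_{h,s}\).
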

\begin{proof}
Substitute \((\widehat h_s^A)^{-1}=\ee^{-\jj\varepsilon_{h,s}}(h_s^A)^{-1}\) into \Cref{eq:complete-scalar-output} and use \(g_s=\varrho_s^Ah_s^A\).
\end{proof}

\Cref{thm:exact-inverse,thm:estimated-inverse} distinguish the exact physical limit from the conditions of an implementable estimate. Every invariance claim in the remainder of the paper states whether the corresponding group element is a known deterministic quantity or a random estimate derived from data.

\subsection{Internal-Reference-Element Chart Changes and Invariance of the Composite Response}
When changing from the chart associated with element \(\mu\) to that associated with element \(\nu\), assuming \(a_{\nu,s}^{[\mu]}\neq0\),
\begin{equation}
\bma_s^{[\nu]}
=\frac{\bma_s^{[\mu]}}{a_{\nu,s}^{[\mu]}},
\label{eq:chart-change-a}
\end{equation}
\begin{equation}
\beta_{\nu,s}^{(O)}
=\beta_{\mu,s}^{(O)}a_{\nu,s}^{[\mu]},
\qquad
\gamma_{\nu,s}
=\frac{\gamma_{\mu,s}}{a_{\nu,s}^{[\mu]}}.
\label{eq:chart-change-factors}
\end{equation}

\begin{theorem}[Covariance under an internal-reference-element change and invariance of the composite output]
Under \Cref{eq:chart-change-a,eq:chart-change-factors},
\begin{equation}
\beta_{\nu,s}^{(O)}\bma_s^{[\nu]}
=\beta_{\mu,s}^{(O)}\bma_s^{[\mu]}
=\bmc_s^{(O)},
\label{eq:full-response-invariant}
\end{equation}
\begin{equation}
\beta_{\nu,s}^{(O)}\gamma_{\nu,s}
=\beta_{\mu,s}^{(O)}\gamma_{\mu,s}
=g_s.
\label{eq:scalar-response-invariant}
\end{equation}
Consequently, the composite magnitude, composite additional phase, pre-discriminator restoration quantity, and RTK integer structure are independent of the internal-reference-element index.
\label{thm:chart-invariance}
\end{theorem}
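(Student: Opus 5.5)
The plan is direct substitution of the chart-change rules followed by a descent argument. For \Cref{eq:full-response-invariant}, I substitute \Cref{eq:chart-change-a} and the first relation of \Cref{eq:chart-change-factors}:
\[
\beta_{\nu,s}^{(O)}\bma_s^{[\nu]}=\beta_{\mu,s}^{(O)}a_{\nu,s}^{[\mu]}\,\frac{\bma_s^{[\mu]}}{a_{\nu,s}^{[\mu]}}=\beta_{\mu,s}^{(O)}\bma_s^{[\mu]}.
\]
The last expression equals \(\bmc_s^{(O)}\) by \Cref{eq:chart-factor}. The cancellation is legitimate because \(a_{\nu,s}^{[\mu]}\neq0\) is assumed.

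It is also worth checking that the transition rule is forced rather than merely posited. Since \(a_{\nu,s}^{[\mu]}=c_{\nu,s}^{(O)}/c_{\mu,s}^{(O)}\), the definition \Cref{eq:internal-chart} gives
\[
\bma_s^{[\nu]}=\bmc_s^{(O)}/c_{\nu,s}^{(O)}=\bma_s^{[\mu]}/a_{\nu,s}^{[\mu]},
\]
and \(\beta_{\nu,s}^{(O)}=c_{\nu,s}^{(O)}=\beta_{\mu,s}^{(O)}a_{\nu,s}^{[\mu]}\). So both charts are just factorizations of one vector.

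For \Cref{eq:scalar-response-invariant}, I would treat the product as a complex multiplication:
\[
\beta_{\nu,s}^{(O)}\gamma_{\nu,s}=\beta_{\mu,s}^{(O)}a_{\nu,s}^{[\mu]}\cdot\gamma_{\mu,s}/a_{\nu,s}^{[\mu]}=\beta_{\mu,s}^{(O)}\gamma_{\mu,s}.
\]
Alternatively, apply the linear functional \(\bmw\herm(\cdot)\) to \Cref{eq:full-response-invariant}. By \Cref{eq:local-response-factorization} this gives \(g_s=\bmw\herm\bmc_s^{(O)}\), which involves no chart index. I prefer this second route because it makes clear that the invariance is inherited from the complete response.

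For the consequences, I would argue that each listed quantity is a function of \(g_s\) alone, or of \(g_s\) together with chart-free data.
\begin{enumerate}
\item The composite magnitude is \(\varrho_s^A=\abs{g_s}\).
\item The composite additional phase is \(h_s^A=\ph(g_s)\), defined on \(g_s\neq0\) by \Cref{eq:polar-composite-response}. Its continuous lift \(\widetilde\psi_s^A\) along a zero-free path is determined by the trajectory of \(g_s\) up to the initial branch, and that branch is itself chart-free.
\item The pre-discriminator restoration quantity \((h_s^A)^{-1}z_s\) in \Cref{thm:exact-inverse} depends only on \(h_s^A\) and \(z_s\). The observation \(z_s=\alpha_s^{\nav}g_s+n_s^{\rm out}\) does not involve \(\mu\).
\end{enumerate}

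The RTK integer structure is the step I expect to be the main obstacle, because RTK has not yet been formally defined at this point. I would argue as follows. The restored phase \(\widetilde\phi_s^{\nav}(O)\) in \Cref{eq:observed-phase-sum} and \Cref{eq:exact-restored-output} is chart-free. Any differencing and integer-ambiguity model built from restored phases and PLL cycle counts is therefore a function of chart-invariant inputs. In particular, the integer ambiguity and its Voronoi region inherit the invariance, in line with \Cref{prin:reference-nonentity}.

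The one caveat to state explicitly is a domain restriction. The chart change is defined only where both \(c_{\mu,s}^{(O)}\) and \(c_{\nu,s}^{(O)}\) are nonzero, and the phase statements need \(g_s\neq0\), as required by \Cref{ass:chart-and-output}. Invariance is therefore claimed on the overlap of the two charts intersected with the regular domain.
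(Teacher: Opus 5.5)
Your proposal is correct and essentially matches the paper's proof. The paper argues by direct multiplication of the chart-change factors, adding only the remark that the transition factors satisfy the cocycle condition, so repeated chart changes still represent the same complete response. Your observation that the transition rules are forced by $\bma_s^{[\nu]}=\bmc_s^{(O)}/c_{\nu,s}^{(O)}$ already covers that remark: every valid chart reproduces $\bmc_s^{(O)}$, and hence $g_s=\bmw\herm\bmc_s^{(O)}$, directly, independent of any sequence of chart changes.
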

\begin{proof}
The result follows by direct multiplication. The transition functions among any three valid charts satisfy the cocycle condition, so repeated chart changes still represent the same complete response.
\end{proof}

Taking the complex-logarithmic differential of \(g_s=\beta_{\mu,s}^{(O)}\gamma_{\mu,s}\) yields
\begin{equation}
\frac{\dd g_s}{g_s}
=\frac{\dd\beta_{\mu,s}^{(O)}}{\beta_{\mu,s}^{(O)}}
+\frac{\dd\gamma_{\mu,s}}{\gamma_{\mu,s}}.
\label{eq:log-factorization}
\end{equation}
The two terms on the right change separately under a chart transition, but their sum does not. Thus, the composite amplitude--phase dynamics are intrinsic, whereas the individual factor terms are not.

\subsection{Physical-Reference-Point Covariance and the Antenna Phase-Center Interface}
If the physical reference point is translated from \(O\) to \(O'=O+\Delta\bm r\), then under a narrowband far-field plane-wave model with the same amplitude normalization and channel convention, the complete element response undergoes the common pure-phase transformation
\begin{equation}
\bmc_s^{(O')}
=\ee^{\jj\chi_s}\bmc_s^{(O)},
\qquad
\chi_s=\kappa(\bmu_s^n)\trans\Delta\bm r.
\label{eq:physical-reference-response}
\end{equation}
The paired relation necessary and sufficient to preserve the complete noiseless element signal is
\begin{equation}
\phi_s^{\nav}(O')
=\phi_s^{\nav}(O)-\chi_s
\pmod{2\pi}.
\label{eq:physical-reference-phase}
\end{equation}
Therefore,
\begin{equation}
\ee^{\jj\phi_s^{\nav}(O')}\bmc_s^{(O')}
=\ee^{\jj\phi_s^{\nav}(O)}\bmc_s^{(O)}.
\label{eq:physical-reference-invariant}
\end{equation}
A physical-reference-point transformation is a covariance relation between the propagation phase and the complete array response. It is neither an internal-reference-element chart change nor a common-oscillator phase nuisance.

The IGS ANTEX model uses the mechanical antenna reference point and antenna reference frame as its geometric basis and represents the direction-dependent carrier-range correction as the sum of a phase-center offset and phase variation \cite{antex2026}. The interface with SN-ASMO is as follows. Conventional PCO/PCV models describe a fixed or piecewise calibrated directional antenna response, whereas \(g_s=\bmw\herm\bmc_s^{(O)}\) additionally permits weights, channels, and attitude states to participate in the composite response. The dynamic equivalent phase center defined later is a local first-order interpretation of the composite phase. It neither replaces ANTEX calibration nor permits unmodeled weight dynamics to be disguised as a static PCO/PCV correction.

In broadband or near-field conditions, or whenever a reference-point translation produces non-negligible amplitude and delay structure, the pure-phase relation in \Cref{eq:physical-reference-response} must be replaced by a general nonzero complex operator or frequency--delay response.

\subsection{Identity Separation among Three Reference Transformations}
Three common transformations act on different objects:
\begin{enumerate}
\item \textbf{Change of internal-reference-element chart:} a \(\C^{\times}\) transition function redistributes \(\beta_{\mu,s}^{(O)}\) and \(\bma_s^{[\mu]}\), while the complete \(\bmc_s^{(O)}\) and \(g_s\) remain strictly invariant;
\item \textbf{Change of physical reference point:} the satellite-dependent phase \(\chi_s\) is paired between the propagation factor and complete array response, leaving the complete physical signal invariant;
\item \textbf{Change of common phase coordinates:} a diagonal \(\Uone\) action is applied to same-source branches of the same satellite and may constitute a nuisance in an auxiliary relative-response task.
\end{enumerate}
Calling all three transformations merely a ``reference-phase change'' conceals their different groups, action objects, and observable meanings.

\subsection{Typed Paired Phase-Coordinate Transformation and the Complete Invariant}
The noiseless single-satellite observation is the product \(z_s^{(0)}=\alpha_s^{\nav}g_s\). A phase-coordinate change that preserves the declared navigation identity uses an admissible physical-reference function \(\chi=\chi(\bmxi)\), independent of observer-state coordinates, and acts as a pair:
\begin{equation}
(\alpha_s^{\nav},g_s)
\longmapsto
\bigl(\ee^{-\jj\chi(\bmxi)}\alpha_s^{\nav},
\ee^{\jj\chi(\bmxi)}g_s\bigr),
\label{eq:paired-gauge}
\end{equation}
so that \(z_s^{(0)}\) remains strictly invariant. Multiplying only \(g_s\) by \(\ee^{\jj\chi}\) without changing the propagation-phase coordinate changes the physical observation and is not a pure coordinate transformation. An arbitrary \(\chi(\bmxi,\bmeta)\) still yields an algebraic refactorization of the product, but its transformed first factor generally depends on \(\bmeta\) and may no longer be called the pure navigation-propagation factor.

Define the propagation-phase one-form and the array-response-induced one-form by
\begin{equation}
\omega_s^{\nav}=\dd\widetilde\phi_s^{\nav},
\qquad
\omega_s^A=\Ima\!\left(\frac{\dd g_s}{g_s}\right).
\label{eq:paired-one-forms}
\end{equation}
Under \Cref{eq:paired-gauge},
\begin{equation}
(\omega_s^{\nav})'
=\omega_s^{\nav}-\dd\chi,
\qquad
(\omega_s^A)'
=\omega_s^A+\dd\chi,
\label{eq:paired-form-transform}
\end{equation}
and hence the complete differential phase
\begin{equation}
\omega_s^{\obs}=\omega_s^{\nav}+\omega_s^A
\label{eq:total-phase-invariant}
\end{equation}
is gauge invariant. SN-ASMO preserves the complete physical content represented by \Cref{eq:physical-reference-invariant,eq:total-phase-invariant}, not the numerical value of any one decomposition coordinate.
\section{G0: Complete Complex Response, Provenance, Regularity, and the Safe Domain}
\subsection{Exact Total Differential of the Composite Response}
For
\begin{equation}
 g_s=\bmw\herm\bmc_s^{(O)},
\end{equation}
the exact total differential is
\begin{equation}
\dd g_s=(\dd\bmw)\herm\bmc_s^{(O)}
+\bmw\herm\dd\bmc_s^{(O)}.
\label{eq:dg-exact}
\end{equation}
The first term arises from a change in the weights and the second from a change in the complete element-level response. This is a decomposition of local differential contributions, not a global decomposition of the final composite phase into two independent scalar phases.

If
\begin{equation}
\bmc_s^{(O)}
=\bmc^{(O)}(\bmu_s^a;\bmeta_A),
\end{equation}
then, in local coordinates,
\begin{equation}
\dd\bmc_s^{(O)}
=\bm J_{u,s}\,\delta\bmu_s^a
+\bm J_{A,s}\,\delta\bmeta_A,
\label{eq:dc-local}
\end{equation}
where
\begin{equation}
\bm J_{u,s}=\frac{\partial\bmc_s^{(O)}}{\partial(\bmu_s^a)\trans},
\qquad
\bm J_{A,s}=\frac{\partial\bmc_s^{(O)}}{\partial\bmeta_A\trans}.
\label{eq:response-jacobians}
\end{equation}
The matrix \(\bm J_{u,s}\) is an ambient-space Jacobian; a physically admissible direction perturbation must lie in \(T_{\bmu_s^a}S^2\).

\subsection{Tangent-Space Constraints on \texorpdfstring{$S^2$}{S2} and \texorpdfstring{$\SOthree$}{SO(3)}}
Since \(\norm{\bmu_s^a}=1\),
\begin{equation}
(\bmu_s^a)\trans\delta\bmu_s^a=0,
\qquad
\delta\bmu_s^a\in T_{\bmu_s^a}S^2.
\label{eq:sphere-tangent}
\end{equation}
Define the spherical tangent projector
\begin{equation}
\bm P_{u,s}=\bm I-\bmu_s^a(\bmu_s^a)\trans.
\label{eq:sphere-projector}
\end{equation}
When ambient coordinates are used, the legitimate direction Jacobian is
\begin{equation}
\bm J_{u,s}^{\rm tan}=\bm J_{u,s}\bm P_{u,s}.
\label{eq:tangent-direction-jacobian}
\end{equation}
Treating the three components of \(\bmu_s^a\) as three independent degrees of freedom invents a radial direction and overestimates both local rank and Fisher information.

The following attitude-perturbation convention is adopted. The matrix \(\bm C_a^n\) maps array-frame coordinates to navigation-frame coordinates, and a navigation-frame small rotation \(\delta\bmtheta^n\) acts as a left perturbation:
\begin{equation}
\bm C_a^n(\delta\bmtheta^n)
=\exp\!\bigl(\crossmat{\delta\bmtheta^n}\bigr)\bm C_a^n.
\label{eq:attitude-left-perturb}
\end{equation}
Consequently,
\begin{equation}
\delta\bm C_n^a
=-\bm C_n^a\crossmat{\delta\bmtheta^n},
\label{eq:inverse-attitude-perturb}
\end{equation}
and, from \(\bmu_s^a=\bm C_n^a\bmu_s^n\),
\begin{equation}
\delta\bmu_s^a
=\bm C_n^a\crossmat{\bmu_s^n}\delta\bmtheta^n
+\bm C_n^a\delta\bmu_s^n.
\label{eq:los-attitude-chain}
\end{equation}
Letting \(\delta\bmtheta^a=\bm C_n^a\delta\bmtheta^n\), equivalently,
\begin{equation}
\delta\bmu_s^a
=\crossmat{\bmu_s^a}\delta\bmtheta^a
+\bm C_n^a\delta\bmu_s^n.
\label{eq:los-attitude-array}
\end{equation}
Equations~\Cref{eq:los-attitude-chain,eq:los-attitude-array} give the common chain rule for attitude and line-of-sight variations. Other left- or right-perturbation conventions change the corresponding signs, but different conventions must not be mixed within one derivation.

Substitution into \Cref{eq:dg-exact,eq:dc-local} gives
\begin{equation}
\dd g_s
=(\dd\bmw)\herm\bmc_s^{(O)}
+\bmw\herm\bm J_{u,s}\delta\bmu_s^a
+\bmw\herm\bm J_{A,s}\delta\bmeta_A.
\label{eq:dg-expanded}
\end{equation}
The actual additional-phase differential will be extracted uniformly in G1 through \(\Ima(\dd g_s/g_s)\).

\subsection{Composite-Phase Condition Numbers and Rate Bounds}
For a small perturbation
\begin{equation}
\delta g_s
=(\delta\bmw)\herm\bmc_s^{(O)}
+\bmw\herm\delta\bmc_s^{(O)},
\end{equation}
when \(g_s\neq0\) and the first-order approximation is valid,
\begin{equation}
\delta\psi_s^A
\approx\Ima\!\left(\frac{\delta g_s}{g_s}\right),
\label{eq:phase-first-order}
\end{equation}
which gives the deterministic bound
\begin{equation}
\abs{\delta\psi_s^A}
\le
\frac{\norm{\bmc_s^{(O)}}\norm{\delta\bmw}
+\norm{\bmw}\norm{\delta\bmc_s^{(O)}}}{\abs{g_s}}
+O(\norm{\delta}^2).
\label{eq:phase-conditioning-bound}
\end{equation}
Define
\begin{equation}
\kappa_{w,s}=\frac{\norm{\bmc_s^{(O)}}}{\abs{g_s}},
\qquad
\kappa_{c,s}=\frac{\norm{\bmw}}{\abs{g_s}}.
\label{eq:phase-condition-numbers}
\end{equation}
Then
\begin{equation}
\abs{\delta\psi_s^A}
\lesssim
\kappa_{w,s}\norm{\delta\bmw}
+\kappa_{c,s}\norm{\delta\bmc_s^{(O)}}.
\label{eq:phase-condition-summary}
\end{equation}
The quantities \(\kappa_{w,s}\) and \(\kappa_{c,s}\) are local deterministic sensitivities, not statistical standard deviations. Under a random-error model, the corresponding Jacobians must be combined with covariance matrices to obtain RMS quantities.

Along a time trajectory,
\begin{equation}
\abs{\dot{\widetilde\psi}_s^A}
\le
\frac{\norm{\dot{\bmw}}\norm{\bmc_s^{(O)}}
+\norm{\bmw}\norm{\dot{\bmc}_s^{(O)}}}{\abs{g_s}}.
\label{eq:phase-rate-bound}
\end{equation}
This quantitative bound shows that the condition \(\abs{g_s}\ge\varepsilon_s\) alone is not sufficient. A large-norm superdirective weight vector can strongly amplify calibration errors and weight perturbations even before the response approaches an exact null. A safe domain should constrain response magnitude, weight norm, phase condition numbers, and admissible update rate simultaneously.

\subsection{Constant-Rank Local Response Geometry}
For a fixed local observer coordinate \(\bmeta_0\), consider the restricted response map
\begin{equation}
\calH_{A,\bmeta_0}:U\subseteq\calX\to\calZ_A,
\qquad
\calH_{A,\bmeta_0}(\bmxi)=\bmg(\bmxi;\bmeta_0).
\label{eq:restricted-array-map}
\end{equation}

\begin{theorem}[Constant-rank local response theorem]
If \(\calH_{A,\bmeta_0}\) has constant rank \(r\) on a neighborhood of \(\bmxi_0\), then there exist local coordinates on its domain and codomain in which
\begin{equation}
\calH_{A,\bmeta_0}(v_1,\ldots,v_n)
=(v_1,\ldots,v_r,0,\ldots,0).
\label{eq:constant-rank-normal-form}
\end{equation}
The response image therefore has an \(r\)-dimensional immersed-submanifold structure near \(\calH_{A,\bmeta_0}(\bmxi_0)\), and the local equal-response set has dimension \(n-r\).
\label{thm:constant-rank}
\end{theorem}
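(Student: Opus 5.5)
This is the classical constant-rank theorem, and I would prove it by reduction to the inverse function theorem. One caveat first: \(\calZ_A\) is a complex vector space, so I treat it as a real manifold \(\R^{2K}\). The map \(\calH_{A,\bmeta_0}\) is \(C^1\) by \Cref{ass:smoothness}, and is at least \(C^k\) whenever the normal form is needed with more smoothness. Since the map is defined through the local section \(\sigma_{\bmeta_0}\) of \Cref{ass:local-trivialization}, I first fix a chart on \(U\subseteq\calX\). In that chart, \(\bmxi\) corresponds to \(x\in\R^n\) and \(\bmxi_0\) to the origin. I then write \(F=\calH_{A,\bmeta_0}\) as a map \(\R^n\supseteq V\to\R^{m}\), with \(m=2K\) and \(F(0)=0\) after translation.

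The first step is to permute coordinates so that the upper-left \(r\times r\) block of \(DF(0)\) is invertible. Split \(x=(x',x'')\) and \(F=(F',F'')\), with \(x',F'\in\R^r\). Next define \(\Phi(x)=(F'(x),x'')\). Its Jacobian at the origin is block triangular with invertible diagonal blocks, so the inverse function theorem gives a local diffeomorphism \(\Phi\). Then \(G=F\circ\Phi^{-1}\) has the form \(G(v)=(v',\,\tilde F''(v))\). Consequently
\[
DG=\begin{pmatrix} I_r & 0\\ \partial_{v'}\tilde F'' & \partial_{v''}\tilde F''\end{pmatrix}.
\]

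The second step uses the constant-rank hypothesis, and I regard it as the essential point. The rank of \(DG\) equals the rank of \(DF\), which is \(r\) on a neighborhood. Because the first \(r\) rows already contribute rank \(r\), this forces \(\partial_{v''}\tilde F''\equiv0\) on a connected (say, cubical) neighborhood. Hence \(\tilde F''\) depends only on \(v'\): \(G(v)=(v',h(v'))\). The codomain change \(\Psi(y',y'')=(y',\,y''-h(y'))\) is a diffeomorphism, and it gives \(\Psi\circ G(v)=(v',0)\). This is \Cref{eq:constant-rank-normal-form}.

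The geometric consequences then follow from the normal form. In the new codomain coordinates, the image of a small neighborhood is the slice \(\{y''=0\}\), which is an \(r\)-dimensional embedded, and hence immersed, submanifold through \(\calH_{A,\bmeta_0}(\bmxi_0)\). Likewise, the level set through \(\bmxi_0\) is \(\{v'=0\}\), a submanifold of dimension \(n-r\). Both statements hold only locally. I would add a remark that the image of the full domain may still fold or self-intersect, which is consistent with the earlier caveat about response images.

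The main obstacle is exactly where constant rank is needed. At a point where \(\partial_{v''}\tilde F''\) does not vanish identically, rank alone at the point fails, so constant rank on a neighborhood is essential. The argument also requires that neighborhood to be connected so that \(\tilde F''\) is truly independent of \(v''\). A cube around the origin, contained in the domain of the inverse-function-theorem diffeomorphism, settles this.
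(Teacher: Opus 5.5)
Your proposal is correct and takes essentially the same route as the paper, whose proof simply invokes the classical constant-rank theorem from Lee's text. You reproduce that theorem's standard inverse-function-theorem proof: straighten \(F'\), use constant rank on a cube to force \(\partial_{v''}\tilde F''\equiv0\), then straighten the graph of \(h\), and your caveats (realifying \(\calZ_A\), \(C^1\) regularity under the paper's smoothness assumption, and the purely local nature of the submanifold and \((n-r)\)-dimensional level-set conclusions) match the paper's own remark that the global response image may fold or self-intersect.
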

\begin{proof}
The result follows directly from the constant-rank theorem \cite{lee2013}.
\end{proof}
Rank \(r\) at a single point does not imply a stable local structure; the rank must remain constant on a neighborhood. The global response image may self-intersect, fold, or form a stratified set. Accordingly, the term ``response manifold'' is used only on constant-rank regular domains.

\subsection{Singular Sets and Their Physical Consequences}
Let \(e=(\bmxi,\bmeta)\in\calE\). Define five classes of singular sets:
\begin{align}
\Sigma_s^0&=\{e:g_s(e)=0\},
\label{eq:zero-singularity}\\
\Sigma^{\rm rank}&=\{e:\rank\dd\calH_A(e)<r_{\max}\},
\label{eq:rank-singularity}\\
\Sigma_r^{\rm chart}&=\{e:g_{s,r}(e)=0\},
\label{eq:chart-singularity}\\
\Sigma^G&=\{e:\text{the nuisance-orbit dimension or stabilizer type changes}\},
\label{eq:group-singularity}\\
\Sigma^{\rm stat}&=\{e:\text{likelihood support, score, covariance, or Fisher regularity fails}\}.
\label{eq:stat-singularity}
\end{align}
Their physical meanings are as follows:
\begin{enumerate}
\item \(\Sigma_s^0\): the composite response for satellite \(s\) cancels completely; \(\Arg g_s\), \(\dd g_s/g_s\), and the continuous lift are undefined, and carrier tracking may lose lock;
\item \(\Sigma^{\rm rank}\): the observation loses first-order sensitivity to some state directions, reducing the rank of the intrinsic Jacobian and effective Fisher information;
\item \(\Sigma_r^{\rm chart}\): a denominator of a ratio coordinate vanishes; if the joint response vector remains nonzero, another chart can be used, so this is not a physical failure;
\item \(\Sigma^G\): the orbit type of the genuine nuisance group changes, and the unified quotient can become a stratified space;
\item \(\Sigma^{\rm stat}\): the geometric response may remain regular, but a Gaussian approximation or Fisher-regular model fails and must be replaced by a nonregular, heavy-tailed, or full-likelihood treatment.
\end{enumerate}
These sets must not be conflated under the generic label ``signal failure.''

Let \(\calS_{\rm tr}\) be the tracked-satellite set and define
\begin{equation}
\Sigma^0=\bigcup_{s\in\calS_{\rm tr}}\Sigma_s^0,
\end{equation}
\begin{equation}
\calD_{\rm reg}
=\calE\setminus
(\Sigma^0\cup\Sigma^{\rm rank}\cup\Sigma^G\cup\Sigma^{\rm stat}),
\label{eq:regular-domain}
\end{equation}
\begin{equation}
\calD_{\varepsilon}
=\{e:\abs{g_s(e)}\ge\varepsilon_s,\ \forall s\in\calS_{\rm tr}\},
\label{eq:amplitude-safe-domain}
\end{equation}
and the composite safe domain
\begin{equation}
\calD_{\rm safe}
=\calD_{\rm reg}\cap\calD_{\varepsilon}
\cap\{e:\kappa_{w,s}\le\bar\kappa_{w,s},\ \kappa_{c,s}\le\bar\kappa_{c,s}\}.
\label{eq:full-safe-domain}
\end{equation}

For \(z_s=\alpha_s^{\nav}g_s+n_s\) with \(n_s\sim\CN(0,\sigma_s^2)\), at high SNR with known magnitude or after local elimination of magnitude, the phase variance has the typical lower bound
\begin{equation}
\Var(\widehat\phi_s)
\gtrsim\frac{\sigma_s^2}{2\abs{\alpha_s^{\nav}g_s}^2}.
\label{eq:phase-variance-near-zero}
\end{equation}
Phase information therefore degenerates as \(\abs g\to0\). The domain \(\calD_{\rm safe}\) is simultaneously an interface for phase definability, reliable linearization, finite variance, and stable tracking.

\subsection{Continuous Phase Lifts, Winding on Closed Loops, and Global Existence}
Over any time interval, a nonzero continuous response always admits a continuous real-valued phase lift. On a multidimensional state domain \(D\subseteq\calD_{\rm safe}\), however, nonvanishing and connectedness alone do not guarantee a globally single-valued phase.

\begin{theorem}[Criterion for a global real-valued phase lift]
Let \(D\) be a connected smooth domain and \(g:D\to\C^{\times}\) continuously differentiable. A global continuous function \(\widetilde\psi:D\to\R\) satisfying \(g/\abs g=\ee^{\jj\widetilde\psi}\) exists if and only if, for every closed loop \(\gamma\subset D\),
\begin{equation}
\frac{1}{2\pi}
\oint_{\gamma}
\Ima\!\left(\frac{\dd g}{g}\right)=0.
\label{eq:global-lift-condition}
\end{equation}
If \(D\) is simply connected and \(g\neq0\), the condition holds automatically.
\label{thm:global-lift}
\end{theorem}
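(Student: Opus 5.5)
I will prove the two directions separately, working throughout with the phase element \(h=\ph(g)=g/\abs g:D\to\Uone\) and the real one-form \(\omega=\Ima(\dd g/g)\). First I will check that \(\omega=\Ima(\dd h/h)=-\jj\,h^{-1}\dd h\). Writing \(g=\abs g\,h\) gives \(\dd g/g=\dd\log\abs g+\dd h/h\). Since \(\abs h=1\), the term \(\dd h/h\) is purely imaginary. Hence \(\Ima(\dd g/g)\) depends only on \(h\) and is a continuous closed-in-the-weak-sense one-form on \(D\), because \(g\) is \(C^1\). I will read the loops \(\gamma\) in \Cref{eq:global-lift-condition} as piecewise-\(C^1\) loops, so that the line integral is defined.

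\textbf{Necessity.} Suppose a continuous lift \(\widetilde\psi\) with \(h=\ee^{\jj\widetilde\psi}\) exists. Take a piecewise-\(C^1\) loop \(\gamma:[0,1]\to D\). The composite \(h\circ\gamma\) is piecewise \(C^1\), and \(\widetilde\psi\circ\gamma\) is a continuous lift of it. By uniqueness of lifts through the covering \(\R\to\Uone\), this lift agrees (up to a constant \(2\pi k\)) with the lift obtained by integrating \(\Ima(\dd h/h)\) along \(\gamma\). Hence \(\widetilde\psi\circ\gamma\) is piecewise \(C^1\) with derivative \(\omega(\dot\gamma)\). Therefore \(\oint_\gamma\omega=\widetilde\psi(\gamma(1))-\widetilde\psi(\gamma(0))=0\).

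\textbf{Sufficiency.} Fix \(e_0\in D\) and \(\psi_0\in\R\) with \(\ee^{\jj\psi_0}=h(e_0)\). Since \(D\) is a connected smooth domain, it is path-connected by piecewise-\(C^1\) paths. Define \(\widetilde\psi(e)=\psi_0+\int_{\gamma_e}\omega\) for any such path \(\gamma_e\) from \(e_0\) to \(e\). The vanishing of all loop integrals makes this independent of the path, since two paths compose into a loop.

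Next I will show \(\ee^{\jj\widetilde\psi}=h\) along each path. Set \(f(t)=h(\gamma(t))\ee^{-\jj\widetilde\psi(\gamma(t))}\). Then \(f'/f=\dd h(\dot\gamma)/h-\jj\omega(\dot\gamma)=0\), using \(\dd h/h=\jj\omega\). So \(f\) is constant, equal to \(1\) at \(t=0\).

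Continuity (indeed \(C^1\)) of \(\widetilde\psi\) follows by working in a small coordinate ball around each point. There \(\widetilde\psi(e)-\widetilde\psi(e')=\int_{[e',e]}\omega\) along the straight segment, which tends to zero because \(\omega\) is bounded near the point.

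\textbf{Main obstacle.} The only delicate point is the regularity bookkeeping: loops must be piecewise \(C^1\) for \Cref{eq:global-lift-condition} to make sense, and connectedness must give piecewise-\(C^1\) path-connectedness. Both are standard for open subsets or connected manifolds. Everything else is the fundamental theorem of calculus along paths.

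\textbf{Simply connected case.} The last claim follows in one of two ways.

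\emph{Route 1.} Any continuous loop is null-homotopic. The winding number \(\frac1{2\pi}\oint_\gamma\omega\) is an integer (by the necessity computation applied locally via path lifting), and it is homotopy-invariant, since it is the degree of \(h\circ\gamma:S^1\to\Uone\). It therefore equals \(0\).

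\emph{Route 2.} Invoke the lifting criterion for the covering \(\exp(\jj\cdot):\R\to\Uone\): \(h_*\pi_1(D)=0\subseteq\exp_*\pi_1(\R)\).
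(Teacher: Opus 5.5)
Your proof is correct, but it takes a different route from the paper. The paper's argument is topological. It invokes the lifting criterion for the covering $\exp(\jj\cdot):\R\to\Uone$: $h=g/\abs{g}$ lifts if and only if $h_*:\pi_1(D)\to\pi_1(S^1)\cong\Z$ vanishes. It then identifies $h_*([\gamma])$ with $\frac{1}{2\pi}\oint_\gamma\Ima(\dd g/g)$ and reads off the simply connected clause from $\pi_1(D)=0$. Your Route~2 is exactly this argument. Your main argument is different: you construct the lift explicitly as $\widetilde\psi(e)=\psi_0+\int_{e_0}^{e}\omega$, obtain path independence from the vanishing periods, and check $\ee^{\jj\widetilde\psi}=h$ through $f'/f=0$. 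Necessity then comes from uniqueness of lifts over an interval; this uses only your ODE check along a single path, not the hypothesis, so there is no circularity.

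The paper's version is shorter. It needs differentiability only to express the integer obstruction as an integral, and it displays the obstruction as a homomorphism on $\pi_1(D)$. However, it tacitly uses two facts, because the hypothesis only makes sense on integrable loops: every class in $\pi_1(D)$ has a piecewise-$C^1$ representative, and the loop integral equals the degree of $h\circ\gamma$. Your construction needs no approximation theorem, since only piecewise-$C^1$ paths and loops ever appear. It also shows that $\widetilde\psi$ is $C^1$ with $\dd\widetilde\psi=\omega$, and it makes the lift literally the base-pointed phase transport of \Cref{eq:endpoint-transport}. Topology re-enters your argument only in the simply connected clause, where you need homotopy invariance of degree (Route~1) or the lifting criterion itself (Route~2).
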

The covering-space proof is given in \Cref{subsec:lift-proof}.

The winding number of a general closed loop is
\begin{equation}
 k_\gamma
=\frac{1}{2\pi\jj}
\oint_\gamma\frac{\dd g}{g}
\in\Z.
\label{eq:winding-number}
\end{equation}
The endpoint ratio of the normalized group element around the loop is \(\ee^{\jj2\pi k_\gamma}=1\). Winding information therefore resides in the real-valued continuous lift and integer cycle count, not in a nontrivial \(\Uone\) endpoint group element. If the path crosses \(g=0\), a continuous lift generally ceases to exist.
\section{Response-Normalized Dynamics and the Dynamic Equivalent Phase Center}
\subsection{Complex-Logarithmic Response One-Form}
On a local region where \(g_s\neq0\), define the complex-logarithmic response one-form
\begin{equation}
\calC_s^A
=\frac{\dd g_s}{g_s}
=\dd\ln\varrho_s^A
+\jj\dd\widetilde\psi_s^A.
\label{eq:complex-log-form}
\end{equation}
Hence,
\begin{equation}
\Rea\calC_s^A=\dd\ln\varrho_s^A,
\qquad
\Ima\calC_s^A=\dd\widetilde\psi_s^A.
\label{eq:log-amplitude-phase}
\end{equation}
Compared with the unnormalized differential \(\dd g_s\), the ratio \(\dd g_s/g_s\) describes scale-normalized relative variations in amplitude and phase. By \Cref{eq:log-factorization}, it is invariant under a change of internal-reference-element chart.

Within a local trivialization,
\begin{equation}
\calC_s^A
=\frac{D_{\bmxi}g_s\,\dd\bmxi}{g_s}
+\frac{D_{\bmeta}g_s\,\dd\bmeta}{g_s}.
\label{eq:log-state-observer-split}
\end{equation}
The first term is the action of the physical state on the composite response through the element-response channel, whereas the second is the action of observer-state variation. The ordinary propagation phase is not contained in \(g_s\); it is carried separately by \(\alpha_s^{\nav}\).

Under the linear narrowband model,
\begin{equation}
\calC_s^A
=\frac{(\dd\bmw)\herm\bmc_s^{(O)}}{g_s}
+\frac{\bmw\herm\dd\bmc_s^{(O)}}{g_s}.
\label{eq:log-two-source}
\end{equation}
The real and imaginary parts give, respectively, the first-order variations of the composite amplitude and the additional phase. The two terms are local contributions to one final complex response; they do not possess independent global phase identities.

\subsection{Time Trajectories and Instantaneous Frequency Offset}
Along a time trajectory \(t\mapsto(\bmxi(t),\bmeta(t))\),
\begin{equation}
\frac{\dot g_s}{g_s}
=\frac{\dot{\bmw}\herm\bmc_s^{(O)}}{g_s}
+\frac{\bmw\herm\dot{\bmc}_s^{(O)}}{g_s}.
\label{eq:time-log-response}
\end{equation}
It follows that
\begin{equation}
\dot{\widetilde\psi}_s^A
=\Ima\!\left(\frac{\dot g_s}{g_s}\right)
=\dot{\widetilde\psi}_s^{(w)}
+\dot{\widetilde\psi}_s^{(c)},
\label{eq:phase-rate-decomposition}
\end{equation}
where
\begin{equation}
\dot{\widetilde\psi}_s^{(w)}
=\Ima\!\left(\frac{\dot{\bmw}\herm\bmc_s^{(O)}}{g_s}\right),
\qquad
\dot{\widetilde\psi}_s^{(c)}
=\Ima\!\left(\frac{\bmw\herm\dot{\bmc}_s^{(O)}}{g_s}\right).
\label{eq:phase-rate-components}
\end{equation}
Under the positive-phase convention adopted here, the instantaneous frequency offset induced by array processing is
\begin{equation}
 f_s^A=\frac{1}{2\pi}\dot{\widetilde\psi}_s^A.
\label{eq:array-frequency-offset}
\end{equation}
It may enter a PLL or FLL, but it cannot automatically be interpreted as true geometric Doppler.

If \(\dot{\bmw}\) is caused primarily by adaptive spatial reconstruction against suppressive interference, define
\begin{equation}
\omega_s^{\rm jam}
=\Ima\!\left(\frac{\dd\bmw\herm\bmc_s^{(O)}}{g_s}\right).
\label{eq:jam-phase-form}
\end{equation}
If a selected local model declares
\(\dd\bmc_s^{(O)}=\dd\bmc_{s,\rm dyn}^{(O)}+\dd\bmc_{s,\rm other}^{(O)}\),
with the first term collecting the stated attitude, deformation, channel, or environmental dynamics, define
\begin{equation}
\omega_s^{\rm dyn}
=\Ima\!\left(\frac{\bmw\herm\dd\bmc_{s,\rm dyn}^{(O)}}{g_s}\right).
\label{eq:dynamic-phase-form}
\end{equation}
The labels \(\omega_s^{\rm jam}\), \(\omega_s^{\rm dyn}\), and any further contributions are local bookkeeping terms. They may be summed only after an exhaustive, non-overlapping decomposition of \(\omega_s^A\) has been declared. Independently of such a decomposition, the total composite-response phase transport along a nonzero path is
\begin{equation}
T_s^{A}(\gamma)
=\exp\!\left(\jj\int_\gamma\omega_s^A\right)\in\Uone.
\label{eq:observer-transport-path}
\end{equation}
When such a decomposition is declared, its contributions compose by the Abelian group law; this does not imply statistical independence. The \(D_{\bmxi}g_s\) contribution is physical-axis-induced, whereas only the \(D_{\bmeta}g_s\) contribution is observer-state-induced.

\subsection{Two Physical Pathways of High-Dynamic Motion and Order-of-Magnitude Audit}
Choose any internal reference element \(\mu\). The array-frame baseline from that element to element \(m\) is
\begin{equation}
\bm b_{m\mu}^a=\bm b_m^a-\bm b_\mu^a,
\end{equation}
and the corresponding baseline in the navigation frame is
\begin{equation}
\bm r_{m\mu}^n=\bm C_a^n\bm b_{m\mu}^a.
\end{equation}
Under the ideal far-field model, the relative element phase is
\begin{equation}
\phi_{m\mu,s}^A
=-\kappa(\bmu_s^n)\trans\bm r_{m\mu}^n.
\label{eq:element-relative-phase}
\end{equation}
Let \(\bm\omega_{a/n}^n\) denote the angular velocity of the array with respect to the navigation frame, and adopt the convention
\begin{equation}
\dot{\bm C}_a^n
=\crossmat{\bm\omega_{a/n}^n}\bm C_a^n.
\label{eq:attitude-rate-convention}
\end{equation}
Then
\begin{equation}
\dot\phi_{m\mu,s}^A
=-\kappa\left[
(\dot{\bmu}_s^n)\trans\bm r_{m\mu}^n
+(\bm r_{m\mu}^n\times\bmu_s^n)\trans\bm\omega_{a/n}^n
\right].
\label{eq:element-phase-dynamics}
\end{equation}
The first term is due to variation of the line-of-sight direction in the navigation frame, and the second is due to rotation of the array body. Together they constitute the physical sources of the same quantity \(\dot{\bmu}_s^a\).

Let the satellite position and velocity be \(\bm p_s,\bm v_s\), and let the position and velocity of the physical reference point be \(\bm p_O,\bm v_O\). With
\begin{equation}
\rho_s=\norm{\bm p_s-\bm p_O},
\qquad
\bmu_s^n=\frac{\bm p_s-\bm p_O}{\rho_s},
\end{equation}
one obtains
\begin{equation}
\dot{\bmu}_s^n
=\frac{1}{\rho_s}
(\bm I-\bmu_s^n(\bmu_s^n)\trans)
(\bm v_s-\bm v_O).
\label{eq:los-rate}
\end{equation}
The order of the relative element-phase sensitivity caused by line-of-sight variation is \(O(\kappa B/\rho_s)\), whereas that caused by array rotation is \(O(\kappa B)\). Thus, for a compact far-field GNSS array, rapid attitude rotation generally has a much stronger direct effect on inter-element relative phase than absolute translation. Translation nevertheless remains the dominant contributor to true propagation range and ordinary Doppler, which are retained in \(\phi_s^{\nav}\).

\subsection{Inter-Satellite Array Phase and the PI Special Case}
For receiver \(q\), current satellite \(s\), and reference satellite \(r\), fix the inter-satellite differencing convention
\begin{equation}
\Delta_{sr}f_q=f_{q,s}-f_{q,r}.
\label{eq:inter-satellite-difference}
\end{equation}
The principal value of the inter-satellite composite array-response phase is
\begin{equation}
\Delta_{sr}\psi_q^A
=\Arg(g_{q,s}g_{q,r}^{*})
=\psi_{q,s}^A-\psi_{q,r}^A\pmod{2\pi}.
\label{eq:inter-satellite-array-phase}
\end{equation}
On a continuous nonzero arc, use the real-valued lift \(\Delta_{sr}\widetilde\psi_q^A\). Its dynamics satisfy
\begin{equation}
\Delta_{sr}\dot{\widetilde\psi}_q^A
=\Ima\!\left(
\frac{\dot g_{q,s}}{g_{q,s}}
-\frac{\dot g_{q,r}}{g_{q,r}}
\right).
\label{eq:inter-satellite-phase-dynamics}
\end{equation}
Even when all satellites share the same weight vector and the same weight-update rate, the weight-induced terms generally do not cancel because different satellites correspond to different points on the array manifold.

In the Power Inversion (PI) special case\cite{compton1979}, if \(\bm R_x\succ0\) and the constraint vector is \(\bmc_{\PI}\), then
\begin{equation}
\bmw_{\PI}
=\frac{\bm R_x^{-1}\bmc_{\PI}}
{\bmc_{\PI}\herm\bm R_x^{-1}\bmc_{\PI}},
\qquad
\bmc_{\PI}\herm\bmw_{\PI}=1.
\label{eq:pi-weight}
\end{equation}
If a finite sample size causes rank deficiency or severe ill-conditioning, diagonal loading \(\bm R_x+\delta\bm I\) or a generalized inverse should be used, and feasibility of the constraint must be re-examined. The PI constraint vector belongs to the weight-optimization rule; it need not correspond to the internal reference element and does not impose a distortionless constraint on any particular satellite. Consequently,
\begin{equation}
 g_{q,s}^{\PI}
=(\bmw_q^{\PI})\herm\bmc_{q,s}^{(O_q)}
\end{equation}
is generally not equal to one. The dynamics of the inter-satellite PI phase difference are
\begin{equation}
\Delta_{sr}\dot{\widetilde\psi}_q^{\PI}
=\Ima\!\left(
\frac{\dot g_{q,s}^{\PI}}{g_{q,s}^{\PI}}
-\frac{\dot g_{q,r}^{\PI}}{g_{q,r}^{\PI}}
\right).
\label{eq:inter-satellite-pi-dynamics}
\end{equation}

\subsection{Weight-Dependent Dynamic Equivalent Phase Center}
A conventional antenna phase center locally interprets a direction-dependent carrier-range correction as an equivalent displacement from a mechanical reference point. For the composite array response, define the additional equivalent range
\begin{equation}
 d_s^A(\bmu_s^a;\bmeta)
=\frac{\widetilde\psi_s^A}{\kappa}.
\label{eq:array-equivalent-range}
\end{equation}
Define the local dynamic equivalent phase center on the tangent space of the sphere by
\begin{equation}
\bm r_{{\rm pc},s}^{\rm tan,a}
=-\bm P_{u,s}\nabla_{\bmu_s^a}d_s^A.
\label{eq:dynamic-phase-center}
\end{equation}
It describes only the local first-order variation of \(d_s^A\) with direction. The radial component along the line of sight cannot be determined from a spherical directional derivative alone.

The scalar \(d_s^A\) depends on the chosen continuous lift: changing the lift by \(2\pi k\) changes \(d_s^A\) by \(k\lambda\). Its tangential gradient is unaffected by a constant branch change on a connected chart. The dynamic equivalent phase center is therefore a local first-order object, not a globally unique mechanical point.

When taking the directional gradient, hold the current weights fixed. Under the ideal element-response model
\begin{equation}
 c_{m,s}^{(O)}
=\Gamma_{m,s}\ee^{-\jj\kappa(\bm b_m^a)\trans\bmu_s^a}
\label{eq:ideal-element-response}
\end{equation}
and temporarily neglecting the directional gradient of \(\Gamma_{m,s}\), \Cref{eq:dynamic-phase-center} becomes
\begin{equation}
\bm r_{{\rm pc},s}^{\rm tan,a}
=\bm P_{u,s}\Rea\!\left[
\frac{\sum_{m=1}^{M}w_m^{*}\Gamma_{m,s}
\ee^{-\jj\kappa(\bm b_m^a)\trans\bmu_s^a}\bm b_m^a}
{g_s}
\right].
\label{eq:dynamic-phase-center-closed}
\end{equation}
This quantity varies with satellite direction, weights, channel state, and array state; it may diverge as \(g_s\to0\). It connects response-zero singularities and composite-phase conditioning to the conventional phase-center concept, but it is not a fixed mechanical point and cannot replace independent antenna calibration.

\section[G1: Response-Tracking U(1) Geometry and the PTFE]{G1: Response-Tracking \texorpdfstring{$\Uone$}{U(1)} Geometry,\\the Phase-Transport Fundamental Equation, and Transport Boundaries}
\subsection{Two Physical Roles of the Same Group}
Define
\begin{equation}
\Uone=\{\ee^{\jj\alpha}:\alpha\in\R\}.
\label{eq:u1-group}
\end{equation}
For a nonzero complex quantity, multiplication by \(\Uone\) changes only the phase coordinate, not the magnitude. In SN-ASMO, the same group has two related but distinct physical roles.

First, in a multibranch observation of the same satellite, epoch, and coherent data block,
\begin{equation}
\bmz_s=\alpha_s^{\nav}\bmg_s+\bmn_s,
\label{eq:common-phase-family}
\end{equation}
the phase of \(\alpha_s^{\nav}\) acts diagonally on all branches. For the auxiliary task of extracting relative array responses among branches, this common carrier phase is a nuisance parameter. For the primary PLL/RTK task, however, it is a navigation quantity of interest and cannot be removed globally.

Second, the phase of the composite response at an observer state,
\begin{equation}
 h_s^A(e)=\frac{g_s(e)}{\abs{g_s(e)}}\in\Uone,
\qquad e\in\calD_{\rm safe},
\label{eq:observer-u1-element}
\end{equation}
acts multiplicatively on the carrier observation as the phase of the composite array response. It may vary through either \(D_{\bmxi}g_s\) (for example, physical attitude or direction changes) or \(D_{\bmeta}g_s\) (for example, weight or channel-state changes). Only the latter is observer-state-induced; the Abelian group law does not make their statistical models independent.

The weight \(\bmw\), the point \(e\) in the joint total space, the phase fiber, and the group element on that fiber must retain distinct identities. Only when \(g_s(e)\neq0\) is \(h_s^A(e)\) a well-defined phase point at that state. Same-source multiweight observations are subject to one common diagonal \(\Uone\) action, not to an independent \(\Uone^K\) freedom on each component.

\subsection{Response-Derived Connection and the Phase-Transport Fundamental Equation}
Let
\begin{equation}
\calD_s^{\times}=\{e\in\calD_{\rm safe}:g_s(e)\neq0\},
\qquad
h_s^A=\frac{g_s}{\abs{g_s}},
\qquad
\omega_s^A=\Ima\!\left(\frac{\dd g_s}{g_s}\right).
\label{eq:response-induced-form}
\end{equation}
On the phase bundle \(\mathcal P_s=\calD_s^{\times}\times\Uone\), let \(\zeta\in\Uone\) and fix the standard radian normalization \(\omega_{\Uone}=-\jj\zeta^{-1}\dd\zeta\). Impose response-tracking compatibility by requiring the normalized-response section \(e\mapsto(e,h_s^A(e))\) to be horizontal. Under this convention and compatibility rule, define
\begin{equation}
\Omega_{g_s}
=\operatorname{pr}_{\Uone}^{*}\omega_{\Uone}
-\pi_{\rm ph}^{*}\omega_s^A.
\label{eq:response-tracking-connection}
\end{equation}
Once this compatibility rule is fixed, \(\Omega_{g_s}\) is the unique principal \(\Uone\) connection on the stated trivial bundle whose horizontal lifts satisfy it \cite{meng2026ptfe}. A nonzero scalar response alone does not select a connection before the compatibility rule is declared.

\begin{theorem}[SN-ASMO specialization of the phase-transport fundamental equation]
Let \(\gamma:[0,1]\to\calD_s^{\times}\) be a supplied piecewise-\(C^1\) path and \(\widetilde\gamma(t)=(\gamma(t),\zeta(t))\) a phase-bundle lift. Response-compatible phase transport is characterized by
\begin{equation}
\widetilde\gamma^{*}\Omega_{g_s}=0.
\label{eq:ptfe-horizontal}
\end{equation}
Equivalently,
\begin{equation}
\dot\zeta-\jj\,\omega_s^A(\dot\gamma)\zeta=0,
\qquad
\dot{\widetilde\psi}_s^A
=\omega_s^A(\dot\gamma)
=\Ima\!\left(\frac{\dot g_s}{g_s}\right).
\label{eq:ptfe-equivalent-forms}
\end{equation}
The finite phase-transport factor is
\begin{equation}
T_s^A[\gamma]
=\exp\!\left(\jj\int_\gamma\omega_s^A\right)
=h_s^A(\gamma(1))[h_s^A(\gamma(0))]^{-1}
=\frac{g_s(\gamma(0))^{*}g_s(\gamma(1))}
{\abs{g_s(\gamma(0))}\abs{g_s(\gamma(1))}}.
\label{eq:endpoint-transport}
\end{equation}
\label{thm:snasmo-ptfe}
\end{theorem}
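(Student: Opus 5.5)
The plan is to pull the horizontality condition back along the lift and then integrate the resulting linear scalar ODE on $\Uone$. Write $\widetilde\gamma(t)=(\gamma(t),\zeta(t))$. Since $\operatorname{pr}_{\Uone}\circ\widetilde\gamma=\zeta$ and $\pi_{\rm ph}\circ\widetilde\gamma=\gamma$, naturality of pullbacks applied to \Cref{eq:response-tracking-connection} gives
\[
\widetilde\gamma^{*}\Omega_{g_s}=\zeta^{*}\omega_{\Uone}-\gamma^{*}\omega_s^A
=\bigl[-\jj\zeta^{-1}\dot\zeta-\omega_s^A(\dot\gamma)\bigr]\dd t .
\]
Setting this to zero and multiplying by $\jj\zeta$ yields $\dot\zeta-\jj\,\omega_s^A(\dot\gamma)\zeta=0$, which is the first form in \Cref{eq:ptfe-equivalent-forms}. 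On each $C^1$ piece we write $\zeta=\ee^{\jj\widetilde\psi}$ with a continuous real lift; this is always possible over an interval, as noted before \Cref{thm:global-lift}. The equation then becomes $\dot{\widetilde\psi}=\omega_s^A(\dot\gamma)$.

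Next we identify $\omega_s^A(\dot\gamma)$ with $\Ima(\dot g_s/g_s)$, the second form. The chain rule gives $(\gamma^{*}\dd g_s)(\partial_t)=\frac{\dd}{\dd t}g_s(\gamma(t))$, and dividing by $g_s$ is legitimate because $\gamma$ lies in $\calD_s^{\times}$. By \Cref{eq:log-amplitude-phase} this is exactly $\dot{\widetilde\psi}_s^A$. The piecewise-$C^1$ case is handled by concatenation: the ODE is solved on each piece, and continuity of $\zeta$ is imposed at the breakpoints. We also record that the normalized-response section $e\mapsto(e,h_s^A(e))$ solves the ODE, which is the compatibility rule used to define $\Omega_{g_s}$. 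This confirms that "response-compatible transport" coincides with horizontality.

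For the finite factor, solve the linear ODE explicitly:
\[
\zeta(1)=\exp\!\Bigl(\jj\int_0^1\omega_s^A(\dot\gamma)\,\dd t\Bigr)\zeta(0)=T_s^A[\gamma]\,\zeta(0).
\]
Uniqueness of solutions gives independence from the chosen lift. To obtain the endpoint expression, we integrate along the continuous lift of $h_s^A\circ\gamma$, which exists because $\gamma$ avoids $g_s=0$:
\[
\int_\gamma\omega_s^A=\widetilde\psi_s^A(\gamma(1))-\widetilde\psi_s^A(\gamma(0)).
\]
Exponentiating gives $h_s^A(\gamma(1))[h_s^A(\gamma(0))]^{-1}$. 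Substituting $h=g/\abs{g}$ and using $h^{-1}=h^{*}=g^{*}/\abs{g}$ produces the final quotient in \Cref{eq:endpoint-transport}.

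The main obstacle is the endpoint-algebra step: the real-valued integral depends on the path, while the group element does not. We will note that any $2\pi k$ winding disappears under exponentiation, in line with \Cref{eq:winding-number}. The real lift carries the cycle count, and the $\Uone$ factor depends only on the endpoints. No global single-valued phase on $\calD_s^{\times}$ is needed, only the lift along the supplied path. The path hypothesis and nonvanishing of $g_s$ along it therefore cannot be dropped.
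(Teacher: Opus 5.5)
Your proposal is correct and follows the paper's own route: you pull $\Omega_{g_s}$ back along the lift to obtain the linear ODE on $\Uone$, and then integrate along the supplied nonzero path to get the endpoint factor. You additionally spell out details the paper leaves implicit. These are the piecewise-$C^1$ concatenation, the use of the continuous lift of $h_s^A\circ\gamma$ along the path only (no global phase on the base), and why integer winding disappears under exponentiation.
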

\begin{proof}
Pulling back \Cref{eq:response-tracking-connection} along \(\widetilde\gamma\) gives the differential equation in \Cref{eq:ptfe-equivalent-forms}; integration on the nonzero path gives \Cref{eq:endpoint-transport}. This is the direct GNSS specialization of the general PTFE.
\end{proof}

Because \(\omega_s^A\) is the pullback of the real Maurer--Cartan form by the globally defined \(h_s^A\), on every response chart
\begin{equation}
\dd\omega_s^A=0.
\label{eq:flat-response-form}
\end{equation}
Hence the response-tracking connection is flat. Its circular holonomy is also trivial because the graph of \(h_s^A\) is a global horizontal section on each connected component of \(\calD_s^\times\); this conclusion does not follow from flatness alone on an arbitrary non-simply-connected base. A real-valued lift may still retain integer winding. A general non-flat \(\Uone\) connection would be additional model structure and is not derived from the scalar response used in SN-ASMO.

\subsection{Endpoint Phase Relations and Data-Observable Comparisons}
\begin{definition}[Model-defined endpoint phase factor]
For any two nonzero response states \(e_1,e_2\), define
\begin{equation}
T_{12,s}^{A,\rm mod}
=h_s^A(e_2)[h_s^A(e_1)]^{-1}\in\Uone.
\label{eq:model-defined-transport}
\end{equation}
This algebraic endpoint factor is defined by the response model. It is a PTFE path solution only when an explicit piecewise-\(C^1\) path in \(\calD_s^{\times}\) from \(e_1\) to \(e_2\) is supplied. Two nonzero endpoints alone neither guarantee such a path nor justify calling the factor a path transport.
\end{definition}

If the two states correspond to observations at different epochs,
\begin{equation}
 z_{k,s}=\alpha_{k,s}^{\nav}g_{k,s},
\qquad k=1,2,
\end{equation}
then
\begin{equation}
\ph(z_{2,s}z_{1,s}^{*})
=\ph(\alpha_{2,s}^{\nav}(\alpha_{1,s}^{\nav})^{*})
T_{12,s}^{A,\rm mod}.
\label{eq:asynchronous-product}
\end{equation}
Only when the two branches share exactly the same complex navigation factor, or when the ratio of those factors has been removed independently, does the data conjugate product directly equal the array endpoint phase factor.

\begin{definition}[Data-observable same-source phase relation]
Two branches are called strictly synchronized same-source branches if they are formed for the same satellite from the same element samples, time reference, NCOs, data-bit processing, and coherent integration operator. Their noiseless outputs then share one and the same \(\alpha_s^{\nav}\), and a relative endpoint phase factor can be constructed directly from their conjugate product.
\label{def:observable-transport}
\end{definition}

This distinction separates endpoint algebra, path transport, and observability. A model may compare endpoint phases; PTFE additionally requires an admissible nonzero path; and a direct comparison in data requires a common navigation factor, synchronization, nonzero responses, and other experimentally testable conditions. An atomic digital switch with no modeled continuous weight path may use the algebraic inverse endpoint factor, but that operation is not a PTFE path solution.

\subsection{Primacy of Genuine Degrees of Freedom and Over-Quotienting}
A mathematically writable group action is not automatically a statistical nuisance. Whether quotienting is legitimate must be determined jointly by the physical model, synchronization relations, and task objective.

\begin{theorem}[Local rank loss caused by over-quotienting]
Let \(h:\calE\to\calY\) be a response mapping, let \(G\) be the genuine nuisance group, and let \(\widetilde G\supseteq G\) be an artificially enlarged group. Define
\begin{equation}
L=\dd(\pi_G\circ h)_e,
\qquad
R=\dd p_{[h(e)]_G},
\label{eq:quotient-linear-maps}
\end{equation}
where \(p:\calY/G\to\calY/\widetilde G\) is the natural projection. Then
\begin{equation}
\rank(RL)
=\rank L
-\dim\bigl(\Range L\cap\Ker R\bigr).
\label{eq:overquotient-rank-loss}
\end{equation}
\label{thm:overquotient}
\end{theorem}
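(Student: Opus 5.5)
The identity \Cref{eq:overquotient-rank-loss} is a statement about the composition of two linear maps between finite-dimensional tangent spaces. I would therefore prove it by pure linear algebra, after one preliminary step that makes the objects well defined. Throughout, I take both quotients \(\calY/G\) and \(\calY/\widetilde G\) to be smooth manifolds near the relevant points, so that \(\pi_G\) and \(p\) are smooth maps there. This is the regular-orbit hypothesis: the point lies outside \(\Sigma^G\) for both actions. Since \(G\subseteq\widetilde G\), every \(G\)-orbit lies inside a \(\widetilde G\)-orbit. Hence \(p([y]_G)=[y]_{\widetilde G}\) is well defined and \(p\circ\pi_G=\pi_{\widetilde G}\). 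It follows that \(RL=\dd(\pi_{\widetilde G}\circ h)_e\) by the chain rule, so \(RL\) is exactly the differential of the over-quotiented response. This identification justifies reading the formula as rank loss.

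The core step is the standard rank identity for a composition. Let \(V=T_e\calE\), \(W=T_{[h(e)]_G}(\calY/G)\) and \(U=T_{[h(e)]_{\widetilde G}}(\calY/\widetilde G)\), with \(L:V\to W\) and \(R:W\to U\). The image of \(RL\) is \(R(\Range L)\). Restrict \(R\) to the subspace \(\Range L\), and apply rank--nullity to \(R|_{\Range L}\). Its kernel is \(\Range L\cap\Ker R\), so
\[
\rank(RL)=\dim R(\Range L)=\dim\Range L-\dim(\Range L\cap\Ker R),
\]
which is \Cref{eq:overquotient-rank-loss}.

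To interpret the formula, I would also identify \(\Ker R\). At a regular point, \(\Ker R\) is the tangent space at \([h(e)]_G\) of the fiber \(p^{-1}([h(e)]_{\widetilde G})\). That fiber is the image of the \(\widetilde G\)-orbit in \(\calY/G\). So \(\Ker R\) consists of the spurious directions generated by \(\widetilde{\mathfrak g}\) modulo \(\mathfrak g\). The loss term therefore counts genuinely informative first-order response directions that the enlarged action wrongly collapses, and it vanishes when \(\widetilde G=G\).

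The only real obstacle is the regularity caveat. On \(\Sigma^G\), the orbit type can jump, for instance when the enlarged group acts with a varying stabilizer. There the quotient is only stratified, and the differentials \(L\) and \(R\) must be interpreted on the stratum containing the point. I would state the result for points where both quotients are locally manifolds and \(p\) is a submersion between them. The algebraic identity then holds verbatim, and nothing beyond rank--nullity is needed.
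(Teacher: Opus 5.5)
Your proof is correct and follows the paper's own route: the paper's entire argument is rank--nullity applied to \(R\) restricted to \(\Range L\), which is exactly your core step. Your added remarks on \(p\circ\pi_G=\pi_{\widetilde G}\) and on requiring both quotients to be smooth away from \(\Sigma^G\) state hypotheses the paper leaves implicit; they justify reading the identity as rank loss but do not change the argument.
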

\begin{proof}
Apply the rank--nullity theorem to the restriction of \(R\) to \(\Range L\).
\end{proof}
If the intersection is nonzero, the newly quotiented directions were distinguishable by the response; over-quotienting therefore removes real information. A smaller quotient space is not automatically a more intrinsic description.

\subsection{Three Types of Common-Scale Freedom}
For \(\bmg\in\C^K\setminus\{0\}\), common nuisance actions and their quotient spaces are listed in \Cref{tab:scale-quotients}.

\begin{table}[htbp]
\centering
\caption{Common-scale nuisance actions and their exact quotient spaces}
\label{tab:scale-quotients}
\small
\begin{tabularx}{\textwidth}{>{\raggedright\arraybackslash}p{0.22\textwidth}>{\raggedright\arraybackslash}p{0.29\textwidth}>{\raggedright\arraybackslash}X}
\toprule
Genuine nuisance & Quotient space & Information removed and retained\\
\midrule
Common complex scale \(c\in\C^{\times}\) & \((\C^K\!\setminus\!\{0\})/\C^{\times}\cong\CP^{K-1}\) & Removes common magnitude and phase; retains branch-amplitude ratios and relative phases\\
Common phase only \(\ee^{\jj\alpha}\) & \((\C^K\!\setminus\!\{0\})/\Uone\) & Removes common phase; total magnitude remains observable\\
Common positive scale only \(a>0\) & \((\C^K\!\setminus\!\{0\})/\R_{+}\cong S^{2K-1}\) & Removes total scale; retains overall phase\\
\bottomrule
\end{tabularx}
\end{table}

Use of \(\CP^{K-1}\) requires a physical reason for a common nonzero complex scale to be unidentifiable. If only the common phase is unknown, a complex-projective quotient additionally and incorrectly removes the total magnitude. If each branch has its own unknown independent phase bias, a single diagonal \(\Uone\) action is insufficient to represent the genuine nuisance.

\subsection{Boundary Between Navigation and Auxiliary Tasks}
Whether a phase quantity may be quotiented out depends on the task:
\begin{enumerate}
\item In the primary carrier and RTK tasks, propagation phase, frequency, and integer ambiguity are quantities of interest and cannot be removed as one common complex scale.
\item In an auxiliary task based on synchronized same-source branches of one satellite, the shared navigation complex factor is a nuisance for extracting interbranch relative responses and may be eliminated by conjugate products, ratios, or a maximal invariant.
\item The composite array-response phase is satellite dependent. Its observer-state-induced part must be identified, transported, and inverted, while its physical-state dependence must remain typed as such; neither part may be mistaken for ordinary scalar propagation phase or expected to vanish automatically under multisatellite differencing.
\end{enumerate}

For a normalized phase vector of \(S\) satellites,
\begin{equation}
\bm h=(h_1,\ldots,h_S)\in\Uone^S=\Torus^S,
\end{equation}
a common receiver phase reference induces a diagonal action, and the inter-satellite phase space is therefore
\begin{equation}
\Torus^S/\Uone_{\rm diag}\cong\Torus^{S-1}.
\label{eq:multisatellite-phase-quotient}
\end{equation}
Selecting a reference satellite is merely a local coordinate choice. It removes a genuinely common phase, but the inter-satellite array phase \(h_s^A/h_r^A\) generally remains.
\section{Same-Source Relative Transport, Absolute Closure, and Complete Noise Propagation}
\subsection{Synchronized Same-Source Observation Family}
For the same satellite \(s\), physical epoch, and coherent data block, \(K\) synchronized weight vectors form
\begin{equation}
\bmz_s=\alpha_s^{\nav}(O)\bmg_s+\bmn_s,
\qquad
\bmg_s=[g_{s,1},\ldots,g_{s,K}]\trans.
\label{eq:same-source-family}
\end{equation}
The defining property of ``same source'' is exact sharing of the common complex navigation factor and the correlator operator, not independence of the noise. The internal reference element does not enter the physical output; it changes only the local coordinate representation of \(\bmc_s^{(O)}\).

For a working branch \(w\) and an anchor branch \(a\), the noiseless model is
\begin{equation}
 z_{w,s}^{(0)}=\alpha_s^{\nav}g_{w,s},
\qquad
 z_{a,s}^{(0)}=\alpha_s^{\nav}g_{a,s}.
\label{eq:anchor-working-noiseless}
\end{equation}
The anchor branch is a weighted statistical branch, not a raw reference element endowed in advance with a privileged physical identity. Relative alignment requires only its synchronized observation; absolute restoration additionally requires its composite phase to be closed by a model, independent calibration, or informed joint-state estimation.

\subsection{Noiseless Same-Source Cancellation and Absolute Closure}
Write the anchor-to-working endpoint phase factor in the original SN-ASMO notation as
\begin{equation}
T_{aw,s}^{A}:=h_{w,s}^{A}(h_{a,s}^{A})^{-1}.
\label{eq:anchor-working-endpoint-factor}
\end{equation}
\begin{theorem}[Noiseless same-source endpoint phase-relation identity]
If \Cref{eq:anchor-working-noiseless} holds exactly and \(\alpha_s^{\nav},g_{a,s},g_{w,s}\neq0\), then
\begin{equation}
\ph(z_{w,s}^{(0)}(z_{a,s}^{(0)})^{*})
=\ph(g_{w,s}g_{a,s}^{*})
=T_{aw,s}^{A}.
\label{eq:same-source-identity}
\end{equation}
The common navigation amplitude and phase cancel exactly, without requiring independent carrier tracking by any raw array element. If an admissible nonzero path between the branch states is supplied, the same factor is the PTFE transport along that path; otherwise it remains an exact algebraic endpoint relation.
\label{thm:same-source-identity}
\end{theorem}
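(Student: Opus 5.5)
The plan is a direct algebraic computation with the phase element \(\ph\) from \Cref{eq:phase-element-definition}, followed by a case split on whether a path is supplied.

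\textbf{Step 1 (the conjugate product).} Substituting \Cref{eq:anchor-working-noiseless} gives
\[
z_{w,s}^{(0)}(z_{a,s}^{(0)})^{*}=\abs{\alpha_s^{\nav}}^2\,g_{w,s}g_{a,s}^{*}.
\]
Since \(\alpha_s^{\nav},g_{a,s},g_{w,s}\neq0\), this product is nonzero. I will use that \(\ph\) is a group homomorphism \(\C^\times\to\Uone\) and that \(\ph(\lambda)=1\) for \(\lambda>0\). These two facts give the first equality of \Cref{eq:same-source-identity}. The common amplitude \(A_s\) and the phase \(\phi_s^{\nav}\) disappear because \(\alpha_s^{\nav}(\alpha_s^{\nav})^{*}\) is real and positive. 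This cancellation is exact, so no raw element has to track the carrier.

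\textbf{Step 2 (the endpoint factor).} Using the polar decompositions \(g_{k,s}=\varrho_{k,s}^Ah_{k,s}^A\) from \Cref{eq:polar-composite-response} and multiplicativity again,
\[
\ph(g_{w,s}g_{a,s}^{*})=h_{w,s}^A\,\overline{h_{a,s}^A}=h_{w,s}^A(h_{a,s}^A)^{-1}.
\]
The last step holds because inverse equals conjugate on \(\Uone\). By \Cref{eq:anchor-working-endpoint-factor} this is \(T_{aw,s}^A\). Equivalently, it is \(T^{A,\rm mod}_{12,s}\) from \Cref{eq:model-defined-transport} with \(e_1,e_2\) the anchor and working states. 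This is also the special case of \Cref{eq:asynchronous-product} in which the ratio of navigation factors equals \(1\).

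\textbf{Step 3 (path transport versus endpoint relation).} Suppose a piecewise-\(C^1\) path \(\gamma\subset\calD_s^\times\) from the anchor state to the working state is supplied. Then I invoke \Cref{thm:snasmo-ptfe}: \Cref{eq:endpoint-transport} shows that \(T_s^A[\gamma]=h_s^A(\gamma(1))h_s^A(\gamma(0))^{-1}\), which equals \(T_{aw,s}^A\). The answer is path independent because \(h_s^A\) is globally defined on \(\calD_s^\times\), so the holonomy is trivial. Any integer winding of the real lift lives in \(\int_\gamma\omega_s^A\), not in the \(\Uone\) element. If no path is supplied, Steps 1–2 still hold pointwise, and the factor is only the model-defined algebraic relation of the Definition.

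\textbf{Main obstacle.} The only delicate point is this last distinction, not the algebra. One must make sure that calling the factor a PTFE transport needs the path to stay in \(g_s\neq0\) and a single connected component of the domain. For the equality itself, continuity of \(h_s^A\) along the path suffices.
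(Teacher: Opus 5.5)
Your proposal is correct and follows the paper's route. The paper's proof is just your Step 1: write $z_{w,s}^{(0)}(z_{a,s}^{(0)})^{*}=\abs{\alpha_s^{\nav}}^2g_{w,s}g_{a,s}^{*}$ and normalize to unit modulus; your Steps 2--3 only make explicit the multiplicativity of $\ph$ and the appeal to the endpoint formula \Cref{eq:endpoint-transport} of \Cref{thm:snasmo-ptfe}, both of which the paper leaves implicit. Two small remarks: any supplied path automatically stays in a single component, and $\int_\gamma\omega_s^A$ needs the piecewise-$C^1$ hypothesis you already impose, not merely continuity of $h_s^A$.
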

\begin{proof}
Since \(z_{w,s}^{(0)}(z_{a,s}^{(0)})^{*}=\abs{\alpha_s^{\nav}}^2g_{w,s}g_{a,s}^{*}\), normalization to unit modulus gives the result.
\end{proof}

The endpoint relation is sufficient for branch alignment but not for absolute restoration. Since \(T_{aw,s}^{A}=h_{w,s}^{A}(h_{a,s}^{A})^{-1}\),
\begin{equation}
(T_{aw,s}^{A})^{-1}z_{w,s}^{(0)}
=\alpha_s^{\nav}\abs{g_{w,s}}h_{a,s}^{A}.
\label{eq:relative-alignment-to-anchor}
\end{equation}
Thus relative inverse transport maps the working branch into the anchor branch's phase fiber and is sufficient for a phase-continuous weight switch into an existing PLL coordinate. It does not remove the anchor phase.

Only if the absolute composite phase \(h_{a,s}^A\) of the anchor has been closed by a model, independent calibration, or an informed joint-state prior can one additionally form
\begin{equation}
(h_{a,s}^{A})^{-1}(T_{aw,s}^{A})^{-1}z_{w,s}^{(0)}
=\alpha_s^{\nav}\abs{g_{w,s}}.
\label{eq:absolute-restoration-after-anchor}
\end{equation}
Equivalently, the absolute working-branch phase then satisfies
\begin{equation}
 h_{w,s}^A=T_{aw,s}^Ah_{a,s}^A.
\label{eq:absolute-closure-from-anchor}
\end{equation}
Without this anchor closure, \Cref{thm:absolute-unidentifiable} leaves an unknown constant phase: temporal continuity propagates it but cannot determine it. If it contains a noninteger bias, subsequent relative transport does not make that bias disappear.

\subsection{Branch Timing Mismatch and High-Dynamic Residuals}
Suppose that the working branch is formed at \(t+\delta t\) and the anchor branch at \(t\). In addition to the desired same-epoch relative phase, the noiseless product phase contains temporal residuals from both the navigation phase and the working-branch array phase:
\begin{equation}
\Delta\phi_{\rm sync}
=\phi_s^{\nav}(t+\delta t)-\phi_s^{\nav}(t)
+\widetilde\psi_{w,s}^{A}(t+\delta t)
-\widetilde\psi_{w,s}^{A}(t).
\label{eq:timing-mismatch-exact}
\end{equation}
A second-order expansion gives
\begin{equation}
\begin{aligned}
\Delta\phi_{\rm sync}
\approx{}&
\left(2\pi f_{D,s}+\dot{\widetilde\psi}_{w,s}^{A}\right)\delta t\\
&+\frac12
\left(2\pi\dot f_{D,s}+\ddot{\widetilde\psi}_{w,s}^{A}\right)\delta t^2.
\end{aligned}
\label{eq:timing-mismatch-second-order}
\end{equation}
Thus, a common clock and common NCO must still be realized as experimentally verifiable consistency of sample time stamps. If the allowable synchronization-phase error is \(\epsilon_{\rm sync}\) and \(2\pi f_{D,s}+\dot{\widetilde\psi}_{w,s}^{A}\neq0\), a first-order sufficient condition is
\begin{equation}
\abs{\delta t}
\le
\frac{\epsilon_{\rm sync}}
{\abs{2\pi f_{D,s}+\dot{\widetilde\psi}_{w,s}^{A}}}.
\label{eq:timing-tolerance}
\end{equation}
If the first-order coefficient vanishes or nearly cancels, \Cref{eq:timing-mismatch-second-order} must be used instead. Under high dynamics, timing tolerances cannot be designed from geometric Doppler alone; the composite array-phase rate must also be included.

\subsection{Noise and Information Invariance Under a Deterministic \texorpdfstring{$\Uone$}{U(1)} Rotation}
\begin{theorem}[Invariance of additive-noise power]
For any complex random variable \(n\) and deterministic \(T\in\Uone\), let \(\widetilde n=T^{-1}n\). Then, samplewise,
\begin{equation}
\abs{\widetilde n}=\abs n,
\qquad
\E\abs{\widetilde n}^2=\E\abs n^2.
\label{eq:u1-noise-invariance}
\end{equation}
In the vector case, if \(\bm U_T\) is diagonal and unitary, then
\begin{equation}
\bm C_{\widetilde n}
=\bm U_T\herm\bm C_n\bm U_T,
\label{eq:u1-covariance-invariance}
\end{equation}
so the covariance eigenvalues, trace, and rank are preserved.
\label{thm:u1-noise}
\end{theorem}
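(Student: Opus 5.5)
The scalar case is immediate from the definition of \(\Uone\). Since \(T\) is deterministic with \(\abs{T}=1\), also \(\abs{T^{-1}}=1\), and multiplicativity of the complex modulus gives \(\abs{\widetilde n}=\abs{T^{-1}}\abs{n}=\abs{n}\) pointwise on the sample space. No distributional assumption on \(n\) is needed: not Gaussianity, not properness, not even zero mean. Taking expectations of \(\abs{\widetilde n}^2=\abs n^2\) gives the second-moment identity whenever \(\E\abs n^2\) exists, and both sides are \(+\infty\) otherwise.

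For the vector case I will write \(\widetilde{\bm n}=\bm U_T\herm\bm n\) with \(\bm U_T=\diag(T_1,\ldots,T_K)\). Determinism lets \(\bm U_T\) be pulled out of the expectation. The mean then transforms as \(\E\widetilde{\bm n}=\bm U_T\herm\E\bm n\), so centering commutes with the rotation. This gives
\begin{equation*}
\bm C_{\widetilde n}=\E\bigl[(\widetilde{\bm n}-\E\widetilde{\bm n})(\widetilde{\bm n}-\E\widetilde{\bm n})\herm\bigr]=\bm U_T\herm\bm C_n\bm U_T .
\end{equation*}
I will note that this matches the displayed form up to the convention \(T^{-1}=T^{*}\), i.e.\ \(\bm U_T^{-1}=\bm U_T\herm\). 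Because \(\bm U_T\) is unitary, this is a unitary similarity. Similar matrices have the same characteristic polynomial, hence the same eigenvalues, and therefore the same trace and rank. Rank is also preserved directly because \(\bm U_T\) is invertible. If the statistical model also uses the pseudo-covariance, it becomes \(\bm U_T\herm\widetilde{\bm C}_n\bm U_T^{*}\), which preserves propriety. I will mention this in a remark only, since it is not claimed.

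The only real point is the determinism hypothesis, not a calculation. If \(T\) were estimated from the same data, \(\E[T^{-1}n]\) could not be factored, and the cross-moments between \(T\) and \(n\) would enter. This is exactly the caveat of \Cref{thm:estimated-inverse}. I will state explicitly where determinism is used, namely in pulling \(\bm U_T\) out of the expectation, so that the theorem cannot be misapplied to data-derived phase elements.
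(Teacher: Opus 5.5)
Your proposal is correct and follows the paper's proof: \(\abs{T}=1\) gives samplewise magnitude invariance, and in the vector case the covariance transforms by deterministic unitary conjugation \(\bm U_T\herm\bm C_n\bm U_T\), which preserves eigenvalues, trace, and rank. Your extra remarks are sound refinements: centering commutes with the rotation, the pseudo-covariance transforms as stated, and determinism is exactly what lets you pull \(\bm U_T\) out of the expectation. You could also note that the argument never uses diagonality of \(\bm U_T\), only that it is deterministic and unitary.
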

\begin{proof}
Since \(\abs T=1\), scalar magnitudes are unchanged. In the vector case, covariance transforms by deterministic unitary conjugation, which preserves eigenvalues, trace, and rank.
\end{proof}

\begin{proposition}[Fisher-information invariance under a deterministic bijection]
If \(T\) is known and independent of the parameter to be estimated, and if \(z\mapsto T^{-1}z\) is bijective, then the transformation leaves the Fisher information for that parameter unchanged.
\label{prop:u1-fisher}
\end{proposition}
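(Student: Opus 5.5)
The plan is to prove \Cref{prop:u1-fisher} by the change-of-variables formula for densities. The key point is that a known, parameter-free bijection changes the likelihood only by a Jacobian factor that does not depend on the parameter, and such a factor is invisible to the score.

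Let \(\bm\theta\) denote the parameter of interest, and let \(p(\bmz\mid\bm\theta)\) be the density of the observation on \(\C^K\cong\R^{2K}\) with respect to Lebesgue measure. Define the transformed observation \(\widetilde{\bmz}=\Phi(\bmz):=T^{-1}\bmz\). By hypothesis \(T\) is deterministic and independent of \(\bm\theta\); in the vector case it is the diagonal unitary \(\bm U_T\herm\) of \Cref{thm:u1-noise}. Hence \(\Phi\) is a fixed linear bijection of \(\R^{2K}\).

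The first step is to write the transformed density. For every \(\widetilde{\bmz}\),
\begin{equation*}
\widetilde p(\widetilde{\bmz}\mid\bm\theta)
=p\bigl(\Phi^{-1}(\widetilde{\bmz})\mid\bm\theta\bigr)\,
\abs{\det D\Phi^{-1}},
\end{equation*}
where the Jacobian determinant does not depend on \(\bm\theta\). For a unitary rotation it in fact equals one, but only its \(\bm\theta\)-independence is needed.

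The second step is to compare scores. Taking logarithms gives
\begin{equation*}
\nabla_{\bm\theta}\ln\widetilde p(\widetilde{\bmz}\mid\bm\theta)
=\nabla_{\bm\theta}\ln p(\bmz\mid\bm\theta)
\quad\text{with } \bmz=\Phi^{-1}(\widetilde{\bmz}).
\end{equation*}
The Fisher information is the expectation of the outer product of the score. Under the law of \(\widetilde{\bmz}\), this expectation equals the corresponding expectation under the law of \(\bmz\), because \(\widetilde{\bmz}\) is the pushforward of \(\bmz\). Therefore \(\widetilde{\bm I}(\bm\theta)=\bm I(\bm\theta)\).

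To cover bijections that are not smooth, or laws without a Lebesgue density, I would instead note that a measurable bijection with measurable inverse generates the same \(\sigma\)-algebra. The family \(\{P_{\bm\theta}\circ\Phi^{-1}\}\) is then an isomorphic relabelling of the original experiment, so the likelihood ratios coincide, and so do the scores. Equivalently, \(\Phi\) is a sufficient statistic in both directions. Fisher information does not increase under any measurable map by the data-processing inequality, and applying that inequality to both \(\Phi\) and \(\Phi^{-1}\) gives equality.

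The main obstacle is not the computation but making the hypotheses explicit. The argument fails if \(T\) is estimated from the same data, since then \(\Phi\) depends on \(\bmz\) in a way tied to \(\bm\theta\) (cf.\ \Cref{thm:estimated-inverse} and \Cref{ass:conditional-weight}). It also fails if \(T\) depends on \(\bm\theta\), because the transformed density then acquires \(\bm\theta\)-dependence through \(\Phi\) and an extra score term appears. I would state both exclusions and assume the regularity conditions under which the Fisher information exists; these hold on \(\calD_{\rm safe}\setminus\Sigma^{\rm stat}\).
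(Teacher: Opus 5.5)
Your proposal is correct and follows essentially the same route as the paper: work in the real representation and observe that the known rotation changes the likelihood only by a parameter-independent (here unit-modulus) Jacobian factor, so the score and hence the Fisher information are unchanged. Your extra argument, which treats the bijection as an isomorphic relabelling of the experiment and applies the data-processing inequality in both directions, is a valid generalization to nonsmooth bijections; the paper's deterministic linear setting does not need it.
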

\begin{proof}
In a real representation, the transformation is a parameter-independent rotation with Jacobian determinant of unit absolute value. The likelihoods therefore differ only by a parameter-independent Jacobian factor, leaving the score and Fisher information unchanged.
\end{proof}

It is therefore essential to distinguish the following mechanisms: array weights can suppress interference in the spatial domain; same-source correlation can cancel common-mode phase noise; temporal averaging can reduce variance; but an exact \(\Uone\) inverse action by itself merely aligns the phase reference and provides no additional suppression of additive-noise power.

\subsection{Local Phase Noise Under a Proper Complex Gaussian Model}
Consider two synchronized branches
\begin{equation}
 z_i=\mu_i+n_i,
\qquad
\mu_i=A_i\ee^{\jj\phi_i},
\qquad
A_i>0,
\quad i\in\{a,w\}.
\label{eq:two-branch-gaussian}
\end{equation}
Let \(\bm n=[n_a,n_w]\trans\) be a zero-mean proper complex random vector, with
\begin{equation}
C_{ij}=\E[n_i n_j^{*}].
\label{eq:complex-noise-cov}
\end{equation}
At high signal-to-noise ratio and away from a response zero,
\begin{equation}
\epsilon_i
=\Arg(z_i)-\phi_i
\approx\frac{\Ima(n_i\ee^{-\jj\phi_i})}{A_i}.
\label{eq:phase-error-linearization}
\end{equation}
The covariance is
\begin{equation}
\cov(\epsilon_i,\epsilon_j)
\approx
\frac{1}{2A_iA_j}
\Rea\!\left[
\ee^{-\jj(\phi_i-\phi_j)}C_{ij}
\right].
\label{eq:phase-error-covariance}
\end{equation}
For improper noise, the pseudo-covariance \(P_{ij}=\E[n_i n_j]\) must also be retained.

Let the true relative phase be \(\delta=\phi_w-\phi_a\), and define
\begin{equation}
\widehat\delta=\Arg(z_wz_a^{*}),
\qquad
e_T=\widehat\delta-\delta.
\end{equation}
To first order,
\begin{equation}
 e_T\approx\epsilon_w-\epsilon_a,
\label{eq:transport-error-first-order}
\end{equation}
and hence
\begin{equation}
\sigma_T^2
\approx
\frac{C_{ww}}{2A_w^2}
+\frac{C_{aa}}{2A_a^2}
-\frac{\Rea(\ee^{-\jj\delta}C_{wa})}{A_wA_a}.
\label{eq:transport-phase-variance}
\end{equation}
Define the phase-domain correlation coefficient by
\begin{equation}
\rho_\phi
=\frac{\cov(\epsilon_w,\epsilon_a)}{\sigma_w\sigma_a}.
\end{equation}
Then
\begin{equation}
\sigma_T^2
\approx\sigma_w^2+\sigma_a^2-2\rho_\phi\sigma_w\sigma_a.
\label{eq:transport-variance-correlation}
\end{equation}
Positive, phase-aligned correlation may reduce the relative-phase variance; for independent noise the cross term is zero; adverse correlation may increase the variance. Same-source does not mean automatically noise-reducing.

Conditioned on given weights, if both branches are formed from the same element-noise vector \(\bm n_e\),
\begin{equation}
 n_w=\bmw_w\herm\bm n_e,
\qquad
 n_a=\bmw_a\herm\bm n_e,
\end{equation}
then
\begin{equation}
 C_{wa\mid W}=\bmw_w\herm\bm R_e\bmw_a.
\label{eq:branch-cross-covariance}
\end{equation}
If the weights are estimated from the same data block, write \(\widehat{\bmw}_i=\bar{\bmw}_i+\delta\bmw_i\). The first-order branch perturbation is
\begin{equation}
\delta z_i
\approx\bar{\bmw}_i\herm\bm n_e
+(\delta\bmw_i)\herm\bm\mu_e,
\label{eq:random-weight-first-order}
\end{equation}
and the second order also contains \((\delta\bmw_i)\herm\bm n_e\). An unconditional covariance must therefore propagate the weight error, element noise, and their cross terms jointly.

\subsection{Exact Phase Replacement Under Same-Statistic Inverse Transport}
\begin{theorem}[Exact same-statistic phase replacement]
If
\begin{equation}
\widehat T_{aw,s}^{A}=\ph(z_{w,s}z_{a,s}^{*})
\label{eq:same-stat-transporter}
\end{equation}
is constructed from the same nonzero pair of complex coherent statistics \((z_{a,s},z_{w,s})\) and immediately applied to the same \(z_{w,s}\), then, without a high-SNR approximation,
\begin{equation}
(\widehat T_{aw,s}^{A})^{-1}z_{w,s}
=\abs{z_{w,s}}\ph(z_{a,s}).
\label{eq:exact-phase-replacement}
\end{equation}
\label{thm:exact-phase-replacement}
\end{theorem}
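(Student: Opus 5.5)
The identity is purely algebraic, so I will verify it by direct computation using only the definition of the phase element \(\ph(z)=z/\abs{z}\) in \Cref{eq:phase-element-definition}. I will not need any statistical assumption or any linearization. The only hypothesis to use is that both \(z_{a,s}\) and \(z_{w,s}\) are nonzero. This makes \(z_{w,s}z_{a,s}^{*}\neq0\), so \(\widehat T_{aw,s}^{A}\) in \Cref{eq:same-stat-transporter} is a well-defined element of \(\Uone\). Its inverse is then its complex conjugate.

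First, I write
\[
\widehat T_{aw,s}^{A}=\frac{z_{w,s}z_{a,s}^{*}}{\abs{z_{w,s}}\abs{z_{a,s}}},
\]
using multiplicativity of the modulus, \(\abs{z_{w,s}z_{a,s}^{*}}=\abs{z_{w,s}}\abs{z_{a,s}}\). Inverting gives
\[
(\widehat T_{aw,s}^{A})^{-1}=\frac{z_{w,s}^{*}z_{a,s}}{\abs{z_{w,s}}\abs{z_{a,s}}}.
\]
Second, I multiply this by \(z_{w,s}\) and use \(z_{w,s}^{*}z_{w,s}=\abs{z_{w,s}}^2\). The result is \(\abs{z_{w,s}}\,z_{a,s}/\abs{z_{a,s}}=\abs{z_{w,s}}\ph(z_{a,s})\), which is \Cref{eq:exact-phase-replacement}. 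Equivalently, \(\ph\) is a group homomorphism \(\C^\times\to\Uone\). Hence \(\ph(z_{w,s}z_{a,s}^{*})=\ph(z_{w,s})\ph(z_{a,s})^{-1}\), and dividing \(z_{w,s}=\abs{z_{w,s}}\ph(z_{w,s})\) by this factor leaves \(\abs{z_{w,s}}\ph(z_{a,s})\).

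There is no genuine analytic obstacle. The point I would state carefully is the interpretation: the identity holds samplewise, for every noise realization, precisely because the transporter is built from the same statistic it is applied to. The working branch's noisy phase is therefore cancelled exactly and replaced by the anchor's noisy phase, while its noisy magnitude is kept. I would add a remark that this is why no high-SNR approximation is needed, in contrast with \Cref{thm:estimated-inverse}. I would also note that the result is an exact phase replacement rather than a removal of noise, consistent with \Cref{thm:u1-noise}.
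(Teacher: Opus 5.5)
Your proof is correct and is essentially the paper's argument: the paper likewise uses $\ph(z_{w,s}z_{a,s}^{*})=\ph(z_{w,s})\ph(z_{a,s})^{*}$ and then multiplies directly. Your explicit modulus computation and your remark that the identity holds for every noise realization add detail but do not change the route.
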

\begin{proof}
Since \(\ph(z_{w,s}z_{a,s}^{*})=\ph(z_{w,s})\ph(z_{a,s})^{*}\), direct multiplication gives the result.
\end{proof}

This theorem exposes an important statistical boundary: direct same-block inverse transport replaces the phase of the working branch, sample by sample, with the noisy phase of the anchor branch; it does not automatically produce a phase less noisy than that of the anchor. Noise from the working branch still enters the amplitude through \(\abs{z_w}\). Simultaneously retaining the low residual-interference advantage of the working branch and a low phase-noise level requires a high-quality anchor, a joint likelihood, short-window smoothing, or state filtering.

\subsection{Residual Propagation in a General Estimated Inverse-Transport Architecture}
Let the true relative phase be \(\delta\), with transport estimate
\begin{equation}
\widehat\delta=\delta+e_T.
\end{equation}
The working branch used for phase discrimination is
\begin{equation}
 z_w^{(d)}=\mu_w^{(d)}+n_w^{(d)},
\end{equation}
and its mean after ideal relative inverse transport is \(\mu_c=\ee^{-\jj\delta}\mu_w^{(d)}\). The actual inverse-transport quantity is
\begin{equation}
\widetilde z_w
=\ee^{-\jj\widehat\delta}z_w^{(d)}.
\label{eq:general-compensated-complex}
\end{equation}
Expansion yields
\begin{equation}
\begin{aligned}
\widetilde z_w
={}&\mu_c+\ee^{-\jj\delta}n_w^{(d)}
-\jj\mu_c e_T
-\jj\ee^{-\jj\delta}n_w^{(d)}e_T
-\frac12\mu_ce_T^2\\
&+O_p(\abs{\mu_c}\abs{e_T}^3+\abs{n_w^{(d)}}\abs{e_T}^2).
\end{aligned}
\label{eq:residual-complex-expansion}
\end{equation}
If only first-order terms are retained, the remainder can be written as
\begin{equation}
O_p\!\left(\abs{\mu_c}e_T^2+\abs{e_T}\abs{n_w^{(d)}}\right),
\label{eq:first-order-remainder}
\end{equation}
with no independent \(\abs n^2\) term; such a term appears only if \(\Arg(\cdot)\) is subsequently expanded nonlinearly.

The linearized discriminator residual is
\begin{equation}
 e_\phi^{\rm comp}
\approx\epsilon_w^{(d)}-e_T,
\label{eq:compensated-phase-error}
\end{equation}
and therefore
\begin{equation}
\sigma_{\phi,\rm comp}^2
=\sigma_{w,d}^2+\sigma_T^2
-2\cov(\epsilon_w^{(d)},e_T).
\label{eq:compensated-phase-variance}
\end{equation}
The last term cannot be assumed to vanish. Same-block processing, overlapping windows, and shared element samples generally create significant correlation.

More generally, if the transport error is a linear combination of working- and anchor-branch phase errors from overlapping windows, let
\begin{equation}
\bm\epsilon
=[\epsilon_w^{(d)},(\bm\epsilon_w^{(e)})\trans,
(\bm\epsilon_a^{(e)})\trans]\trans,
\qquad
e_\phi^{\rm comp}=\bm b\trans\bm\epsilon.
\end{equation}
Then
\begin{equation}
\sigma_{\phi,\rm comp}^2
=\bm b\trans\bm\Sigma_\epsilon\bm b.
\label{eq:overlap-window-covariance}
\end{equation}
This expression covers same-block, partially overlapping, leave-one-out cross-fitting, and independent-window processing within one formula.

\subsection{Noise--Dynamics Tradeoff in Stateful Inverse Transport}
Let the continuously lifted relative-phase observation be
\begin{equation}
 y_\delta=\delta+\epsilon_w-\epsilon_a,
\label{eq:relative-phase-observation}
\end{equation}
and let a linear operator \(\mathsf K\) produce \(\widehat\delta=\mathsf K y_\delta\). To first order, the residual relative to the anchor-phase fiber \(\phi^{\nav}+\psi_a^A\) is
\begin{equation}
 r_\phi
=(\mathsf I-\mathsf K)\delta
+(\mathsf I-\mathsf K)\epsilon_w
+\mathsf K\epsilon_a.
\label{eq:stateful-residual}
\end{equation}
The three terms represent insufficient tracking of the relative dynamics, untransported working-branch noise, and injection of anchor-branch noise, respectively. This residual is relative to the true propagation phase only after absolute anchor closure and application of \((h_a^A)^{-1}\); if the anchor phase is estimated, its error and all cross-covariances must be added.

If the processes are wide-sense stationary and \(\delta\) is independent of the measurement noise, the frequency-domain residual spectrum is
\begin{equation}
\begin{aligned}
S_r(\omega)={}&
\abs{1-K(\omega)}^2S_\delta(\omega)
+\abs{1-K(\omega)}^2S_{ww}(\omega)\\
&+\abs{K(\omega)}^2S_{aa}(\omega)
+2\Rea\!\left\{[1-K(\omega)]K^{*}(\omega)S_{wa}(\omega)\right\}.
\end{aligned}
\label{eq:stateful-residual-spectrum}
\end{equation}
For \(K=0\), the entire relative composite-response phase is retained. For \(K=1\), the relative dynamics are removed instantaneously, but the output phase is inherited from the anchor branch, consistently with \Cref{eq:exact-phase-replacement}.

Let \(x=\delta+\epsilon_w\) and \(y=\delta+\epsilon_w-\epsilon_a\). The frequency-by-frequency noncausal Wiener lower bound is
\begin{equation}
K_W(\omega)
=\frac{S_{xy}(\omega)}{S_{yy}(\omega)}
=\frac{S_\delta+S_{ww}-S_{wa}}
{S_\delta+S_{ww}+S_{aa}-S_{wa}-S_{aw}},
\label{eq:wiener-compensator}
\end{equation}
and the minimum residual spectrum is
\begin{equation}
S_{r,\min}
=S_{xx}-\frac{\abs{S_{xy}}^2}{S_{yy}}.
\label{eq:wiener-residual}
\end{equation}
This is only a statistical lower bound. A causal implementation must also satisfy latency, continuous-lift, and zero-avoidance requirements. A sensible engineering architecture is therefore to use instantaneous transport at state-transition boundaries to guarantee phase continuity, and state filtering during continuous evolution to manage the noise--dynamics tradeoff.

\subsection{Multiblock Coherent Estimation, Temporal Correlation, and Tracking-Loop Propagation}
For \(L\) blocks after data-bit wipeoff and common-phase derotation, let
\begin{equation}
\bar z_i=\sum_{\ell=1}^{L}c_\ell z_{i,\ell},
\qquad
\sum_{\ell=1}^{L}c_\ell=1.
\end{equation}
The effective complex-noise covariance is
\begin{equation}
\bar C_{ij}
=\sum_{\ell=1}^{L}\sum_{m=1}^{L}
 c_\ell c_m^{*}\E[n_{i,\ell}n_{j,m}^{*}].
\label{eq:multiblock-covariance}
\end{equation}
Only for independent, identically distributed blocks with \(c_\ell=1/L\) does \(\bar C_{ij}=C_{ij}/L\). Under temporal correlation, division by \(L\) is not valid.

For a multisatellite residual-phase vector
\begin{equation}
\bm e_\phi^{\rm comp}
=\bm\epsilon_w^{(d)}-\bm e_T,
\end{equation}
the complete covariance is
\begin{equation}
\bm\Sigma_\phi^{\rm comp}
=\bm\Sigma_{ww}^{(d)}+\bm\Sigma_{TT}
-\bm\Sigma_{wT}-\bm\Sigma_{Tw}.
\label{eq:multisatellite-residual-covariance}
\end{equation}
Inter-satellite correlation must be retained because the satellites may share element samples, residual interference, anchor hardware, NCOs, and estimation windows.

In a linearized PLL or carrier Kalman filter, \Cref{eq:multisatellite-residual-covariance} enters as the measurement covariance \(\bm R_{\phi,k}\):
\begin{align}
\bm P_{k|k-1}&=\bm F_k\bm P_{k-1|k-1}\bm F_k\trans+\bm Q_k,
\label{eq:pll-predict}\\
\bm K_k&=\bm P_{k|k-1}\bm H_k\trans
(\bm H_k\bm P_{k|k-1}\bm H_k\trans+\bm R_{\phi,k})^{-1},
\label{eq:pll-gain}\\
\bm P_{k|k}&=(\bm I-\bm K_k\bm H_k)\bm P_{k|k-1}.
\label{eq:pll-update}
\end{align}
The transport variance cannot be added mechanically at each epoch as independent white noise.

If chained relative transport is used without re-anchoring, the accumulated error
\begin{equation}
\bm\eta_K=\bm\eta_0+\sum_{k=0}^{K-1}\bm e_{T,k}
\end{equation}
has covariance
\begin{equation}
\bm\Sigma_{\eta,K}
=\bm\Sigma_{\eta,0}
+\sum_k\bm\Sigma_{e,k}
+\sum_{k\neq\ell}\cov(\bm e_{T,k},\bm e_{T,\ell}).
\label{eq:transport-chain-covariance}
\end{equation}
Directly anchoring the current working branch avoids random-walk accumulation. Chained continuity requires periodic re-anchoring and explicit propagation of the cross-covariances.

\subsection{Low-Amplitude and Non-Gaussian Boundaries, and Safe Switching}
When \(\abs{\mu_a}\) or \(\abs{\mu_w}\) approaches zero, the complex-phase distribution may become biased, heavy-tailed, nearly uniform, or branch-ambiguous. Safe statistics should satisfy at least
\begin{equation}
\abs{\bar z_a}\ge A_{a,\min},
\qquad
\abs{\bar z_w}\ge A_{w,\min},
\qquad
\abs{\bar z_w\bar z_a^{*}}\ge\Gamma_{\min}.
\label{eq:phase-statistic-gates}
\end{equation}
Define the transport coherence by
\begin{equation}
\chi_{wa}
=\frac{
\abs{\sum_\ell z_{w,\ell}z_{a,\ell}^{*}}}
{\sqrt{\sum_\ell\abs{z_{w,\ell}}^2
\sum_\ell\abs{z_{a,\ell}}^2}}.
\label{eq:transport-coherence}
\end{equation}
Below a prescribed threshold, the system should hold its state, inflate the covariance, change anchor, or return to a joint likelihood; a Gaussian phase approximation should not be continued.

For a \textbf{continuous weight path} \(\bmw(\tau)\), one must require
\begin{equation}
\min_{\tau\in[0,1]}\abs{g_s(\bmw(\tau))}\ge\varepsilon_s,
\quad \forall s\in\calS_{\rm tr}.
\label{eq:continuous-switch-safety}
\end{equation}
For a genuinely \textbf{atomic digital switch}, the system does not traverse an arbitrary interpolation path. For response-nonzero admissibility of that atomic switch, it is sufficient that both the old and new endpoints are safe and that the weights, satellite-wise transport states, and continuous-lift counters are committed at the same sample boundary:
\begin{equation}
\abs{g_s(\bmw_{\rm old})}\ge\varepsilon_s,
\qquad
\abs{g_s(\bmw_{\rm new})}\ge\varepsilon_s.
\label{eq:atomic-switch-safety}
\end{equation}
Applying the continuous-path condition mechanically to an atomic switch over-restricts feasible weights; conversely, an unsynchronized commit creates an artificial phase step.
\section{G2: Maximal Invariants, Complex-Projective Geometry, and Task Boundaries}
\subsection{Same-Source Multiweight Auxiliary Model}
For the same satellite and the same element samples, apply \(K\) synchronized auxiliary weight vectors to obtain
\begin{equation}
\bm y_s
=\alpha_s^{\nav}\bmg_s(\bmvartheta)+\bmn_s\in\C^K,
\label{eq:g2-model}
\end{equation}
where \(\bmvartheta=\bmvartheta(\bmxi,\bmeta)\in\R^p\) is the explicitly declared target subcoordinate for this auxiliary task, and \(\alpha_s^{\nav}\in\C^{\times}\) is the shared GNSS complex factor. Depending on the experiment, \(\bmvartheta\) may contain physical-axis coordinates, observer-axis coordinates, or a stated joint subset; its type must be declared before a Fisher-information or identifiability claim is interpreted. In the auxiliary relative-response task, \(\alpha_s^{\nav}\) is a common-complex-scale nuisance; in the primary carrier task, it carries the propagation phase of interest and cannot be removed as a whole.

\subsection{Maximal Invariant Under a Common Complex Scale}
For \(\bmg\in\C^K\setminus\{0\}\), define the rank-one orthogonal projector
\begin{equation}
\bm\Pi(\bmg)
=\frac{\bmg\bmg\herm}{\bmg\herm\bmg}.
\label{eq:projective-invariant}
\end{equation}
It satisfies
\begin{equation}
\bm\Pi(c\bmg)=\bm\Pi(\bmg),
\qquad c\in\C^{\times},
\label{eq:projective-invariance}
\end{equation}
and
\begin{equation}
\bm\Pi(\bmg_1)=\bm\Pi(\bmg_2)
\Longleftrightarrow
\bmg_2=c\bmg_1,
\quad c\in\C^{\times}.
\label{eq:maximal-invariant-equivalence}
\end{equation}
Thus, for a noiseless response or a parameterized mean model, \(\bm\Pi\) is a maximal invariant under the common-complex-scale action and parameterizes a complex line in \(\CP^{K-1}\).

\begin{remark}[A maximal invariant is not automatically a sufficient statistic]
The quantity \(\bm\Pi(\bmg)\) is a maximal invariant of the noiseless response or parameterized mean. Under a fixed additive-noise model, however, the distribution of the random variable \(\bm\Pi(\bm y)\) is generally non-Gaussian and is not automatically sufficient for \(\bmvartheta\). Statistical inference must begin with the joint likelihood and elimination of genuine nuisance parameters; the noise structure cannot be ignored merely because a group invariant has been constructed.
\label{rem:maximal-not-sufficient}
\end{remark}

\subsection{Local Ratio Coordinates and Atlas Switching}
On the local chart \(g_r\neq0\), define
\begin{equation}
\varpi_k^{(r)}=\frac{g_k}{g_r},
\qquad k\neq r.
\label{eq:projective-ratios}
\end{equation}
For \(K=2\), the unique complex coordinate is
\begin{equation}
\varpi^{(1)}
=\frac{g_2}{g_1}
=\varrho_{21}\ee^{\jj\Delta\psi_{21}},
\label{eq:two-branch-projective-coordinate}
\end{equation}
where
\begin{equation}
\varrho_{21}=\frac{\abs{g_2}}{\abs{g_1}},
\qquad
\Delta\psi_{21}=\Arg(g_2)-\Arg(g_1).
\end{equation}
This coordinate retains both relative amplitude and relative phase. Pre-discriminator inverse transport uses only its \(\Uone\) component, whereas branch-quality evaluation and identifiability analysis still require the amplitude component.

If the current denominator \(g_r=0\), the ratio chart fails. As long as \(\bmg\neq0\), another nonzero component can be selected. Thus, \(g_r=0\) is a chart singularity, whereas \(\bmg=0\) is a genuine failure of the joint projective state.

\subsection{Local Bridge Between G1 and G2}
On a region where \(g_kg_r\neq0\),
\begin{equation}
\dd\log\varpi_k^{(r)}
=\frac{\dd g_k}{g_k}-\frac{\dd g_r}{g_r}.
\label{eq:g1-g2-log-bridge}
\end{equation}
The real part is the differential of the relative log-amplitude, and the imaginary part is the relative-phase differential. In vector form,
\begin{equation}
\dd\bm\varpi^{(r)}
=\diag(\bm\varpi^{(r)})\dd\log\bm\varpi^{(r)}.
\label{eq:g1-g2-vector-bridge}
\end{equation}
If all local-coordinate components are nonzero, \(\diag(\bm\varpi^{(r)})\) is invertible, and the ratio-coordinate Jacobian and the complex-log-difference Jacobian have the same local rank. Their statistical condition numbers may nevertheless differ; near a response zero, complex-log coordinates strongly amplify noise.

\subsection{Two Distinct Quotient Levels for Multiple Satellites and Multiple Weights}
A multisatellite phase vector under a common receiver phase forms \(\Torus^{S-1}\), whereas the multiweight response of one satellite under a common GNSS complex factor forms \(\CP^{K-1}\). The two must not be conflated. Specifically,
\begin{equation}
\Torus^S/\Uone_{\rm diag}
\end{equation}
describes inter-satellite relations after removal of the common multisatellite phase, whereas
\begin{equation}
(\C^K\setminus\{0\})/\C^{\times}
\end{equation}
describes the relative responses among multiple weights for a single satellite after removal of their shared complex factor. Reference-satellite coordinates and reference-weight coordinates are both local chart choices and do not alter the intrinsic objects.

\section{Noise Whitening, Nuisance Projection, the Intrinsic Jacobian, and Identifiability}
\subsection{Raw Response Jacobian and Analysis Conditions}
Let \(\bmvartheta=\bmvartheta(\bmxi,\bmeta)\in\R^p\) be the task-declared physical, observer, or joint subcoordinate to be estimated, and let \(\bmg_s(\bmvartheta)\in\C^K\) be the synchronized multiweight response. The raw complex Jacobian is
\begin{equation}
\bm D_s
=\frac{\partial\bmg_s}{\partial\bmvartheta\trans}
\in\C^{K\times p}.
\label{eq:raw-response-jacobian}
\end{equation}
In the linear special case with a fixed weight matrix \(\bm W\),
\begin{equation}
\bm D_s
=\bm W\herm
\frac{\partial\bmc_s^{(O)}}{\partial\bmvartheta\trans}.
\label{eq:fixed-weight-jacobian}
\end{equation}
If the weights themselves depend on \(\bmvartheta\), the total derivative also contains
\begin{equation}
\left[
\frac{\partial\bm W\herm}{\partial\vartheta_1}\bmc_s^{(O)},
\ldots,
\frac{\partial\bm W\herm}{\partial\vartheta_p}\bmc_s^{(O)}
\right].
\label{eq:weight-dependent-jacobian}
\end{equation}
One must therefore state explicitly whether the analysis concerns identifiability of the physical state conditioned on the observer state, or joint identifiability of the physical and observer states. This section is conditioned on given weights; random weights are treated according to \Cref{ass:conditional-weight}.

\subsection{Effective Output Subspace and Whitening}
Let the output-noise covariance be
\begin{equation}
\bm C_s=\E[\bmn_s\bmn_s\herm]\succeq0.
\label{eq:output-noise-covariance}
\end{equation}
If \(\bm C_s\succ0\), one may take \(\bm L_s=\bm C_s^{-1/2}\). If \(\bm C_s\) is singular, decompose its positive-eigenvalue subspace as
\begin{equation}
\bm C_s
=\bm U_{s,+}\bm\Lambda_{s,+}\bm U_{s,+}\herm,
\qquad
\bm\Lambda_{s,+}\succ0,
\label{eq:positive-eigenspace}
\end{equation}
and define the effective whitening operator
\begin{equation}
\bm L_s
=\bm\Lambda_{s,+}^{-1/2}\bm U_{s,+}\herm,
\qquad
\bm L_s\bm C_s\bm L_s\herm=\bm I_{r_s}.
\label{eq:effective-whitener}
\end{equation}
Set
\begin{equation}
\widetilde{\bmg}_s=\bm L_s\bmg_s,
\qquad
\widetilde{\bm D}_s=\bm L_s\bm D_s.
\label{eq:whitened-response}
\end{equation}
Discarding \(\Ker\bm C_s\) is lossless only when the covariance singularity is caused by redundant weights, or when the signal and Jacobian directions lie in the positive-noise subspace. If the noise null space contains deterministic information, it must be handled separately as an exact constraint.

\subsection{Tangent Space of the Genuine Nuisance Orbit}
Define the real embedding of a complex vector by
\begin{equation}
\calR(\bm v)
=\begin{bmatrix}\Rea\bm v\\\Ima\bm v\end{bmatrix}.
\label{eq:real-embedding}
\end{equation}
Let \(\bm N_s\) be a real basis for the tangent space of the genuine nuisance orbit at \(\widetilde{\bmg}_s\), and define the orthogonal-complement projector
\begin{equation}
\bm P_{N_s}^{\perp}
=\bm I-\bm N_s(\bm N_s\trans\bm N_s)^{\dagger}\bm N_s\trans.
\label{eq:nuisance-projector}
\end{equation}
The intrinsic Jacobian is
\begin{equation}
\bm J_s^{\rm int}
=\bm P_{N_s}^{\perp}\calR(\widetilde{\bm D}_s).
\label{eq:intrinsic-jacobian}
\end{equation}
Its image consists of output variations that cannot be explained by motion along the nuisance orbit, and its kernel consists of state perturbations that the current observation block cannot distinguish.

Common orbit tangent spaces include
\begin{equation}
\bm N_{\phi,s}=\calR(\jj\widetilde{\bmg}_s)
\qquad\text{(common phase only)},
\label{eq:phase-nuisance-tangent}
\end{equation}
and
\begin{equation}
\bm N_{\C,s}
=[\calR(\widetilde{\bmg}_s),\ \calR(\jj\widetilde{\bmg}_s)]
\qquad\text{(common complex scale)}.
\label{eq:complex-scale-tangent}
\end{equation}
Only genuinely shared common directions may be removed. If an independent phase nuisance is assigned incorrectly to every branch, all interbranch relative-phase and transport information is deleted.

\subsection{Complex Projection for a Common Complex Scale}
For a common-complex-scale nuisance, the complex orthogonal projector is
\begin{equation}
\bm P_{\widetilde g_s}^{\perp}
=\bm I-
\frac{\widetilde{\bmg}_s\widetilde{\bmg}_s\herm}
{\widetilde{\bmg}_s\herm\widetilde{\bmg}_s}.
\label{eq:complex-projector}
\end{equation}
Define
\begin{equation}
\bm D_s^{\perp}
=\bm P_{\widetilde g_s}^{\perp}\widetilde{\bm D}_s,
\qquad
\bm J_s^{\rm proj}=\calR(\bm D_s^{\perp}).
\label{eq:projected-complex-jacobian}
\end{equation}
One complex projection removes the two real directions \(\widetilde{\bmg}_s\) and \(\jj\widetilde{\bmg}_s\); under a common-complex-scale model, it therefore has the same kernel and rank as \Cref{eq:intrinsic-jacobian}.

\begin{theorem}[Differential of the maximal invariant]
For \(\bm\Pi(\bmg)=\bmg\bmg\herm/(\bmg\herm\bmg)\) and a state perturbation \(\delta\bmvartheta\),
\begin{equation}
\dd\bm\Pi[\delta\bmvartheta]
=\frac{
\bm P_g^{\perp}\bm D\delta\bmvartheta\,\bmg\herm
+\bmg(\bm D\delta\bmvartheta)\herm\bm P_g^{\perp}}
{\bmg\herm\bmg},
\label{eq:maximal-invariant-differential}
\end{equation}
and
\begin{equation}
\norm{\dd\bm\Pi[\delta\bmvartheta]}_F^2
=\frac{2\norm{\bm P_g^{\perp}\bm D\delta\bmvartheta}^2}
{\bmg\herm\bmg}.
\label{eq:maximal-invariant-norm}
\end{equation}
\label{thm:max-invariant-diff}
\end{theorem}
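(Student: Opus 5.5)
I would prove \Cref{thm:max-invariant-diff} by a direct quotient-rule computation on the rank-one projector, followed by a trace calculation for the Frobenius norm. Write \(\bm v=\bm D\delta\bmvartheta=\dd\bmg[\delta\bmvartheta]\) and \(N=\bmg\herm\bmg>0\), which is legitimate because \(\bmg\neq\bm0\) on the regular domain.

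\textbf{Step 1 (differential).} Differentiating \(\bm\Pi=\bmg\bmg\herm/N\) gives
\begin{equation*}
\dd\bm\Pi=\frac{\bm v\bmg\herm+\bmg\bm v\herm}{N}-\frac{\bmg\bmg\herm\,(\bm v\herm\bmg+\bmg\herm\bm v)}{N^2},
\end{equation*}
using \(\dd N=\bm v\herm\bmg+\bmg\herm\bm v=2\Rea(\bmg\herm\bm v)\).

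Next I split \(\bm v=\bm\Pi\bm v+\bm P_g^{\perp}\bm v\), where \(\bm\Pi\bm v=\bmg(\bmg\herm\bm v)/N\). Substituting, the parallel parts contribute
\begin{equation*}
\frac{\bmg\bmg\herm\,(\bmg\herm\bm v)+\bmg\bmg\herm\,(\bm v\herm\bmg)}{N^2}.
\end{equation*}
This exactly cancels the last term. What remains is \((\bm P_g^{\perp}\bm v\,\bmg\herm+\bmg\,\bm v\herm\bm P_g^{\perp})/N\), using \((\bm P_g^{\perp})\herm=\bm P_g^{\perp}\). This is \Cref{eq:maximal-invariant-differential}.

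Geometrically, this step is where the common-complex-scale directions \(\bmg\) and \(\jj\bmg\) drop out. This is consistent with \Cref{eq:projective-invariance}: \(\dd\bm\Pi\) annihilates the orbit tangent \Cref{eq:complex-scale-tangent}.

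\textbf{Step 2 (norm).} Set \(\bm p=\bm P_g^{\perp}\bm v\), so that \(\bm p\herm\bmg=0\), and let \(\bm X=\bm p\bmg\herm+\bmg\bm p\herm\). Expanding \(\tr(\bm X\herm\bm X)\) gives four terms:
\begin{itemize}
\item two diagonal terms, \(\tr(\bmg\bm p\herm\bm p\bmg\herm)=\norm{\bm p}^2N\) and \(\tr(\bm p\bmg\herm\bmg\bm p\herm)=N\norm{\bm p}^2\);
\item two cross terms, each containing a factor \(\bm p\herm\bmg\) or \(\bmg\herm\bm p\), which vanish.
\end{itemize}
Hence \(\norm{\bm X}_F^2=2N\norm{\bm p}^2\). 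Dividing by \(N^2\) gives \Cref{eq:maximal-invariant-norm}.

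\textbf{Main obstacle.} The only delicate point is the cancellation in Step 1. It requires handling the real-valued \(\dd N\) correctly as \(\bm v\herm\bmg+\bmg\herm\bm v\), not \(2\bmg\herm\bm v\), so that the Hermitian structure is preserved. Once that is done, everything else is orthogonality bookkeeping.

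I would close with a remark tying the result to \Cref{eq:projected-complex-jacobian}. The norm identity shows that \(\dd\bm\Pi\) and \(\bm D^{\perp}=\bm P_g^{\perp}\bm D\) have the same kernel. After whitening, the maximal invariant therefore has the same local rank as \(\bm J_s^{\rm proj}\), which matches the rank claim stated after \Cref{eq:projected-complex-jacobian}.
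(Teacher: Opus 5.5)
Your proof is correct and follows the paper's own argument. The paper also differentiates \(\bmg\bmg\herm/(\bmg\herm\bmg)\) directly and regroups the result into the \(\bm P_g^{\perp}\) form, which you make explicit through the split \(\bm v=\bm\Pi\bm v+\bm P_g^{\perp}\bm v\); it then obtains the norm because the Frobenius cross term between the two rank-one pieces vanishes, since \(\bm P_g^{\perp}\bmg=\bm 0\).
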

\begin{proof}
Differentiate the normalized rank-one projector directly and use \(\bm P_g^{\perp}\bmg=0\).
\end{proof}
Equation~\eqref{eq:maximal-invariant-norm} shows that comparison of scale-independent observation strength must include the normalization \(\norm{\bmg}^{-2}\), rather than comparing only the unnormalized norm of the projected Jacobian.

\subsection{Local Identifiability and a Dimensional Resource Bound}
\begin{definition}[Local intrinsic identifiability]
If, at \(\bmvartheta_0\),
\begin{equation}
\rank\bm J_s^{\rm int}=p,
\label{eq:local-identifiability-rank}
\end{equation}
then \(\bmvartheta\) is said to be locally differentially identifiable for the given observation block and specified genuine nuisance.
\end{definition}
This condition does not guarantee global uniqueness; a stable local immersion additionally requires constant rank in a neighborhood.

If the effective complex-output rank is \(r_s\), the maximum real intrinsic dimension of one block satisfies
\begin{equation}
\dim_{\R}\calY_{\rm int}
\le
\begin{cases}
2(r_s-1),&\text{common-complex-scale nuisance},\\
2r_s-1,&\text{common phase only or common positive scale only}.
\end{cases}
\label{eq:intrinsic-dimension-bound}
\end{equation}
For multiple satellites and blocks, when each block has an independent common complex scale, a necessary resource condition is
\begin{equation}
 p\le\sum_{s=1}^{S}\sum_{\ell=1}^{L}2(r_{s,\ell}-1),
\label{eq:resource-necessary-condition}
\end{equation}
but sufficiency still requires the joint Jacobian condition
\begin{equation}
\rank
\begin{bmatrix}
\bm J_{1,1}^{\rm int}\\
\vdots\\
\bm J_{S,L}^{\rm int}
\end{bmatrix}
=p.
\label{eq:joint-identifiability}
\end{equation}
If one common complex gain is shared across multiple epochs, the observations must first be stacked jointly and the shared nuisance projected out only once. Subtracting two real dimensions independently at every epoch is incorrect.

When \(K=1\) and the common complex gain of a single instantaneous complex output is completely unknown, the quotient space is \(\CP^0\). This means only that a single-branch instantaneous auxiliary observation cannot by itself separate the GNSS complex factor from the array response; it does not mean that the primary GNSS carrier phase contains no information. Temporal continuity alone propagates, but does not determine, the unknown constant complex factor. Additional samples with an identified dynamical model, or a second synchronized branch, may restore task-specific relative identifiability; absolute closure still requires a model, calibration, or informative anchor.

\section{Fisher Information, Fubini--Study Geometry, and Increments from Correlated Observations}
\subsection{State Metrics and Coordinate-Independent Observation Strength}
Singular values of a raw Jacobian change with the use of metres, radians, millimetres, or other parameter units. Let \(\bm G_\vartheta\succ0\) be a reference Riemannian metric on the state space, and let \(\bm G_y\succ0\) be a metric on the intrinsic observation space. Define the generalized eigenproblem
\begin{equation}
(\bm J^{\rm int})\trans\bm G_y\bm J^{\rm int}\bm v
=\lambda\bm G_\vartheta\bm v.
\label{eq:generalized-observation-spectrum}
\end{equation}
Equivalently, the \(\lambda_i\) are the squared singular values of
\begin{equation}
\bm G_y^{1/2}\bm J^{\rm int}\bm G_\vartheta^{-1/2}.
\label{eq:metric-normalized-jacobian}
\end{equation}
The weakest observable direction is characterized by
\begin{equation}
\lambda_{\min}
=\min_{\delta\bmvartheta\neq0}
\frac{\norm{\bm J^{\rm int}\delta\bmvartheta}_{\bm G_y}^2}
{\norm{\delta\bmvartheta}_{\bm G_\vartheta}^2}.
\label{eq:minimum-generalized-information}
\end{equation}
Under reparameterization, \(\bm G_\vartheta\) must transform tensorially in order for the numerical values to remain coordinate independent. After whitening, the natural choice is \(\bm G_y=\bm I\); without whitening, an inverse covariance may be used, but the noise must not be whitened twice.

\subsection{Fisher Schur Complement for a General Nuisance}
Partition the parameter into the state of interest \(\bmvartheta\) and a nuisance \(\bm\nu\). The full Fisher information is
\begin{equation}
\bm F
=\begin{bmatrix}
\bm F_{\vartheta\vartheta}&\bm F_{\vartheta\nu}\\
\bm F_{\nu\vartheta}&\bm F_{\nu\nu}
\end{bmatrix}.
\label{eq:fisher-blocks}
\end{equation}
Under regularity conditions and
\begin{equation}
\Range(\bm F_{\nu\vartheta})
\subseteq\Range(\bm F_{\nu\nu}),
\label{eq:generalized-schur-condition}
\end{equation}
the effective Fisher information for the target state is the generalized Schur complement
\begin{equation}
\bm F_\vartheta^{\rm eff}
=\bm F_{\vartheta\vartheta}
-\bm F_{\vartheta\nu}\bm F_{\nu\nu}^{\dagger}\bm F_{\nu\vartheta}.
\label{eq:fisher-schur}
\end{equation}
It is equivalent to projecting the score for the state of interest onto the orthogonal complement of the nuisance-score subspace\cite{kay1993,amari2016}. Only parameters established physically and statistically to be genuine nuisances may be eliminated. Over-elimination corresponds to the information loss in \Cref{eq:overquotient-rank-loss}.

\subsection{Common Complex-Gain Model and the Pullback of the Fubini--Study Metric}
Consider the whitened proper complex Gaussian model
\begin{equation}
\widetilde{\bm y}
=\alpha\widetilde{\bmg}(\bmvartheta)+\widetilde{\bmn},
\qquad
\widetilde{\bmn}\sim\CN(\bm0,\bm I),
\label{eq:white-common-gain-model}
\end{equation}
where the covariance is locally independent of \(\bmvartheta\), the response satisfies \(\widetilde{\bmg}(\bmvartheta)\neq\bm0\), and the common complex gain \(\alpha\in\C^{\times}\) is a nuisance in the auxiliary relative-response task. Let
\begin{equation}
\widetilde{\bm D}
=\frac{\partial\widetilde{\bmg}}{\partial\bmvartheta\trans}.
\end{equation}

\begin{theorem}[Effective Fisher information and the Fubini--Study metric]
After elimination of the common complex gain, the effective Fisher information is
\begin{equation}
\bm F_\vartheta^{\rm eff}
=2\abs{\alpha}^2
\Rea\!\left(
\widetilde{\bm D}\herm
\bm P_{\widetilde g}^{\perp}
\widetilde{\bm D}
\right).
\label{eq:effective-fisher-projective}
\end{equation}
Under the normalization convention adopted here, the pullback to the state space of the Fubini--Study metric on \(\CP^{K-1}\) is
\begin{equation}
\bm G_{\rm FS}
=\frac{
\Rea\!\left(
\widetilde{\bm D}\herm
\bm P_{\widetilde g}^{\perp}
\widetilde{\bm D}
\right)}
{\widetilde{\bmg}\herm\widetilde{\bmg}},
\label{eq:fs-pullback}
\end{equation}
and the two quantities satisfy
\begin{equation}
\bm F_\vartheta^{\rm eff}
=2\abs{\alpha}^2\norm{\widetilde{\bmg}}^2\bm G_{\rm FS}.
\label{eq:fisher-fs-equivalence}
\end{equation}
\label{thm:fisher-fs}
\end{theorem}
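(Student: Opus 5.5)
The plan is to compute the full Fisher information for the real parameter vector $(\bmvartheta,\Rea\alpha,\Ima\alpha)$ under the whitened proper complex Gaussian model \eqref{eq:white-common-gain-model}. Then I eliminate the complex gain with the generalized Schur complement \eqref{eq:fisher-schur}, and finally compare the result with the pullback of the Fubini--Study metric through $\bm\Pi$ using \Cref{thm:max-invariant-diff}.

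\textbf{Step 1: Slepian--Bangs form.} For $\widetilde{\bm y}\sim\CN(\bm m,\bm I)$ with parameter-independent covariance, the Fisher information is $F_{ij}=2\Rea(\partial_i\bm m\herm\partial_j\bm m)$, where $\bm m=\alpha\widetilde{\bmg}(\bmvartheta)$. The mean derivatives are
\[
\partial_{\vartheta}\bm m=\alpha\widetilde{\bm D},\qquad
\partial_{\Rea\alpha}\bm m=\widetilde{\bmg},\qquad
\partial_{\Ima\alpha}\bm m=\jj\widetilde{\bmg}.
\]
Writing the real embedding as in \eqref{eq:real-embedding} gives $\bm F=2\bm A\trans\bm A$, with $\bm A=[\calR(\alpha\widetilde{\bm D}),\bm N_{\C}]$ and $\bm N_{\C}=[\calR(\widetilde{\bmg}),\calR(\jj\widetilde{\bmg})]$ as in \eqref{eq:complex-scale-tangent}.

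\textbf{Step 2: Schur complement equals projection.} Because $\widetilde{\bmg}\neq\bm0$, the block $\bm F_{\nu\nu}=2\bm N_{\C}\trans\bm N_{\C}=2\norm{\widetilde{\bmg}}^2\bm I_2$ is invertible, so the range condition \eqref{eq:generalized-schur-condition} holds trivially. The Schur complement is therefore $2\,\calR(\alpha\widetilde{\bm D})\trans\bm P_{N}^{\perp}\calR(\alpha\widetilde{\bm D})$, with $\bm P_N^\perp$ from \eqref{eq:nuisance-projector}.

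Next I show that the real projector onto the orthogonal complement of $\Span_{\R}\{\calR(\widetilde{\bmg}),\calR(\jj\widetilde{\bmg})\}$ is the real representation of the complex projector \eqref{eq:complex-projector}. This is the fact already noted after \eqref{eq:projected-complex-jacobian}. It follows because $\Span_\C\{\widetilde{\bmg}\}$ has real span exactly $\{\widetilde{\bmg},\jj\widetilde{\bmg}\}$, and the real inner product is $\Rea\langle\cdot,\cdot\rangle$. Using $\calR(\bm u)\trans\calR(\bm v)=\Rea(\bm u\herm\bm v)$ together with idempotence and Hermitian symmetry of $\bm P^\perp_{\widetilde g}$, I obtain
\[
\bm F^{\rm eff}_\vartheta=2\Rea\bigl((\alpha\widetilde{\bm D})\herm\bm P^\perp_{\widetilde g}(\alpha\widetilde{\bm D})\bigr)=2\abs{\alpha}^2\Rea\bigl(\widetilde{\bm D}\herm\bm P^\perp_{\widetilde g}\widetilde{\bm D}\bigr),
\]
which is \eqref{eq:effective-fisher-projective}.

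\textbf{Step 3: Fubini--Study pullback.} I fix the normalization convention to be the metric $\tfrac12\norm{\dd\bm\Pi}_F^2$ on rank-one projectors. By \eqref{eq:maximal-invariant-norm}, this gives the quadratic form
\[
\delta\bmvartheta\mapsto\frac{\norm{\bm P^\perp_{\widetilde g}\widetilde{\bm D}\delta\bmvartheta}^2}{\widetilde{\bmg}\herm\widetilde{\bmg}}.
\]
Its polarization is \eqref{eq:fs-pullback}, since $\norm{\bm P\bm v}^2=\Rea(\bm v\herm\bm P\bm v)$ for a real parameter direction. Comparing this with Step 2 yields \eqref{eq:fisher-fs-equivalence} immediately.

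\textbf{Main obstacle.} The step needing the most care is the identification in Step 2 between real projection against the two nuisance tangents and the complex projection. A sign or transpose slip in the real embedding would leave a spurious $\Ima$ cross term. I would verify this identification by checking $\calR(\jj\bm v)=\bm J\calR(\bm v)$, with $\bm J$ the real form of multiplication by $\jj$, and confirming that the projector commutes with $\bm J$. The remaining subtlety is stating the Fubini--Study normalization convention explicitly, since the factor in \eqref{eq:fs-pullback} depends on it.
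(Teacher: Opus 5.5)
Your proposal is correct and takes the same route as the paper. You use the Slepian--Bangs form $F_{ij}=2\Rea(\partial_i\bm m\herm\partial_j\bm m)$, realize Schur elimination of the two real nuisance directions $\widetilde{\bmg}$ and $\jj\widetilde{\bmg}$ as left multiplication by $\bm P_{\widetilde g}^{\perp}$, and compare with \Cref{eq:maximal-invariant-norm}; your normalization $\tfrac12\norm{\dd\bm\Pi}_F^2$ is exactly the convention the paper uses implicitly, and your extra details (invertibility of $\bm F_{\nu\nu}=2\norm{\widetilde{\bmg}}^2\bm I_2$ and the realification identity) fill in steps the paper's short proof leaves unstated.
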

\begin{proof}
For a proper complex Gaussian mean model, the real Fisher inner product is twice the real part of the complex inner product between mean Jacobians. The two real nuisance directions of the common complex gain are \(\widetilde{\bmg}\) and \(\jj\widetilde{\bmg}\); Schur elimination is therefore equivalent to left multiplication by \(\bm P_{\widetilde g}^{\perp}\). Comparison with \Cref{eq:maximal-invariant-norm} gives the stated relation.
\end{proof}

The theorem unifies three levels. The quantity \(\bm P_{\widetilde g}^{\perp}\widetilde{\bm D}\) determines geometrically identifiable directions; \(\bm G_{\rm FS}\) describes projective shape variation after removal of the common complex scale; and \(\abs\alpha^2\norm{\widetilde\bmg}^2\) converts geometric sensitivity into statistical observation strength.

\subsection{State-Dependent Covariance, Random Weights, and the Low-SNR Boundary}
For
\begin{equation}
\bm y\sim\CN(\bm\mu(\bmvartheta),\bm C(\bmvartheta)),
\end{equation}
if the covariance depends on the state, the Fisher information for real parameters \(\vartheta_i,\vartheta_j\) is
\begin{equation}
\begin{aligned}
F_{ij}={}&2\Rea\!\left[
\frac{\partial\bm\mu\herm}{\partial\vartheta_i}
\bm C^{-1}
\frac{\partial\bm\mu}{\partial\vartheta_j}
\right]\\
&+\tr\!\left(
\bm C^{-1}\frac{\partial\bm C}{\partial\vartheta_i}
\bm C^{-1}\frac{\partial\bm C}{\partial\vartheta_j}
\right).
\end{aligned}
\label{eq:state-dependent-cov-fisher}
\end{equation}
Consequently, \Cref{eq:effective-fisher-projective} applies only to a mean model with locally fixed covariance and conditioned on given weights. If the weights are estimated adaptively from the same data block, one should use \(p(\bm y,\widehat{\bm W}\mid\bmvartheta)\), or eliminate the weight errors jointly as nuisance parameters.

At low SNR, under heavy-tailed interference, near a response zero, with a multimodal likelihood, or under weak identifiability, the Fisher information is only a local regular-model lower bound and may not predict finite-sample errors accurately\cite{vallisneri2008}. SN-ASMO does not use a Fisher matrix as a substitute for the complete likelihood, Monte Carlo analysis, or integer-boundary risk.

\subsection{Conditional Fisher Increment from Correlated New Observations}
In this subsection \(\bm y_0,\bm y_a,\bm J\), and \(\bm Q\) are real-valued, either natively or after realification of complex data. A direct proper-complex formulation must use the Hermitian, real-part, and factor-of-two convention of \Cref{eq:effective-fisher-projective} consistently. Let an existing observation \(\bm y_0\) and an added auxiliary observation \(\bm y_a\) be jointly Gaussian, with mean Jacobian and covariance
\begin{equation}
\bm J
=\begin{bmatrix}\bm J_0\\\bm J_a\end{bmatrix},
\qquad
\bm Q
=\begin{bmatrix}
\bm Q_0&\bm Q_{0a}\\
\bm Q_{a0}&\bm Q_a
\end{bmatrix},
\label{eq:joint-correlated-observation}
\end{equation}
and assume that the covariance is locally independent of the parameter and that \(\bm Q\succ0\). Equivalently, \(\bm Q_0\succ0\) and the Schur complement \(\bm Q_{a|0}\succ0\). The conditional covariance and conditional Jacobian are
\begin{equation}
\bm Q_{a|0}
=\bm Q_a-\bm Q_{a0}\bm Q_0^{-1}\bm Q_{0a},
\label{eq:conditional-covariance}
\end{equation}
and
\begin{equation}
\bm J_{a|0}
=\bm J_a-\bm Q_{a0}\bm Q_0^{-1}\bm J_0.
\label{eq:conditional-jacobian}
\end{equation}

\begin{theorem}[Conditional Fisher increment]
Under the preceding conditions, the joint Fisher information decomposes as
\begin{equation}
\bm F_{\rm total}
=\bm F_0+\bm F_{a|0},
\qquad
\bm F_{a|0}
=\bm J_{a|0}\trans\bm Q_{a|0}^{-1}\bm J_{a|0}\succeq0.
\label{eq:conditional-fisher-increment}
\end{equation}
\label{thm:conditional-fisher}
\end{theorem}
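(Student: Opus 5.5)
For a Gaussian mean model whose covariance is locally independent of the parameter, the Fisher information of the stacked observation is \(\bm F_{\rm total}=\bm J\trans\bm Q^{-1}\bm J\). The plan is to evaluate \(\bm Q^{-1}\) through its block LDL (Schur-complement) factorization. Since \(\bm Q\succ0\) is equivalent to \(\bm Q_0\succ0\) and \(\bm Q_{a|0}\succ0\), we can write
\begin{equation}
\bm Q=\bm L\,\diag(\bm Q_0,\bm Q_{a|0})\,\bm L\trans,
\qquad
\bm L=\begin{bmatrix}\bm I&\bm 0\\ \bm Q_{a0}\bm Q_0^{-1}&\bm I\end{bmatrix}.
\end{equation}
The factor \(\bm L\) is unit lower triangular, and \(\bm Q_{0a}=\bm Q_{a0}\trans\) by symmetry of \(\bm Q\).

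It follows that \(\bm Q^{-1}=\bm L^{-\mathsf T}\diag(\bm Q_0^{-1},\bm Q_{a|0}^{-1})\bm L^{-1}\). The key computation is
\begin{equation}
\bm L^{-1}\bm J=\begin{bmatrix}\bm J_0\\ \bm J_a-\bm Q_{a0}\bm Q_0^{-1}\bm J_0\end{bmatrix}=\begin{bmatrix}\bm J_0\\ \bm J_{a|0}\end{bmatrix},
\end{equation}
which uses \(\bm L^{-1}=\begin{bmatrix}\bm I&\bm 0\\ -\bm Q_{a0}\bm Q_0^{-1}&\bm I\end{bmatrix}\). The lower block is exactly the conditional Jacobian of \Cref{eq:conditional-jacobian}. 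Substituting gives
\begin{equation}
\bm J\trans\bm Q^{-1}\bm J=\bm J_0\trans\bm Q_0^{-1}\bm J_0+\bm J_{a|0}\trans\bm Q_{a|0}^{-1}\bm J_{a|0}=\bm F_0+\bm F_{a|0}.
\end{equation}
Positive semidefiniteness of \(\bm F_{a|0}\) is immediate from \(\bm Q_{a|0}^{-1}\succ0\).

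As a cross-check, and to give the statistical meaning, I would also note the chain-rule factorization \(p(\bm y_0,\bm y_a)=p(\bm y_0)\,p(\bm y_a\mid\bm y_0)\). The conditional law is Gaussian with mean \(\bm\mu_a+\bm Q_{a0}\bm Q_0^{-1}(\bm y_0-\bm\mu_0)\) and covariance \(\bm Q_{a|0}\), which is parameter-free. Its parameter derivative, taken at fixed \(\bm y_0\), is \(\bm J_{a|0}\). Fisher information is additive across the two factors, because the expected conditional information given \(\bm y_0\) does not depend on \(\bm y_0\). This reproduces the same formula.

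The computation is essentially routine. The only point needing care is the identification of \(\bm L^{-1}\bm J\): the \(\bm y_0\)-dependence of the conditional mean must be treated at fixed \(\bm y_0\), so that \(\bm J_{a|0}\) rather than \(\bm J_a\) appears. I would also confirm that the real-valued convention stated before the theorem excludes any factor of two or Hermitian transpose.
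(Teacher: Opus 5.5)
Your proof is correct, but your main argument takes a different route from the paper's. The paper argues probabilistically. It factors the joint density as $p(\bm y_0\mid\bmvartheta)\,p(\bm y_a\mid\bm y_0,\bmvartheta)$, notes that the expected cross product of the marginal and conditional scores vanishes, and identifies $\bm J_{a|0}$ and $\bm Q_{a|0}$ as the parameter Jacobian of the conditional mean and the conditional covariance; this is essentially your cross-check.

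Your primary route instead evaluates $\bm J\trans\bm Q^{-1}\bm J$ directly through the block factorization $\bm Q=\bm L\diag(\bm Q_0,\bm Q_{a|0})\bm L\trans$. The identity $\bm L^{-1}\bm J=[\bm J_0\trans,\bm J_{a|0}\trans]\trans$ then does all the work. This argument is elementary and self-contained. It needs only the fixed-covariance Gaussian Fisher formula and block-matrix algebra, and it makes the Schur-complement structure and $\bm F_{a|0}\succeq0$ transparent. It involves no score calculus, so your remark about holding $\bm y_0$ fixed applies only to the cross-check; no $\bm y_0$ appears in the algebraic route.

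The paper's route buys generality and interpretation. The chain rule $\bm F_{\rm total}=\bm F_0+\E[\bm F_{\bm y_a\mid\bm y_0}]$ holds for any regular model, and it identifies $\bm F_{a|0}$ as the information carried by the innovation $\bm y_a-\E[\bm y_a\mid\bm y_0]$. It also extends naturally to the parameter-dependent-covariance case mentioned after the theorem, where your factorization would have to be redone with trace terms.

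One small imprecision in your cross-check concerns why information is additive across the two factors. The reason is that the conditional score has zero conditional mean given $\bm y_0$, so the cross term vanishes. The fact that the conditional information does not depend on $\bm y_0$ only makes the outer expectation trivial.
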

\begin{proof}
Factor the joint Gaussian density as \(p(\bm y_0\mid\bmvartheta)p(\bm y_a\mid\bm y_0,\bmvartheta)\). The expected cross product of the two scores is zero, and the conditional Gaussian mean Jacobian and covariance are given by \Cref{eq:conditional-jacobian,eq:conditional-covariance}.
\end{proof}

When \(\bm Q_{a0}=0\), \Cref{eq:conditional-fisher-increment} reduces to addition of information from independent observations. If raw samples are shared, the marginal sum \(\bm F_0+\bm F_a\) generally double-counts information; the conditional increment or the directly computed joint information must be used. If the covariance depends on the parameter, the conditional decomposition must also include the corresponding covariance-derivative terms.
\section[Propagation of Positioning Information from Precise Spatial-Manifold Observations]{Propagation of Positioning Information\\from Precise Spatial-Manifold Observations}
\subsection{Two Distinct Sources of Positioning Gain}
At least two fundamentally different mechanisms allow spatial-manifold information to enter a positioning system.

\begin{enumerate}
\item \textbf{Observation-quality gain.} A single common weight vector suppresses interference, and pre-discriminator restoration preserves the standard carrier structure of the primary output. The backend still has one scalar observation per satellite. This mechanism changes mainly the observation covariance and availability; it does not add independent physical observation dimensions.
\item \textbf{Intrinsic state-information gain.} If synchronized multiweight auxiliary branches, array-attitude or calibration estimates, or projective-response statistics are retained and passed to a joint estimator, they can contribute additional conditional Fisher information about attitude, channel state, array deformation, or multipath state, and can improve position indirectly through state coupling.
\end{enumerate}

If auxiliary branches are used only for internal weight selection and discarded after the selection, their Jacobians cannot subsequently be added as independent positioning observations; doing so would double-count information contained in the same raw samples.

\subsection{CRLB Monotonicity Under a Legitimate Information Increment}
Suppose that the standard observations provide information \(\bm F_0\succ0\) about a parameter vector \(\bm x\), and that the legitimate conditional increment supplied by the auxiliary observation relative to the standard observation is \(\bm F_{a|0}\succeq0\). By \Cref{eq:conditional-fisher-increment},
\begin{equation}
\bm F_1=\bm F_0+\bm F_{a|0}\succeq\bm F_0.
\label{eq:position-fisher-increment}
\end{equation}

\begin{theorem}[CRLB monotonicity under a legitimate information increment]
If \(\bm F_0\succ0\) and \(\bm F_{a|0}\succeq0\), then
\begin{equation}
\bm F_1^{-1}\preceq\bm F_0^{-1}.
\label{eq:crlb-monotonicity}
\end{equation}
For any position-selection matrix \(\bm S_p\),
\begin{equation}
\bm S_p\bm F_1^{-1}\bm S_p\trans
\preceq
\bm S_p\bm F_0^{-1}\bm S_p\trans.
\label{eq:position-crlb-monotonicity}
\end{equation}
\label{thm:crlb-monotonicity}
\end{theorem}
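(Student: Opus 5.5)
The argument is a standard matrix-analysis fact, operator monotonicity of inversion on the positive-definite cone, followed by a congruence with \(\bm S_p\). First I will establish that \(\bm F_1\) is invertible. Since \(\bm F_0\succ0\) and \(\bm F_{a|0}\succeq0\), for every nonzero \(\bm v\) we have \(\bm v\trans\bm F_1\bm v=\bm v\trans\bm F_0\bm v+\bm v\trans\bm F_{a|0}\bm v>0\). Hence \(\bm F_1\succ0\), and both inverses in \Cref{eq:crlb-monotonicity} exist.

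The core step is to reduce to a comparison with the identity by congruence. Write \(\bm F_1=\bm F_0^{1/2}(\bm I+\bm M)\bm F_0^{1/2}\) with \(\bm M=\bm F_0^{-1/2}\bm F_{a|0}\bm F_0^{-1/2}\succeq0\). The matrix \(\bm M\) is symmetric positive semidefinite, so its eigenvalues \(m_i\ge0\) and \(\bm I+\bm M\) has eigenvalues \(1+m_i\ge1\). Therefore \((\bm I+\bm M)^{-1}\) has eigenvalues \((1+m_i)^{-1}\le1\), which gives \((\bm I+\bm M)^{-1}\preceq\bm I\).

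Next I invert the factorization, \(\bm F_1^{-1}=\bm F_0^{-1/2}(\bm I+\bm M)^{-1}\bm F_0^{-1/2}\). Congruence by the symmetric matrix \(\bm F_0^{-1/2}\) preserves the Loewner order, because \(\bm A\preceq\bm B\) implies \(\bm T\trans\bm A\bm T\preceq\bm T\trans\bm B\bm T\) for any \(\bm T\). Applying this yields \(\bm F_1^{-1}\preceq\bm F_0^{-1/2}\bm I\bm F_0^{-1/2}=\bm F_0^{-1}\).

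An alternative route is the resolvent identity \(\bm F_0^{-1}-\bm F_1^{-1}=\bm F_0^{-1}\bm F_{a|0}\bm F_1^{-1}\). It is cleaner to use the Woodbury form with \(\bm F_{a|0}=\bm B\bm B\trans\), which gives
\(\bm F_0^{-1}-\bm F_1^{-1}=\bm F_0^{-1}\bm B(\bm I+\bm B\trans\bm F_0^{-1}\bm B)^{-1}\bm B\trans\bm F_0^{-1}\succeq0\).
I would mention this only as a cross-check.

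For \Cref{eq:position-crlb-monotonicity}, apply the same congruence principle with \(\bm T=\bm S_p\trans\). This gives \(\bm S_p(\bm F_0^{-1}-\bm F_1^{-1})\bm S_p\trans\succeq0\), since \(\bm v\trans\bm S_p\bm X\bm S_p\trans\bm v=(\bm S_p\trans\bm v)\trans\bm X(\bm S_p\trans\bm v)\ge0\).

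No step is a genuine obstacle. The only point needing care is the interpretation of the hypotheses: the increment must be the legitimate conditional increment of \Cref{thm:conditional-fisher}, not a marginal sum that double-counts shared samples. That is an assumption of the statement and is not proved here. I will also note that the conclusion concerns bounds, not achieved estimator covariance.
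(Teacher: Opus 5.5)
Your proof is correct and follows the paper's route: inversion reverses the Loewner order on the positive-definite cone, and congruence by \(\bm S_p\) then preserves that order. The paper simply cites the order-reversal fact, whereas your factorization \(\bm F_1=\bm F_0^{1/2}(\bm I+\bm M)\bm F_0^{1/2}\) proves it directly, and you also check explicitly that \(\bm F_1\succ0\), so both inverses exist.
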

\begin{proof}
Matrix inversion is order-reversing on the cone of positive-definite matrices, and congruence transformations preserve the semidefinite order.
\end{proof}

The premise is that the added information has already been orthogonalized under the conditional model. Marginal information from observations that share samples cannot be added directly.

\subsection{Positioning-Gain Factor and Unit Audit}
Let \(\bm F_{p,0}^{\rm eff}\) and \(\bm F_{p,1}^{\rm eff}\) denote the effective position Fisher information matrices after elimination of clock, troposphere, attitude, calibration, and other nuisance parameters. If a scalar \(\gamma_p\ge1\) exists such that
\begin{equation}
\bm F_{p,1}^{\rm eff}
\succeq\gamma_p\bm F_{p,0}^{\rm eff},
\label{eq:position-gain-factor}
\end{equation}
then
\begin{equation}
\bm C_{p,1}^{\rm CRLB}
\preceq\gamma_p^{-1}\bm C_{p,0}^{\rm CRLB},
\label{eq:position-gain-covariance}
\end{equation}
and
\begin{equation}
\frac{\sqrt{\tr\bm C_{p,1}^{\rm CRLB}}}
{\sqrt{\tr\bm C_{p,0}^{\rm CRLB}}}
\le\gamma_p^{-1/2}.
\label{eq:position-rmse-gain}
\end{equation}
The factor \(\gamma_p\) must be verified from the effective information matrices; it cannot be inferred directly from the number of elements, output SINR, or an unnormalized Jacobian norm.

\subsection{Far-Field Translation and Attitude Sensitivities}
Consider the ideal phase of element \(m\) relative to the physical reference point,
\begin{equation}
\phi_{m,s}^{A}
=-\kappa(\bmu_s^n)\trans\bm C_a^n\bm b_m^a.
\label{eq:position-attitude-phase}
\end{equation}
Let the satellite position be \(\bm p_s\), the reference-point position be \(\bm p\), and \(\rho_s=\norm{\bm p_s-\bm p}\). Then
\begin{equation}
\frac{\partial\bmu_s^n}{\partial\bm p\trans}
=-\frac{1}{\rho_s}
(\bm I-\bmu_s^n(\bmu_s^n)\trans),
\label{eq:los-position-jacobian}
\end{equation}
so that
\begin{equation}
\nabla_{\bm p}\phi_{m,s}^{A}
=\frac{\kappa}{\rho_s}
(\bm I-\bmu_s^n(\bmu_s^n)\trans)
\bm C_a^n\bm b_m^a.
\label{eq:phase-position-gradient}
\end{equation}
Its order is \(O(\kappa B/\rho_s)\), and the first-order sensitivity along the line-of-sight radial direction is zero.

Under the attitude convention in \Cref{eq:attitude-left-perturb},
\begin{equation}
\delta\phi_{m,s}^{A}
=\kappa(\bmu_s^n\times\bm C_a^n\bm b_m^a)\trans\delta\bmtheta^n,
\label{eq:phase-attitude-gradient}
\end{equation}
which is of order \(O(\kappa B)\). A compact far-field array is therefore much more directly sensitive to attitude and internal array states than to absolute translation. Substantial positioning gains arise mainly from interference-noise reduction, decoupling of attitude, lever arm, and calibration, multipath constraints, and joint geometry---not from mischaracterizing the directional array response as a strong millimetre-level absolute-translation observation.

\subsection{Single-Output Compression and the Data-Processing Boundary}
The single-output mapping
\begin{equation}
\bm x\longmapsto\bmw\herm\bm x
\label{eq:single-output-compression}
\end{equation}
compresses \(M\)-dimensional element data into one complex scalar. If the backend retains only this scalar, the data-processing inequality implies that it cannot contain more Fisher information about the same parameter than the complete element data. SN-ASMO does not violate this principle. Rather, it seeks to
\begin{enumerate}
\item make the retained scalar closer to a task-relevant sufficient statistic through weight optimization;
\item preserve relative-response information through a small number of synchronized auxiliary branches; and
\item prevent useful carrier information from being contaminated by residual composite-response phase through exact quotienting of genuine nuisances and pre-discriminator inverse transport.
\end{enumerate}

\subsection{From Per-Satellite Covariance to Position Fisher Information}
Let \(\bm Q_y(\bmw)\) be the covariance of inverse-transported per-satellite code and carrier observations, let \(\bm H_x\) be the design matrix, and let \(\bm P_x\succeq0\) represent prior or dynamical information. Then
\begin{equation}
\bm F_x(\bmw)
=\bm H_x\trans\bm Q_y^{-1}(\bmw)\bm H_x+\bm P_x.
\label{eq:position-fisher-weight}
\end{equation}
If the weights change only \(\bm Q_y\), the gain comes from improved observation quality. If intrinsic array observations \(\bm y_a\) are also retained, they must be incorporated through the joint covariance or a conditional Fisher increment; the same prompt information must not be counted twice.

The basic theory can provide task-oriented objective functions to an upper-level active-observation controller, for example
\begin{equation}
J_p(\bmw)=\log\det\bm C_p^{\rm CRLB}(\bmw),
\qquad
J_E(\bmw)=-\lambda_{\min}(\bm F_p^{\rm eff}(\bmw)).
\label{eq:position-weight-objectives}
\end{equation}
Any weight evaluation must simultaneously enforce the PI constraint, interference suppression, tracking thresholds for critical satellites, transport coherence, the composite-phase condition numbers, and response-zero safety. A minimum-output-power weight need not be positioning-optimal, because it may drive a geometrically important satellite toward a deep null.
\section[Carrier Observations, Bias--Covariance Propagation, and Preservation of the RTK Integer Structure]{Carrier Observations, Bias--Covariance Propagation,\\and Preservation of the RTK Integer Structure}
\subsection{Standard Carrier Observation After Inverse Transport and Absolute Closure}
The final composite array-response phase should be inversely transported before the complex coherent statistic enters the carrier discriminator. Relative inverse transport alone aligns the working output to an anchor fiber as in \Cref{eq:relative-alignment-to-anchor}; the standard navigation form below additionally requires absolute anchor closure. Under that condition, the discriminator input for receiver \(i\) and satellite \(s\) is
\begin{equation}
\widetilde z_{i,s}
=\abs{g_{i,s}}\alpha_{i,s}^{\nav}(O_i)+\widetilde n_{i,s}.
\label{eq:pre-discriminator-restored}
\end{equation}
Because \(g_{i,s}\) is invariant under a change of internal-reference-element chart, \Cref{eq:pre-discriminator-restored} is likewise independent of the internal reference element.

This paper fixes a positive carrier-phase convention. The receiver's continuous carrier observation \(L_{i,s}\) is expressed in cycles, and
\begin{equation}
\widetilde\phi_{i,s}^{\nav}=2\pi L_{i,s}^{\nav}.
\label{eq:phase-cycle-convention}
\end{equation}
The restored standard undifferenced carrier model is
\begin{equation}
\lambda\widetilde L_{i,s}
=\rho_{i,s}(O_i)
+c_0(\delta t_i-\delta t_s)
+T_{i,s}-I_{i,s}
+\lambda N_{i,s}
+B_{i,s}^{\hw}
+\widetilde\epsilon_{i,s}.
\label{eq:standard-carrier-model}
\end{equation}
Here, \(N_{i,s}\in\Z\) is the arc-constant integer arising when the modulo-cycle carrier phase is lifted and the tracking arc is initialized, whereas \(B_{i,s}^{\hw}\in\R\) is a noninteger hardware bias expressed in metres. The latter is not an arbitrary container for a failed integer closure. If the solution coordinates are defined at a different mechanical point, they must be transferred to \(O_i\) using a known lever arm; this has no relation to the internal reference element.

\subsection{\texorpdfstring{$\Uone$}{U(1)} Branches, Continuous Lifts, and Cycle Counting}
Exact inverse phase transport uses the group element
\begin{equation}
(h_{i,s}^{A})^{-1}=\ee^{-\jj\psi_{i,s}^{A}}.
\end{equation}
Because
\begin{equation}
\ee^{-\jj(\psi_{i,s}^{A}+2\pi k)}
=\ee^{-\jj\psi_{i,s}^{A}},
\qquad k\in\Z,
\label{eq:u1-branch-independence}
\end{equation}
a pure \(\Uone\) inverse action is independent of the principal-value phase branch; a fixed \(2\pi k\) term need not be ``absorbed into the ambiguity.''

The actual integer-cycle structure is associated with the covering map
\begin{equation}
0\longrightarrow2\pi\Z
\longrightarrow\R
\xrightarrow{\exp(\jj\cdot)}\Uone
\longrightarrow1.
\label{eq:universal-cover-sequence}
\end{equation}
The continuous PLL phase state lives in \(\R\), and cycle counting is the action of the deck-transformation group \(2\pi\Z\). If the continuous-lift counter used for inverse transport, the weight commit, and the PLL cycle count are synchronized, the integer ambiguity is unchanged. An unsynchronized jump in any of these integer states appears as an artificial cycle slip.

\begin{theorem}[Cycle continuity under synchronized relative inverse transport]
Suppose \(g_{\rm old},g_{\rm new}\neq0\), and the old and new outputs are strictly synchronized same-source branches at the same sample boundary. Let the applied relative factor be the exact \(T_{aw,s}^{A}\), with \(a={\rm old}\) and \(w={\rm new}\), and choose its real lift consistently with the existing PLL tracking arc. If the weights, inverse factor, lift counter, and PLL state are committed atomically, relative inverse transport maps the new branch into the old PLL phase fiber without changing the integer cycle count.
\label{thm:relative-cycle-continuity}
\end{theorem}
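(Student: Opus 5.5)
The proof splits into a circular-phase part and an integer-lift part. The first follows from the same-source identity. The second uses the covering sequence \Cref{eq:universal-cover-sequence}.

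For the noiseless circular part, I would apply \Cref{thm:same-source-identity} with $a={\rm old}$ and $w={\rm new}$. Because the two branches are strictly synchronized same-source branches (\Cref{def:observable-transport}), they share one $\alpha_s^{\nav}$, and $T_{aw,s}^{A}=h_{\rm new}^A(h_{\rm old}^A)^{-1}$ is well defined since $g_{\rm old},g_{\rm new}\neq0$. Then \Cref{eq:relative-alignment-to-anchor} gives
\[
(T_{aw,s}^{A})^{-1}z_{\rm new}^{(0)}=\alpha_s^{\nav}\abs{g_{\rm new}}h_{\rm old}^A .
\]
Its phase element equals $\ph(z_{\rm old}^{(0)})=\ph(\alpha_s^{\nav})h_{\rm old}^A$. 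So the transported new branch lies in the old branch's $\Uone$ phase fiber, and only the magnitude changes. By \Cref{thm:u1-noise}, the deterministic rotation changes neither the noise magnitude nor the covariance spectrum. This part therefore introduces no phase step at the commit boundary.

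For the integer part, the PLL state is a real lift $\widetilde\phi$ in $\R$, and cycle counting is the deck action of $2\pi\Z$. Let $\widetilde\phi_{\rm old}(t_c^-)$ be the PLL lift just before the commit sample $t_c$. The transported quantity has the same circular phase, so its admissible real lifts are $\widetilde\phi_{\rm old}(t_c^-)+2\pi k$ for $k\in\Z$. The theorem requires the lift of $T_{aw,s}^A$ to be chosen consistently with the tracking arc. I take this to mean choosing the real lift $\widetilde\tau$ of $\Arg T_{aw,s}^A$ so that
\[
\widetilde\psi_{\rm new}^A-\widetilde\tau=\widetilde\psi_{\rm old}^A ,
\]
where $\widetilde\psi_{\rm old}^A$ is the existing continuous lift. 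This forces $k=0$. By \Cref{eq:observed-phase-sum}, the restored continuous phase equals $\widetilde\phi^{\nav}+\widetilde\psi_{\rm old}^A$, which is exactly the old arc's value, so $N_{i,s}$ in \Cref{eq:standard-carrier-model} is unchanged. By \Cref{eq:u1-branch-independence}, the group element applied to the data does not depend on the branch. Any $2\pi k$ ambiguity therefore sits only in the lift counter, never in the rotated sample.

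Atomicity is the step I expect to be the main obstacle. Suppose the weights, the inverse factor, the lift counter and the PLL state are committed at different samples. Then for some interval the discriminator sees either the untransported new branch, which produces a step $\Arg T_{aw,s}^A$ that may exceed the pull-in range, or a counter that disagrees with the PLL arc, which appears as a $2\pi k$ slip. I would argue from the synchronized-commit rule in \Cref{eq:atomic-switch-safety}. Within one sample boundary, all four states refer to the same fiber element and the same deck coset. The discriminator input is therefore continuous in $\Uone$, and the lift stays on the same sheet of the cover. Both endpoints are nonzero, so the lift is defined on each side, and no continuous path condition is needed. The $k=0$ choice then propagates by continuity along the new arc. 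Because the transport is exact, the PLL phase error is unchanged across the switch, and this is what the stated cycle-count invariance amounts to.
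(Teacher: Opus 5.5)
Your proposal is correct, and its core is the paper's argument. On the real lift, subtracting the consistently lifted increment $\widetilde\psi_{\rm new}^A-\widetilde\psi_{\rm old}^A$ from $\widetilde\phi^{\nav}+\widetilde\psi_{\rm new}^A$ leaves $\widetilde\phi^{\nav}+\widetilde\psi_{\rm old}^A$, so no $2\pi k$ deck transformation enters. Atomicity is a hypothesis that licenses that step rather than something you must derive, and your circular-phase check via \Cref{eq:relative-alignment-to-anchor} and \Cref{thm:u1-noise} only makes explicit what the paper leaves implicit.

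One small correction: without absolute closure the old fiber is not the standard model \Cref{eq:standard-carrier-model}. What this theorem preserves is the PLL cycle count; the claim about $N_{i,s}$ belongs to \Cref{thm:integer-preservation}.
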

\begin{proof}
On the real lift,
\begin{equation}
\widetilde\phi_{\rm new}^{\obs}
=\widetilde\phi^{\nav}+\widetilde\psi_{\rm new}^{A}.
\end{equation}
Subtracting the consistently lifted relative increment
\(\widetilde\psi_{\rm new}^{A}-\widetilde\psi_{\rm old}^{A}\) gives
\(\widetilde\phi^{\nav}+\widetilde\psi_{\rm old}^{A}\). No \(2\pi k\) deck transformation is introduced, so the PLL cycle count is unchanged. Failure of atomicity invalidates the last step and can create an artificial cycle slip.
\end{proof}
For an estimated factor, atomicity alone is insufficient: integer preservation additionally requires the residual phase to remain continuous and not to trigger the declared cycle-slip or loss-of-lock gate.

\begin{theorem}[Standard carrier form after absolute phase closure]
Suppose, in addition to the hypotheses of \Cref{thm:relative-cycle-continuity}, that the absolute composite phase is closed by a model, calibrated same-source anchor, or informed joint-state estimate; the closure supplies a lift consistent with the current tracking arc and is inversely transported continuously before the discriminator. If no unmodeled front-end nonlinearity, loss of lock, data-bit error, or unsynchronized cycle-count jump occurs, the output retains the standard structure in \Cref{eq:standard-carrier-model}; \(N_{i,s}\) is unchanged. A noninteger residual enters only as an explicit real-valued bias or random error.
\label{thm:integer-preservation}
\end{theorem}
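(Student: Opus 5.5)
The plan is to compose the relative result of \Cref{thm:relative-cycle-continuity} with the absolute anchor closure of \Cref{eq:absolute-restoration-after-anchor}, working throughout on the real lift rather than on \(\Uone\). By \Cref{thm:relative-cycle-continuity}, after the atomic commit the new branch has continuous observed phase \(\widetilde\phi^{\nav}+\widetilde\psi_{\rm old}^{A}\) on the existing tracking arc, with no deck transformation introduced. Next I use the closure hypothesis: the model, calibrated anchor, or informed joint-state estimate supplies \(h_{\rm old}^A\) together with a real lift \(\widetilde\psi_{\rm old}^A\) that agrees with the current arc. Subtracting this lift before the discriminator leaves \(\widetilde\phi^{\nav}(O_i)\) plus noise. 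In the complex domain this is exactly \Cref{eq:pre-discriminator-restored}, namely \(\abs{g_{i,s}}\alpha_{i,s}^{\nav}(O_i)+\widetilde n_{i,s}\). By \Cref{thm:chart-invariance}, the result does not depend on the internal reference element.

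The second step converts this to cycles via \Cref{eq:phase-cycle-convention}. I then identify the resulting phase with the physical propagation phase model, which yields range, clocks, atmosphere, \(\lambda N_{i,s}\) and \(B^{\hw}_{i,s}\) as in \Cref{eq:standard-carrier-model}. Integer invariance is argued through the covering sequence \Cref{eq:universal-cover-sequence}. The continuous PLL state evolves in \(\R\), and \(N_{i,s}\) changes only through a \(2\pi\Z\) deck action. Such an action could arise from only four sources: a loss of lock, a data-bit error, an unsynchronized counter jump, or a lift of the closure phase inconsistent with the arc. The first three are excluded by hypothesis, and the fourth is excluded by the consistency requirement on the closure lift. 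Front-end nonlinearity is excluded by \Cref{ass:front-end-linearity}, so the linear signal model underlying the factorization still holds.

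The third step handles imperfect closure. Write the applied phase as the true phase times \(\ee^{\jj\varepsilon}\). \Cref{thm:estimated-inverse}, applied both to the relative factor and to the anchor closure, shows that the only effect is an additive real phase \(-\varepsilon\) on the lift. This term can be assigned to \(\widetilde\epsilon_{i,s}\) if random, or to an explicit real bias if deterministic. It does not alter \(N_{i,s}\) as long as \(\varepsilon\) stays continuous and inside the cycle-slip gate, as noted after \Cref{thm:relative-cycle-continuity}.

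The main obstacle I expect is the lift bookkeeping: showing that the closure lift and the relative lift jointly produce a total subtraction that differs from the true \(\widetilde\psi_{\rm new}^A\) by zero cycles rather than by some \(2\pi k\). My approach is to require that the closure lift is defined by continuous propagation along the same arc from the arc's initialization epoch. Any constant \(2\pi k\) offset present at initialization would then be absorbed into the definition of \(N_{i,s}\) when the arc is initialized, and never afterwards. This makes ``unchanged'' well posed: the claim concerns constancy of \(N_{i,s}\) along the arc, not any particular absolute integer value.
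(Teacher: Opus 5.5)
Your proposal is correct and follows essentially the same route as the paper. You work on the real lift and subtract the continuous composite-phase lift, obtained as the relative increment from \Cref{thm:relative-cycle-continuity} plus the closed anchor lift (the two-step form of \eqref{eq:absolute-restoration-after-anchor}, where the paper subtracts \(\widetilde\psi_{i,s}^{A}\) in one step). This leaves \(\widetilde\phi_{i,s}^{\nav}\) plus an explicit real residual, and the covering sequence \eqref{eq:universal-cover-sequence} then shows that no deck transformation alters \(N_{i,s}\). The only difference is in handling a \(2\pi k\) offset: the paper uses the branch-independence of the \(\Uone\) inverse factor \eqref{eq:u1-branch-independence} with a separately maintained cycle count, whereas you absorb the offset into the arc-initialized \(N_{i,s}\); both make ``unchanged'' mean constancy along the arc.
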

\begin{proof}
The lifted observation satisfies \(\widetilde\phi_{i,s}^{\obs}=\widetilde\phi_{i,s}^{\nav}+\widetilde\psi_{i,s}^{A}\). Absolute inverse transport subtracts the same continuous lift \(\widetilde\psi_{i,s}^{A}\), leaving \(\widetilde\phi_{i,s}^{\nav}\) plus the stated residual. Changing a principal branch by \(2\pi k\) leaves the \(\Uone\) factor unchanged and does not alter the separately maintained deck-transformation count. Hence the arc-initialized integer \(N_{i,s}\) is preserved.
\end{proof}

\subsection{Unit-Consistent Propagation from Discriminator Residuals to Carrier Covariance}
Let the per-satellite pre-discriminator residual-phase covariance of receiver \(i\), \(\bm\Sigma_{\phi,i}^{\rm comp}\), be expressed in \(\mathrm{rad}^2\). Over a linearized tracking window, let \(\bm{\mathcal L}_i\) denote the linear operator from the sequence of discriminator residuals to the carrier-phase estimation error, and let the oscillator, dynamics-model, and quantization-error covariance \(\bm\Sigma_{\rm dyn,i}^{\rm rad}\) also be expressed in \(\mathrm{rad}^2\). The undifferenced carrier covariance in cycles squared is then
\begin{equation}
\bm\Sigma_{i,L}^{\rm comp}
=\frac{1}{(2\pi)^2}
\left(
\bm{\mathcal L}_i\bm\Sigma_{\phi,i}^{\rm comp}\bm{\mathcal L}_i\trans
+\bm\Sigma_{\rm dyn,i}^{\rm rad}
\right).
\label{eq:carrier-cycle-covariance}
\end{equation}
If the dynamics covariance is already expressed in cycles squared, the correct form is
\begin{equation}
\bm\Sigma_{i,L}^{\rm comp}
=\frac{1}{(2\pi)^2}
\bm{\mathcal L}_i\bm\Sigma_{\phi,i}^{\rm comp}\bm{\mathcal L}_i\trans
+\bm\Sigma_{\rm dyn,i}^{\rm cyc}.
\label{eq:carrier-cycle-covariance-alt}
\end{equation}
The two forms must not be mixed.

\subsection{Differencing Operators and the Double-Difference Model with Real-Valued Hardware Bias}
Adopt the inter-satellite difference
\begin{equation}
\Delta_{sr}f_q=f_{q,s}-f_{q,r},
\end{equation}
the between-receiver single difference
\begin{equation}
\nabla_{RB}f_s=f_{R,s}-f_{B,s},
\end{equation}
and the RTK double difference
\begin{equation}
\nabla_{RB}\Delta_{sr}f
=(f_{R,s}-f_{R,r})-(f_{B,s}-f_{B,r}).
\label{eq:double-difference-order}
\end{equation}
The ordering \(\nabla_{RB}\Delta_{sr}\) is used consistently throughout.

The restored carrier double-difference model is
\begin{equation}
\lambda\nabla_{RB}\Delta_{sr}\widetilde L
=\nabla_{RB}\Delta_{sr}\rho
+B_{sr}^{\rm atm}
+B_{sr}^{\hw}
+\lambda\nabla_{RB}\Delta_{sr}N
+\widetilde\epsilon_{sr}^{\DD},
\label{eq:double-difference-carrier}
\end{equation}
where
\begin{equation}
\nabla_{RB}\Delta_{sr}N\in\Z.
\end{equation}
The term \(B_{sr}^{\hw}\) may be set to zero only after hardware calibration or when a common value follows rigorously from identical synchronized architectures. Otherwise, it is a real-valued nuisance or deterministic bias and cannot be absorbed into the integer ambiguity.

If the composite array-response phase has not been inversely transported completely, the equivalent single-station inter-satellite bias in cycles and the corresponding RTK double-difference bias are
\begin{equation}
 b_{q,sr}^{A}
=\frac{1}{2\pi}\Delta_{sr}\widetilde\psi_q^A,
\label{eq:single-station-array-cycle-bias}
\end{equation}
and
\begin{equation}
 b_{A,RB}^{sr}
=\frac{1}{2\pi}\nabla_{RB}\Delta_{sr}\widetilde\psi^A,
\qquad
 d_{A,RB}^{sr}=\lambda b_{A,RB}^{sr}.
\label{eq:double-difference-array-bias}
\end{equation}
Phase in radians, bias in cycles, and bias in distance must not be interchanged.

\subsection{Double-Difference Covariance and Inter-Station Correlation}
Let \(\bm\Sigma_{R,L}^{\rm comp}\) and \(\bm\Sigma_{B,L}^{\rm comp}\) be the undifferenced carrier covariances of the rover and base station, let \(\bm\Sigma_{RB}\) be their cross-covariance, and let \(\bm D_{sr}\) be the inter-satellite differencing matrix. Then
\begin{equation}
\bm Q_{\DD}
=\bm D_{sr}
\left(
\bm\Sigma_{R,L}^{\rm comp}
+\bm\Sigma_{B,L}^{\rm comp}
-\bm\Sigma_{RB}-\bm\Sigma_{RB}\trans
\right)
\bm D_{sr}\trans.
\label{eq:double-difference-covariance}
\end{equation}
For a short baseline, the internal receiver noises may often be approximated as independent. If the receivers share an external frequency standard, interference source, network-correction state, or synchronization device, however, \(\bm\Sigma_{RB}\) must be retained. Correlation introduced by the reference satellite must be preserved in matrix form and cannot be reconstructed by concatenating independent single-satellite variances.

\subsection{Float-Ambiguity Information and Deterministic-Bias Propagation}
Write a multisatellite, multiepoch, multifrequency model as
\begin{equation}
\bm y
=\bm H_x\bm x
+\lambda\bm H_N\bm N
+\bm b_y
+\bm\epsilon,
\qquad
\bm\epsilon\sim\calN(\bm0,\bm Q_y),
\quad
\bm N\in\Z^n.
\label{eq:rtk-linear-bias-model}
\end{equation}
The vector \(\bm b_y\) collects residual array phase, uncalibrated hardware bias, timing mismatch, and other deterministic noninteger terms. Let
\begin{equation}
\bm W_y=\bm Q_y^{-1},
\end{equation}
and let \(\bm P_x\succeq0\) represent prior or dynamical information for the real-valued parameters. The bias formula below assumes that this prior is centered at the true \(\bm x\), or is unbiased in expectation. If its mean is displaced, the corresponding prior-mean term must be propagated together with \(\bm b_y\). The weighted residual operator after elimination of \(\bm x\) is
\begin{equation}
\bm M_x
=\bm W_y
-\bm W_y\bm H_x
(\bm H_x\trans\bm W_y\bm H_x+\bm P_x)^{-1}
\bm H_x\trans\bm W_y.
\label{eq:real-parameter-eliminator}
\end{equation}
Assume the relevant real-parameter normal matrix is nonsingular and \(\bm I_N\succ0\). The float-ambiguity information and covariance are
\begin{equation}
\bm I_N
=\lambda^2\bm H_N\trans\bm M_x\bm H_N,
\qquad
\bm Q_{\widehat N}=\bm I_N^{-1}.
\label{eq:float-ambiguity-information}
\end{equation}
The mean bias in the float ambiguity induced by a deterministic observation bias is
\begin{equation}
\bm\beta_N
=\bm I_N^{-1}
\lambda\bm H_N\trans\bm M_x\bm b_y.
\label{eq:float-ambiguity-bias}
\end{equation}
Thus, an array residual propagates along two parallel chains:
\begin{equation}
\bm\Sigma_\phi
\longrightarrow\bm Q_y
\longrightarrow\bm Q_{\widehat N}
\longrightarrow\text{ADOP/lattice distance/random risk},
\label{eq:covariance-propagation-chain}
\end{equation}
and
\begin{equation}
\bm b_\phi
\longrightarrow\bm b_y
\longrightarrow\bm\beta_N
\longrightarrow\text{Voronoi displacement/false-fix risk}.
\label{eq:bias-propagation-chain}
\end{equation}
Propagation of covariance alone is insufficient for an integer-integrity analysis.

\begin{theorem}[Monotonicity of the float-ambiguity covariance under observation-covariance improvement]
Assume that the design matrices and prior information are fixed, the joint model has full rank, and \(\bm Q_{y,0},\bm Q_{y,1}\succ0\). If
\begin{equation}
\bm Q_{y,1}\preceq\bm Q_{y,0},
\label{eq:observation-covariance-improvement}
\end{equation}
then the joint parameter covariance and its ambiguity principal submatrix do not increase:
\begin{equation}
\bm Q_{\widehat N,1}\preceq\bm Q_{\widehat N,0}.
\label{eq:ambiguity-covariance-monotonicity}
\end{equation}
\label{thm:ambiguity-cov-monotonicity}
\end{theorem}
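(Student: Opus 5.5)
The plan is to work with the joint normal matrix of the real-valued and integer-relaxed parameters, rather than with the eliminated operator \(\bm M_x\) directly. Stack \(\bm\theta=(\bm x,\bm N)\) and \(\bm A=[\bm H_x,\ \lambda\bm H_N]\). With fixed prior information \(\bm P_x\) embedded as \(\bm P=\diag(\bm P_x,\bm 0)\), the joint information is
\[
\bm F_k=\bm A\trans\bm Q_{y,k}^{-1}\bm A+\bm P,\qquad k=0,1.
\]
The full-rank hypothesis makes each \(\bm F_k\succ0\). The ambiguity block of \(\bm F_k^{-1}\) equals the inverse of the Schur complement \(\lambda^2\bm H_N\trans\bm M_{x,k}\bm H_N=\bm I_{N,k}\), so by \Cref{eq:float-ambiguity-information} it coincides with \(\bm Q_{\widehat N,k}\). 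Verifying this identification via the standard block-inverse formula is the first step.

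Next, use that inversion is order-reversing on the positive-definite cone (as in the proof of \Cref{thm:crlb-monotonicity}). From \(\bm Q_{y,1}\preceq\bm Q_{y,0}\) with both positive definite, we get \(\bm Q_{y,1}^{-1}\succeq\bm Q_{y,0}^{-1}\). A congruence by \(\bm A\) preserves the semidefinite order, and adding the common \(\bm P\) preserves it too, so \(\bm F_1\succeq\bm F_0\succ0\). Applying order reversal once more gives \(\bm F_1^{-1}\preceq\bm F_0^{-1}\), which is the joint-covariance statement.

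Finally, pass to the principal submatrix. Let \(\bm S_N\) be the selection matrix onto the ambiguity coordinates. Then \(\bm Q_{\widehat N,k}=\bm S_N\bm F_k^{-1}\bm S_N\trans\), and congruence with \(\bm S_N\) preserves order, exactly as in \Cref{eq:position-crlb-monotonicity}. This yields \Cref{eq:ambiguity-covariance-monotonicity}.

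The only step needing care is the first: confirming that \(\bm Q_{\widehat N}\) as defined through \(\bm M_x\) is indeed the ambiguity block of the joint inverse. This requires \(\bm H_x\trans\bm W_y\bm H_x+\bm P_x\) to be nonsingular and \(\bm I_N\succ0\), both of which are assumed. A fallback is to argue directly: write \(\bm I_{N,k}\) as the Schur complement of \(\bm F_k\). Schur complements of positive-definite matrices are monotone, since for fixed vectors the Schur complement is the minimum over the eliminated block of a quadratic form that is monotone in \(\bm F\). Hence \(\bm I_{N,1}\succeq\bm I_{N,0}\), and inverting gives the result.
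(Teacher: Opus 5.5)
Your proof is correct and follows the paper's argument: invert the observation-covariance order, carry it by congruence to the joint normal matrix for \((\bm x,\bm N)\), invert again, and take the ambiguity principal submatrix. Your explicit check that this principal submatrix equals \(\bm I_N^{-1}\) adds something the paper leaves implicit, namely that the Schur complement \(\lambda^2\bm H_N\trans\bm M_x\bm H_N\) identifies it with the \(\bm Q_{\widehat N}\) of \Cref{eq:float-ambiguity-information}.
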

\begin{proof}
The covariance order implies the reverse information order, \(\bm Q_{y,1}^{-1}\succeq\bm Q_{y,0}^{-1}\). Hence the full joint normal matrix for \((\bm x,\bm N)\) increases in the semidefinite order. Matrix inversion reverses that order, and taking the ambiguity principal submatrix preserves it, which gives \Cref{eq:ambiguity-covariance-monotonicity}.
\end{proof}
This theorem concerns covariance only. If a weight improves the covariance while increasing \(\bm\beta_N\), the correct-fix probability may still decrease.

\subsection{Integer Voronoi Geometry and Biased False-Fix Risk}
Define the integer Voronoi cell under the ILS metric by
\begin{equation}
\calV_{\bm0}
=\left\{\bm v\in\R^n:
\bm v\trans\bm Q_{\widehat N}^{-1}\bm v
\le
(\bm v-\bm z)\trans\bm Q_{\widehat N}^{-1}(\bm v-\bm z),
\ \forall\bm z\in\Z^n\setminus\{\bm0\}
\right\}.
\label{eq:integer-voronoi-cell}
\end{equation}
If the float-ambiguity error is
\begin{equation}
\widehat{\bm N}-\bm N
=\bm\beta_N+\bm\varepsilon_N,
\qquad
\bm\varepsilon_N\sim\calN(\bm0,\bm Q_{\widehat N}),
\label{eq:biased-float-ambiguity}
\end{equation}
then, for an unvalidated nearest-lattice-point ILS decision, the true integer is among the nearest candidates exactly on the event
\begin{equation}
\bm\beta_N+\bm\varepsilon_N\in\calV_{\bm0}.
\label{eq:correct-fix-event}
\end{equation}
A positive-definite Gaussian model assigns zero probability to the Voronoi boundary, so the nearest candidate is unique almost surely. A validation test subsequently partitions candidate selection into correct fixing, float retention, and wrong fixing.

In the absence of random error, the true integer lies in the strict interior of its Voronoi region and is the unique ILS decision if and only if, for every \(\bm z\neq\bm0\),
\begin{equation}
\abs{\bm z\trans\bm Q_{\widehat N}^{-1}\bm\beta_N}
<\frac12
\bm z\trans\bm Q_{\widehat N}^{-1}\bm z.
\label{eq:voronoi-bias-condition}
\end{equation}
Define
\begin{equation}
 d_{\bm z}^2
=\bm z\trans\bm Q_{\widehat N}^{-1}\bm z,
\qquad
 d_{\min}^2
=\min_{\bm z\in\Z^n\setminus\{0\}}d_{\bm z}^2.
\label{eq:integer-lattice-distance}
\end{equation}
A concise sufficient condition is
\begin{equation}
\norm{\bm\beta_N}_{\bm Q_{\widehat N}^{-1}}
<\frac12d_{\min}.
\label{eq:bias-norm-sufficient}
\end{equation}

For a competing integer direction \(\bm z\), the biased pairwise boundary-crossing probability is
\begin{equation}
P_{\bm z}(\bm\beta_N)
=Q\!\left(
\frac{\tfrac12d_{\bm z}^2
-\bm z\trans\bm Q_{\widehat N}^{-1}\bm\beta_N}
{d_{\bm z}}
\right),
\label{eq:biased-pairwise-error}
\end{equation}
where \(Q(\cdot)\) is the right-tail probability of a standard Gaussian random variable. This is a one-sided crossing probability toward the specified competitor \(\bm z\), not the total false-fix probability. It nevertheless shows how a bias projection along a shortest lattice direction can sharply increase a dominant pairwise error mechanism even when covariance and ADOP improve.
For the unvalidated nearest-lattice ILS decision, the union bound gives
\begin{equation}
\Pr\{\text{wrong ILS candidate}\}
\le
\min\!\left\{1,
\sum_{\bm z\in\Z^n\setminus\{\bm0\}}
P_{\bm z}(\bm\beta_N)\right\}.
\label{eq:wrong-candidate-union-bound}
\end{equation}
Practical evaluation may truncate the sum only after bounding the omitted tail.

\subsection{ADOP, Bootstrap Success Rate, and Their Limitations}
The ambiguity dilution of precision is
\begin{equation}
\mathrm{ADOP}
=\det(\bm Q_{\widehat N})^{1/(2n)}.
\label{eq:adop}
\end{equation}
After LAMBDA integer decorrelation, if the conditional standard deviations are \(\sigma_{i|I}\), the bootstrap success rate is
\begin{equation}
P_B
=\prod_{i=1}^{n}
\left[2\Phi\!\left(\frac{1}{2\sigma_{i|I}}\right)-1\right],
\label{eq:bootstrap-success}
\end{equation}
where \(\Phi\) is the standard Gaussian cumulative distribution function\cite{teunissen1995,teunissen1998}. ADOP and \(P_B\) mainly describe zero-mean covariance geometry. When \(\bm\beta_N\neq0\), they must be used jointly with \Cref{eq:voronoi-bias-condition,eq:biased-pairwise-error}.

An RTK decision must distinguish correct fixing, retention of a float solution, and false fixing. Let \(P_{\rm correct}\), \(P_{\rm float}\), and \(P_{\rm wrong}\) denote these mutually exclusive outcome probabilities. One may define the risk
\begin{equation}
\calR(\bmw)
=P_{\rm correct}(\bmw)L_c
+P_{\rm float}(\bmw)L_f
+P_{\rm wrong}(\bmw)L_w,
\qquad
L_w\gg L_f\ge L_c,
\label{eq:rtk-decision-risk}
\end{equation}
or constrain the false-fix probability directly. Increasing the nominal fix rate while enlarging the false-fix tail is not a performance improvement.

\subsection{RTK Evaluation Interface for a Single Common Weight Vector}
For a candidate common weight vector \(\bmw\), estimate the per-satellite residual mean \(\bm b_\phi(\bmw)\) only after absolute anchor closure, and estimate the relative covariance \(\bm\Sigma_\phi(\bmw)\) from the same-source auxiliary branches with all cross-covariances retained. Then use \Cref{eq:carrier-cycle-covariance,eq:double-difference-covariance,eq:float-ambiguity-bias} to obtain \(\bm Q_{\widehat N}(\bmw)\) and \(\bm\beta_N(\bmw)\). Relative branches alone cannot determine the absolute residual mean. The basic theory can provide
\begin{equation}
J_N(\bmw)=\log\det\bm Q_{\widehat N}(\bmw),
\qquad
J_\beta(\bmw)=\norm{\bm\beta_N(\bmw)}_{\bm Q_{\widehat N}^{-1}},
\label{eq:ambiguity-weight-objectives}
\end{equation}
together with a biased false-fix upper bound. Candidate weights should also satisfy
\begin{align}
\bmc_{\PI}\herm\bmw&=1,
\label{eq:weight-pi-constraint}\\
\bmw\herm\bm R_{J+N}\bmw&\le\xi_J,
\label{eq:weight-interference-constraint}\\
C/N_{0,s}(\bmw)&\ge\Gamma_{s,\min},
\quad s\in\calS_{\rm tr},
\label{eq:weight-cn0-constraint}\\
\abs{g_s(\bmw)}&\ge\varepsilon_s,
\quad
\chi_{wa,s}(\bmw,\bmw_a)\ge\chi_{\min},
\label{eq:weight-phase-safety}\\
\kappa_{w,s}(\bmw)&\le\bar\kappa_{w,s}.
\label{eq:weight-conditioning-constraint}
\end{align}
All satellites share the same common weight vector. Per-satellite coherent statistics are used only to evaluate its combined effect on the satellites and on the integer model. The specific optimizer belongs to an upper-level active-observation control theory.

\subsection{Distinction from Multi-Antenna ARTK}
Multi-antenna ARTK retains carrier observations from multiple physical antennas, inter-antenna integers, and known rigid geometry, and can form a high-precision virtual antenna by data reduction\cite{li2025artk}. In the present architecture, element data are combined into the primary output before acquisition or tracking, and auxiliary branches do not establish independent physical integers. The information sources are different: ARTK exploits complete multi-antenna observation redundancy, whereas SN-ASMO exploits single-output interference suppression, same-source composite-phase restoration, and task-oriented observation-quality evaluation. The two can be combined hierarchically, but they cannot be counted twice within the same statistical model.
\section{Materials and Methods}
The observable quantities and test protocols below separate algebraic unit tests from empirical predictions. They do not report experiments not conducted; they define the repeated trials, controls, uncertainty records, and failure criteria required to test the theory.

\subsection{Algebraic and Software Unit Tests}
The following identities can be checked sample by sample in double-precision simulations and in FPGA or software-receiver unit tests.

\begin{enumerate}
\item \textbf{Internal-reference-element invariance:} randomly generate nonzero \(\bmc_s^{(O)}\) and \(\bmw\), and for every valid pair \(\mu,\nu\) verify
\begin{equation}
\frac{\abs{\beta_{\mu,s}^{(O)}\gamma_{\mu,s}
-\beta_{\nu,s}^{(O)}\gamma_{\nu,s}}}
{\abs{g_s}}
\le\epsilon_{\rm num}.
\label{eq:test-chart-invariance}
\end{equation}
\item \textbf{Paired phase-coordinate invariance:} for a random admissible physical-reference phase field \(\chi=\chi(\bmxi)\), independent of \(\bmeta\), verify
\begin{equation}
\abs{
\ee^{-\jj\chi}\alpha_s^{\nav}
\ee^{\jj\chi}g_s
-\alpha_s^{\nav}g_s}
\le\epsilon_{\rm num}\abs{\alpha_s^{\nav}g_s}.
\label{eq:test-paired-gauge}
\end{equation}
\item \textbf{Exact same-statistic replacement:} for arbitrary nonzero complex numbers \(z_a,z_w\), verify the relative error of \Cref{eq:exact-phase-replacement}. This test is independent of SNR.
\item \textbf{Tangent-space consistency:} generate random \(\delta\bmtheta^n,\delta\bmu_s^n\), verify \Cref{eq:los-attitude-chain} by finite differences, and check that \((\bmu_s^a)\trans\delta\bmu_s^a=O(\norm\delta^2)\).
\item \textbf{Complex-projection consistency:} compare the kernel, rank, and projection residual of the real nuisance projection in \Cref{eq:intrinsic-jacobian} with those of the complex projection in \Cref{eq:projected-complex-jacobian}.
\end{enumerate}

If any identity fails systematically after numerical error has been controlled, the reference convention, complex-conjugation direction, or attitude coordinates used in the implementation are inconsistent with the theory.

\subsection{Static-Interference and Weight-Variation Experiment}
The first objective is to test \Cref{prop:gnss-dual-axis-witness} directly. Use repeated matched trials or reproducible RF replay. Within each paired trial, hold the physical condition, the non-weight formation coordinates, and one coherent element-sample block fixed while applying two synchronized admissible weights digitally. Estimate the conditional means across matched trials and test whether
\begin{equation}
\widehat{\E}[z_{s,0}-z_{s,1}\mid\bmxi_0]
\approx
\widehat\alpha_s^{\nav}(\bmxi_0)
(\bmw_0-\bmw_1)\herm\widehat\bmc_s^{(O)}(\bmxi_0)\neq0
\label{eq:empirical-dual-axis-witness}
\end{equation}
within the declared calibration and sampling uncertainty. This same-block comparison, rather than an inter-epoch sample difference, is the empirical witness that the observer-state axis cannot be deleted.

Next hold the receiving platform and interference direction fixed. Use live-sky signals, a GNSS simulator, or recorded replay to generate coherent multisatellite element data, while slowly varying the interference power or switching the PI weights according to a prescribed sequence. Record synchronously
\begin{equation}
\{\bmw[n],\ \widehat\bmc_s^{(O)}[n],\ z_{w,s}[n],\ z_{a,s}[n],\ C/N_{0,s}[n]\},
\end{equation}
where \(\widehat\bmc_s^{(O)}\) is a calibration estimate with reported uncertainty, not a directly observed true response. The model-predicted phase is
\begin{equation}
\psi_{s,\rm mod}^A[n]
=\Arg(\bmw[n]\herm\widehat\bmc_s^{(O)}[n]),
\label{eq:static-predicted-phase}
\end{equation}
and the same-source measured relative phase is
\begin{equation}
\psi_{s,\rm rel}^A[n]
=\Arg(z_{w,s}[n]z_{a,s}^{*}[n]).
\label{eq:static-measured-relative-phase}
\end{equation}
After closure of the anchor phase, compare the residual
\begin{equation}
 r_{\psi,s}[n]
=\operatorname{wrap}_{\pi}
(\psi_{s,\rm obs}^A[n]-\psi_{s,\rm mod}^A[n]).
\label{eq:static-phase-residual}
\end{equation}
The acceptance condition is not a prespecified universal numerical threshold. Rather, the residual mean should be consistent with array-calibration uncertainty, its variance should be consistent with the joint-covariance prediction in \Cref{eq:transport-phase-variance}, and a change of internal-reference-element chart should not change the result. If different internal reference elements produce statistically significant differences in the composite phase, then \Cref{thm:chart-invariance} is falsified at the implementation level.

\subsection{High-Dynamic Turntable and Attitude-Phase-Dynamics Experiment}
Mount the array on a three-axis turntable with measured angular velocity and attitude. Maintain traceable satellite directions using a GNSS simulator or live sky, and hold the interference field fixed or measurable. Compute the prediction of \Cref{eq:element-phase-dynamics} from attitude, ephemerides, and baselines, and estimate the phase rate from coherent element observations and composite responses.

For the relative baseline of element \(m\), define the normalized dynamics residual
\begin{equation}
 r_{m\mu,s}^{\rm dyn}(t)
=\dot\phi_{m\mu,s}^{\rm meas}(t)
+\kappa\left[
(\dot{\bmu}_s^n)\trans\bm r_{m\mu}^n
+(\bm r_{m\mu}^n\times\bmu_s^n)\trans\bm\omega_{a/n}^n
\right].
\label{eq:dynamic-test-residual}
\end{equation}
Pure translation, pure rotation, and mixed motion should be tested separately to verify the separation between propagation Doppler and relative array-phase dynamics. If the inverse-transported carrier remains systematically correlated with turntable angular velocity beyond calibration uncertainty while the normal propagation geometry is correctly modeled, then the composite-response dynamics model is not closed.

\subsection{Injected Branch-Timing-Mismatch Experiment}
In a digital receiver, impose a known sample delay \(\delta t\) on either the working or anchor branch while leaving all other processing unchanged. For a sequence of positive and negative delays, fit the residual phase to
\begin{equation}
 r_{\rm sync}(\delta t)
=a_1\delta t+a_2\delta t^2+o(\delta t^2).
\label{eq:timing-fit}
\end{equation}
The theory predicts
\begin{equation}
 a_1=2\pi f_{D,s}+\dot{\widetilde\psi}_{w,s}^{A},
\qquad
 a_2=\frac12(2\pi\dot f_{D,s}+\ddot{\widetilde\psi}_{w,s}^{A}).
\label{eq:timing-fit-prediction}
\end{equation}
Verifying only the geometric-Doppler coefficient under static conditions, while neglecting the high-dynamic array-phase rate, is insufficient to validate \Cref{eq:timing-mismatch-second-order}.

\subsection{Near-Null Response and Phase-Conditioning Experiment}
Use a controlled weight trajectory to drive \(\abs{g_s}\) for one satellite progressively toward, but not through, its safety threshold. Record phase variance, sensitivity to weight perturbations, and tracking status. In the regular high-SNR local-Gaussian regime, before circular-phase collapse, the theory predicts
\begin{equation}
\Var(\widehat\psi_s^A)
\propto\frac{1}{\abs{g_s}^2},
\qquad
\kappa_{w,s},\kappa_{c,s}
\propto\frac{1}{\abs{g_s}},
\label{eq:near-null-prediction}
\end{equation}
and the norm of the dynamic equivalent phase center in \Cref{eq:dynamic-phase-center-closed} may increase rapidly. If the measured phase distribution is visibly heavy-tailed or nearly uniform, Gaussian fitting should be stopped and coherence, circular statistics, and loss-of-lock rate should be reported instead.

\subsection{Random-Weight and Conditional-Fisher-Information Experiment}
Design two processing modes:
\begin{enumerate}
\item \textbf{Data separation:} estimate the weights from an independent training block and estimate the state from a test block;
\item \textbf{Same-block adaptation:} derive both the weights and the state from the same noisy block.
\end{enumerate}
Compare the empirical error covariance with three predictions: the Fisher lower bound conditioned on given weights, a marginal approximation that ignores weight randomness, and a joint weight--state model. If the empirical error under same-block adaptation is systematically larger than the conditional Fisher prediction, while the joint model explains the discrepancy, the necessity of \Cref{ass:conditional-weight} is confirmed.

For a standard observation \(\bm y_0\) and an auxiliary observation \(\bm y_a\) sharing raw samples, compare
\begin{equation}
\bm F_0+\bm F_a
\quad\text{and}\quad
\bm F_0+\bm F_{a|0}.
\end{equation}
Monte Carlo error covariance or an empirical CRLB should confirm that only the conditional increment avoids double counting.

\subsection{RF-Replay and Atomic-Weight-Commit Experiment}
Use a synchronized multichannel RF recording and retain the sample ID of every element. For each sample batch, compute the active- and candidate-weight prompts in parallel and, at the same sample boundary, commit
\begin{equation}
(\bmw_{\rm active},\ \widehat T_{\rm active\to candidate,s},\ k_s^{\rm lift})
\longmapsto
(\bmw_{\rm candidate},\ \bm I,\ k_s^{\rm lift,new}).
\label{eq:atomic-commit-state}
\end{equation}
Compare this implementation with an intentionally non-atomic implementation displaced by one or more samples. The theory predicts no additional phase step under a correct atomic commit, whereas the displaced implementation should exhibit a residual dependent on Doppler and array dynamics according to \Cref{eq:timing-mismatch-second-order}.

\subsection{RTK Bias--Covariance and Integer-Risk Experiment}
Inject a known array residual \(\bm b_y\) and correlated noise \(\bm Q_y\) into base and rover observations, or obtain these quantities from real array replay. Verify three levels of prediction:
\begin{enumerate}
\item the mean float-ambiguity bias agrees with \Cref{eq:float-ambiguity-bias};
\item the float covariance agrees with \Cref{eq:float-ambiguity-information}; and
\item the trend and order of empirical boundary-crossing probability along competing integer directions agree with \Cref{eq:biased-pairwise-error}.
\end{enumerate}
Report the fix rate, false-fix rate, float-retention rate, ADOP, \(d_{\min}\), \(\norm{\bm\beta_N}_{\bm Q^{-1}}\), and the most dangerous integer direction. A fix rate alone cannot test the theory of bias risk.

\subsection{Core Auditable Claims}
The chart and deterministic-rotation rows below are algebraic/software identities; the remaining rows are model-conditional empirical predictions. Keeping these classes separate prevents a numerical identity check from being presented as an independent physical validation.
\begin{table}[htbp]
\centering
\caption{Core SN-ASMO claims and their audit or falsification conditions}
\label{tab:falsifiability}
\small
\begin{tabularx}{\textwidth}{>{\raggedright\arraybackslash}p{0.24\textwidth}>{\raggedright\arraybackslash}X>{\raggedright\arraybackslash}X}
\toprule
Theoretical proposition & Observable prediction & Observation constituting falsification after assumptions are controlled\\
\midrule
Nonphysicality of the internal reference element & Chart switching leaves \(g_s\), the restored phase, and RTK results unchanged & A valid chart switch changes the composite output systematically\\
GNSS dual-axis strong witness & At fixed \(\bmxi_0\) and a common sample block, admissible synchronized weights with \((\bmw_0-\bmw_1)\herm\bmc_s^{(O)}\neq0\) produce distinct conditional means & The outputs remain equal beyond uncertainty despite a verified nonzero predicted response difference\\
Synchronized same-source cancellation & The common navigation complex factor vanishes from the conjugate product & A term correlated with the common navigation phase remains under strict synchronization and common sourcing\\
Deterministic \(\Uone\) invariance & Noise-sample magnitude and covariance spectrum are preserved; Fisher information is preserved for a known parameter-independent rotation & Such a rotation changes noise power, or changes Fisher information under the stated conditions\\
Near-null degradation & Phase variance and condition numbers grow as \(\abs g^{-1}\) or \(\abs g^{-2}\) & No sensitivity growth occurs near a response zero under the same noise model\\
Conditional Fisher increment & Added information from correlated branches equals conditional, not marginally added, information & The conditional decomposition fails under regular joint-Gaussian conditions\\
Integer-bias risk & Float bias along a shortest lattice direction increases false-fix probability & A controlled bias change is systematically inconsistent with the empirical decision boundary\\
\bottomrule
\end{tabularx}
\end{table}

An experimental report must state the model-validity conditions, front-end linear range, synchronization accuracy, data dependence of the weight estimate, calibration uncertainty, and all data-exclusion rules. Only after these quantities have been observed and documented does a discrepancy between prediction and data have a well-defined falsifying meaning.
\section{Discussion, Theoretical Boundaries, and Conclusions}
\subsection{Domain of Validity}
SN-ASMO does not force all array-GNSS problems into one fixed-dimensional global manifold. It provides rigorous interfaces among observation formation, reference covariance, group actions, statistical information, and the integer structure. The following boundaries are constitutive parts of the theory, not ancillary disclaimers.

\begin{enumerate}
\item \textbf{Physical invariance does not imply numerical invariance of every observation.} The observer state may change the recorded complex statistic and the information it retains; the theory preserves the physical propagation content through explicit covariance and transport laws.
\item \textbf{High-dynamic platform motion has two physical pathways.} It changes range and Doppler through \(\alpha_s^{\nav}\) and changes the composite response through \(D_{\bmxi}g_s\). Inverse transport acts on the declared response phase; it does not invert the navigation-propagation factor.
\item \textbf{Ordinary satellite motion is not an independent ``erroneous phase source.''} Orbital motion enters primarily through the normal propagation phase; its slow change of direction of arrival is merely a physical parameter variation of the point on the array manifold.
\item \textbf{Relative observability does not imply absolute identifiability.} Under the unconstrained common-complex-factor auxiliary model, purely relative same-source branches determine only a projective class. Absolute restoration must be closed by a model, a calibrated anchor, or an informed joint state.
\item \textbf{A group action must correspond to a genuine degree of freedom.} The primary carrier phase and integer ambiguity are quantities of interest. Over-quotienting removes genuine information according to \Cref{eq:overquotient-rank-loss}.
\item \textbf{An exact \(\Uone\) inverse action does not create noise reduction.} Statistical improvement comes from spatial suppression, correlated common-mode cancellation, temporal averaging, or improved joint estimation. Estimated phase errors must be propagated explicitly.
\item \textbf{Local geometry cannot be globalized unconditionally.} Constant rank, complex logarithms, projective atlases, continuous lifts, and Fisher information are valid only on their respective regular domains.
\item \textbf{A single output does not create physical dimensions.} If auxiliary statistics are not retained, the gain is primarily a covariance improvement. Only statistics retained legitimately and modeled jointly can contribute new conditional information.
\item \textbf{A narrowband carrier theory does not replace a broadband propagation theory.} Group delay, code-correlation shape, beam squint, and near-field wavefronts require an expanded observation space.
\item \textbf{Front-end nonlinearity is an irreversible boundary.} Information lost through saturation, clipping, or irreversible quantization cannot be recovered by a downstream phase-group action.
\item \textbf{Fisher information is not a complete performance characterization.} Low SNR, heavy tails, multimodality, integer boundaries, and finite samples still require the complete likelihood, Monte Carlo analysis, and physical experiments.
\item \textbf{Covariance improvement does not guarantee improvement in correct fixing.} If the deterministic float bias increases along a dangerous lattice direction, false-fix risk may still rise.
\end{enumerate}

\subsection{Precise Statement of Theoretical Novelty}
Previous research has discussed and validated carrier-phase bias caused by array spatial processing, run-time estimation and compensation, low-distortion processing, and RTK under interference\cite{church2009,obrien2008,xie2021,li2023,wang2024,wang2025,bamberg2023}. Accordingly, this paper does not claim any single compensation method, the \(\Uone\) group, projective space, Fisher information, or RTK integer theory as an isolated invention.

The contribution is not any one of these established ingredients in isolation, but their typed composition within precise satellite-navigation array observation. Starting from the protocol-supplied roles, the fixed-physical-state GNSS witness exhibits the irreducible formation-side dependence, and the behavioral quotient gives that dependence its minimal observer-state identity. SN-ASMO then derives the admissible dual-axis interface, the complete response differential, and the response-tracking PTFE. It states the boundaries among path transport, endpoint algebra, relative alignment, and absolute closure, and carries residual mean and covariance to RTK cycle-continuity and integer-risk conditions. Each step is tied to an explicit regularity, synchronization, or identifiability condition.

\subsection{Conditional Compatibility and RTK Performance}
\begin{theorem}[Sufficient conditions for compatible anti-jamming, high-dynamic reception, and RTK preservation]
Assume that a common weight sequence exists that satisfies the PI, interference-suppression, \(C/N_0\), response-amplitude, transport-coherence, and conditioning constraints in \Cref{eq:weight-pi-constraint,eq:weight-interference-constraint,eq:weight-cn0-constraint,eq:weight-phase-safety,eq:weight-conditioning-constraint}. Assume also:
\begin{enumerate}
\item the front end remains linear; the stated narrowband model, coherent-window approximation, and synchronization-error budget remain valid under the platform dynamics;
\item every continuous update follows an admissible path in the nonzero safe domain, and both endpoints of any genuinely atomic switch are nonzero;
\item same-source branches share the declared samples, clock, NCOs, data-bit processing, and integration operator, and the exact transport factor \(T_{aw,s}^{A}\) is used;
\item the weight, inverse factor, phase-lift counter, and PLL state are committed atomically; an absolute phase closure supplies a lift consistent with the current tracking arc and is inversely transported continuously before the discriminator; and no loss of lock, data-bit error, or unsynchronized cycle-count jump occurs.
\end{enumerate}
Then interference suppression, carrier-phase continuity under the admitted dynamics, and preservation of the standard carrier model and integer structure are mutually compatible within the same receiver chain.
\label{thm:conditional-simultaneous-rtk}
\end{theorem}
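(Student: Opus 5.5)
The proof is an assembly argument: each of the three claimed properties (interference suppression, carrier continuity under the admitted dynamics, preservation of the standard carrier model and integer structure) is obtained from a result already established, and we then check that the hypotheses of those results are met simultaneously by the one common weight sequence. Fix the feasible sequence \(\bmw[n]\) supplied by the first assumption. Interference suppression then holds by construction, from \Cref{eq:weight-pi-constraint,eq:weight-interference-constraint,eq:weight-cn0-constraint}. Hypothesis~1 (front-end linearity, \Cref{ass:front-end-linearity}; narrowband model, \Cref{ass:narrowband}; freezing index \Cref{eq:freeze-index}) guarantees that the scalar model \Cref{eq:frozen-scalar-model}, and hence \Cref{eq:complete-scalar-output}, is valid epoch by epoch. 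The same hypothesis guarantees that the timing residual \Cref{eq:timing-mismatch-second-order} stays within \(\epsilon_{\rm sync}\) through \Cref{eq:timing-tolerance}, with the composite array-phase rate included.

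Next we handle carrier continuity. Partition the weight trajectory into continuous segments and atomic switch instants. On a continuous segment, hypothesis~2 and \Cref{eq:continuous-switch-safety} place the path in \(\calD_s^{\times}\cap\calD_{\rm safe}\). \Cref{thm:snasmo-ptfe} then gives the transport \(T_s^A[\gamma]\) together with the continuous lift \(\dot{\widetilde\psi}_s^A=\Ima(\dot g_s/g_s)\). \Cref{thm:global-lift} applied along the one-dimensional path shows that this lift exists and is unique up to a constant. That constant is fixed by the tracking arc, so this step also covers the physical-axis contribution \(D_{\bmxi}g_s\) induced by the admitted dynamics. At an atomic switch, \Cref{eq:atomic-switch-safety} ensures \(g_{\rm old},g_{\rm new}\neq0\). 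Hypothesis~3 makes the two branches strictly synchronized same-source branches, so \Cref{thm:same-source-identity} shows that the data factor equals \(T_{aw,s}^A\). Hypothesis~4 (atomic commit) then lets \Cref{thm:relative-cycle-continuity} keep the PLL cycle count unchanged. Inducting over segments and switches yields a single continuous lift \(\widetilde\psi_s^A\) along the entire arc, with no artificial deck transformation.

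Finally we recover the standard carrier model. Hypothesis~4's absolute closure, consistent with the current arc, is exactly the extra hypothesis of \Cref{thm:integer-preservation}. That theorem therefore returns \Cref{eq:standard-carrier-model} with \(N_{i,s}\) unchanged, for base and rover alike. Double differencing, which is linear, then gives \Cref{eq:double-difference-carrier} with \(\nabla_{RB}\Delta_{sr}N\in\Z\). Because the exact factor is used, \Cref{thm:exact-inverse} adds no phase bias, and any residual enters only as an explicit real-valued term.

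The main obstacle is proving that the per-segment conclusions glue into one arc-wide statement. The lift counter at each atomic boundary must agree with the lift inherited from the preceding PTFE segment, so that the branch choices, which are free under \Cref{eq:u1-branch-independence}, never shift the deck count. I would handle this by an induction on switch instants. At each switch, the anchor lift is taken to be the endpoint value of the previous segment's integrated \(\omega_s^A\), and atomicity then propagates that choice to the next segment. I would also note that the compatibility established is conditional: it covers phase structure and integer preservation, but not fix-risk performance, which still depends on \Cref{eq:voronoi-bias-condition}.
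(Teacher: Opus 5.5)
Your proposal is correct and follows essentially the paper's own assembly argument: the weight constraints give interference suppression, continuous updates invoke \Cref{thm:snasmo-ptfe}, atomic switches use the exact same-source endpoint factor, and \Cref{thm:relative-cycle-continuity,thm:integer-preservation} then deliver cycle continuity and the standard carrier model. Your explicit induction over switch instants and the double-difference extension are extra detail that the paper leaves implicit. One wording fix: across an atomic switch the working branch's composite phase \(\widetilde\psi_s^A\) genuinely jumps, because the switch traverses no path. What your induction glues is therefore an arc-consistent lift whose jumps are exactly cancelled by the applied factor \(T_{aw,s}^{A}\), which gives continuity of the inverse-transported PLL phase rather than one continuous \(\widetilde\psi_s^A\).
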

\begin{proof}
The weight constraints provide interference suppression and usable satellite amplitudes. On continuous updates, the nonzero-path hypothesis invokes \Cref{thm:snasmo-ptfe}; on atomic switches, the stated exact same-source endpoint factor applies. Synchronization, exact inverse transport, and atomicity then invoke \Cref{thm:relative-cycle-continuity}, while the absolute closure and tracking hypotheses invoke \Cref{thm:integer-preservation}. Thus the interference-suppression operation does not alter the standard carrier parameterization or its integer structure under the admitted dynamics.
\end{proof}

The theorem is conditional on a common sequence satisfying the listed constraints; it neither constructs that sequence nor asserts that the joint feasible set is nonempty for arbitrary interference-to-signal ratio or platform dynamics.

\begin{corollary}[Conditional high-precision RTK performance]
Under the hypotheses of \Cref{thm:conditional-simultaneous-rtk}, assume in addition that \(\bm Q_y\succ0\), the joint RTK design is full rank, and \(\bm I_N\succ0\). If the propagated position bias, position covariance, integer bias \(\bm\beta_N\), ambiguity covariance \(\bm Q_{\widehat N}\), and validation risk satisfy the declared application and integrity thresholds, then the same receiver chain meets the corresponding high-precision RTK criteria under the admitted interference and dynamics.
\label{cor:conditional-high-precision-rtk}
\end{corollary}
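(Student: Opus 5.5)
The plan is to reduce the corollary to \Cref{thm:conditional-simultaneous-rtk} together with the two parallel propagation chains in \Cref{eq:covariance-propagation-chain,eq:bias-propagation-chain}. The proof would verify the thresholds on quantities that the earlier sections have already shown to be well defined under the stated hypotheses. It would not derive new estimation bounds.

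\textbf{Step 1: the carrier model is unchanged.} Invoke \Cref{thm:conditional-simultaneous-rtk} to conclude that the restored discriminator inputs have the form \Cref{eq:pre-discriminator-restored}. Through \Cref{thm:integer-preservation}, the carrier observations then retain the standard undifferenced structure \Cref{eq:standard-carrier-model} with arc-constant integers \(N_{i,s}\). Consequently, the double differences obey \Cref{eq:double-difference-carrier} with \(\nabla_{RB}\Delta_{sr}N\in\Z\). Any residual from the estimated transport or the closure appears only as a real-valued term. That real term is absorbed into \(\bm b_y\) and \(\bm\epsilon\) of the linear model \Cref{eq:rtk-linear-bias-model}.

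\textbf{Step 2: the estimation quantities are well defined.} Use \(\bm Q_y\succ0\), full rank of the design, and \(\bm I_N\succ0\) to justify the eliminator \(\bm M_x\) in \Cref{eq:real-parameter-eliminator}. These hypotheses also guarantee that \(\bm Q_{\widehat N}=\bm I_N^{-1}\) and \(\bm\beta_N\) in \Cref{eq:float-ambiguity-information,eq:float-ambiguity-bias} exist. The covariance input comes from \Cref{eq:multisatellite-residual-covariance,eq:carrier-cycle-covariance,eq:double-difference-covariance}, with all cross-covariances retained. Analogous GLS expressions give the position bias and covariance after ambiguity resolution.

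\textbf{Step 3: the thresholds imply the RTK criteria.} Once \(\bm\beta_N\) and \(\bm Q_{\widehat N}\) meet the declared thresholds, the correct-fix event \Cref{eq:correct-fix-event} is controlled. The sufficient bias condition \Cref{eq:bias-norm-sufficient} applies, and the wrong-candidate probability is bounded by \Cref{eq:wrong-candidate-union-bound}. Combined with the declared validation risk, this bounds the false-fix probability in \Cref{eq:rtk-decision-risk}. Conditioned on the correct fix, the position accuracy and integrity criteria then follow directly from the assumed thresholds on the propagated position bias and covariance.

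\textbf{Main obstacle.} The hardest part is Step 1: justifying that the estimated-factor residual is genuinely captured in \(\bm b_y\) and \(\bm Q_y\) with the correct correlations. That propagation is required by \Cref{ass:conditional-weight} and \Cref{eq:compensated-phase-variance}. The corollary states it only implicitly through ``propagated'' quantities, so I would make this an explicit reading of the hypothesis rather than prove it.
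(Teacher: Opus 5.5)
Your proposal follows the paper's proof: preserve the standard carrier model via \Cref{thm:conditional-simultaneous-rtk} and \Cref{thm:integer-preservation}, propagate residual means and covariances through \Cref{eq:float-ambiguity-information,eq:float-ambiguity-bias}, and let the declared thresholds yield the criteria by definition. Your Step 3 does more than needed. Invoking \Cref{eq:bias-norm-sufficient} there silently assumes the declared thresholds are at least that strong, which the statement does not give you; the paper avoids this by treating the final implication as definitional.
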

\begin{proof}
Under the preserved standard carrier model, the linear statistical model propagates residual means and covariances through \Cref{eq:float-ambiguity-information,eq:float-ambiguity-bias}. The stated threshold conditions then imply the declared accuracy and integrity criteria by their definitions.
\end{proof}

Under the Gaussian float-error model in \Cref{eq:biased-float-ambiguity}, condition on fixed \(\bm\beta_N\) and \(\bm Q_{\widehat N}\succ0\), and let \(0<\delta<1\). A model-checkable sufficient condition is
\begin{equation}
\norm{\bm\beta_N}_{\bm Q_{\widehat N}^{-1}}
+\sqrt{\chi^2_{n,1-\delta}}
<\frac12d_{\min},
\label{eq:probabilistic-voronoi-sufficient}
\end{equation}
where \(\chi^2_{n,1-\delta}\) is the \(1-\delta\) quantile of a chi-squared variable with \(n\) degrees of freedom. On the event
\(\norm{\bm\varepsilon_N}_{\bm Q_{\widehat N}^{-1}}\le\sqrt{\chi^2_{n,1-\delta}}\), the triangle inequality and \Cref{eq:bias-norm-sufficient} place the float error strictly inside the true Voronoi cell. Hence the true integer is the unique unvalidated ILS solution with probability at least \(1-\delta\). A subsequent validation gate requires its own acceptance analysis. If \(\bm\beta_N\) or \(\bm Q_{\widehat N}\) is estimated from data, its estimation uncertainty must be propagated before this condition is used.

\subsection{Principal Conclusions}
The fixed-physical-state weight witness shows that the complete GNSS formation map cannot factor through \(\calX\) alone. The behavioral quotient yields the minimal observer-state space \(\calC_{\rm obs}\), and \(\calH\) on \(\calE\) records the joint dependence without asserting independence or unrestricted product admissibility. The differential of \(\calH\) separates the physical dependence of both \(\alpha_s^{\nav}\) and \(g_s\) from observer-state dependence through \(D_{\bmeta}g_s\).

On \(g_s\neq0\), the normalized response determines the response one-form. With the standard radian normalization and response-tracking compatibility fixed, the form determines the compatible connection and the PTFE. Internal-reference-element chart changes leave the complete response invariant, whereas physical-reference-point changes are paired covariantly with the propagation factor. PTFE transport requires an admissible nonzero path; a ratio of nonzero endpoint phases alone is an algebraic endpoint relation.

After only physically and statistically justified nuisance freedoms are removed, the intrinsic Jacobian and, under the common-complex-gain auxiliary model, the pullback Fubini--Study metric are governed by the same horizontal projection. A known parameter-independent \(\Uone\) rotation preserves noise power and Fisher information but does not itself improve either. Correlated branches contribute through conditional, rather than marginally added, information.

For RTK, a circular phase branch, its continuous real-valued lift, and the PLL integer cycle count are distinct objects. Synchronized exact relative inverse transport preserves branch-to-branch cycle continuity, whereas the standard carrier observation additionally requires absolute phase closure. Residual array error must therefore be propagated through both mean and covariance; improved ADOP alone cannot establish a lower false-fix risk.

\subsection{Physical Interpretation}
The satellite--platform--medium dynamics define an objective propagation process, while array geometry, channels, synchronization, and weights determine the complex statistic recorded by the receiver. A change of observer state need not leave that statistic numerically invariant and may also alter the information retained. Objectivity is therefore expressed through the stated covariance and preservation laws, not through equality of all observer-dependent outputs.

Quotienting removes only freedoms shown to be unidentifiable for the declared task; phase transport compares response phases only along admissible nonzero paths. This separation prevents the observer-state-induced component of the response phase from being mistaken for propagation range while preserving physical motion, continuous carrier phase, and the RTK integer structure.

\subsection{Future Theoretical Problems}
Future work may extend the framework toward a joint geometry of phase, group delay, and code-correlation shape for broadband space--time arrays; hierarchical group actions for shared multifrequency and multisatellite freedoms; intrinsic statistical bounds under low SNR, non-Gaussian interference, and multimodal likelihoods; robust quotient structures under multipath, blockage, pattern, and mutual-coupling uncertainty; and hierarchical interfaces between SN-ASMO and inertial navigation, multi-antenna ARTK, integrity monitoring, and SN-ASOC active-observation control.

In summary, SN-ASMO establishes conditional compatibility among interference suppression, high-dynamic array reception, and preservation of carrier-phase RTK structure. Under the stated feasible-weight, front-end, synchronization, nonzero-response, phase-closure, and tracking assumptions, it preserves the standard carrier observation and integer parameterization. When the propagated position and integer metrics additionally satisfy the declared accuracy and integrity thresholds, the resulting chain meets the corresponding high-precision RTK criteria; the theory does not assert feasibility for arbitrary interference or dynamics.
\appendix
\section{Supplementary Mathematical Derivations}
\subsection[Cocycle Condition and Differential Invariance Under Chart Changes]{Cocycle Condition and Differential Invariance\\Under Internal-Reference-Element Chart Changes}
Let the transition factor between three valid internal reference elements \(\mu,\nu,\ell\) be
\begin{equation}
 t_{\nu\mu,s}=a_{\nu,s}^{[\mu]}.
\end{equation}
By definition,
\begin{equation}
 t_{\ell\mu,s}=t_{\ell\nu,s}t_{\nu\mu,s},
\label{eq:cocycle-condition}
\end{equation}
so the local internal-reference-element atlas satisfies the cocycle condition. Moreover,
\begin{equation}
\beta_{\nu,s}^{(O)}
=\beta_{\mu,s}^{(O)}t_{\nu\mu,s},
\qquad
\gamma_{\nu,s}=t_{\nu\mu,s}^{-1}\gamma_{\mu,s},
\end{equation}
and hence
\begin{equation}
 g_s=\beta_{\mu,s}^{(O)}\gamma_{\mu,s}
=\beta_{\nu,s}^{(O)}\gamma_{\nu,s}.
\end{equation}
Taking complex-logarithmic differentials gives
\begin{equation}
\frac{\dd\beta_{\nu,s}^{(O)}}{\beta_{\nu,s}^{(O)}}
+\frac{\dd\gamma_{\nu,s}}{\gamma_{\nu,s}}
=
\frac{\dd\beta_{\mu,s}^{(O)}}{\beta_{\mu,s}^{(O)}}
+\frac{\dd\gamma_{\mu,s}}{\gamma_{\mu,s}}
=\frac{\dd g_s}{g_s}.
\label{eq:cocycle-log-invariance}
\end{equation}
Thus, both the composite log-amplitude differential and the phase one-form are invariant under a chart change.

\subsection{Coherent-Window Freezing Remainder}
In the expected satellite term of \Cref{eq:exact-correlator}, expand about the window center \(t_n\):
\begin{equation}
F_s(t)
=F_s(t_n)+\dot F_s(t_n)(t-t_n)
+\frac12\ddot F_s(\xi_t)(t-t_n)^2,
\end{equation}
where \(\xi_t\) lies between \(t\) and \(t_n\). Multiplying by \(q_{s,n}^{*}(t)\) and integrating, and using \(m_1=0\), yields
\begin{equation}
\int q_{s,n}^{*}(t)F_s(t)\,\dd t
=m_0F_s(t_n)+r_{s,n}^{\rm frz}.
\end{equation}
The triangle inequality immediately gives \Cref{eq:freeze-remainder-bound}. If the correlator kernel does not have zero first moment, the term \(m_1\dot F_s(t_n)\) must also be retained.

\subsection{Topological Interpretation of the Global Phase-Lift Criterion}
\label{subsec:lift-proof}
The normalized map
\begin{equation}
 h=\frac{g}{\abs g}:D\to S^1
\end{equation}
admits a real-valued lift \(\widetilde\psi:D\to\R\) if and only if the induced homomorphism on the fundamental group,
\begin{equation}
 h_*:\pi_1(D)\to\pi_1(S^1)\cong\Z,
\end{equation}
is zero. For a closed loop \(\gamma\), the corresponding integer is precisely
\begin{equation}
 h_*([\gamma])
=\frac{1}{2\pi}\oint_\gamma\Ima\!\left(\frac{\dd g}{g}\right).
\end{equation}
This gives \Cref{eq:global-lift-condition}. If \(D\) is simply connected, then \(\pi_1(D)=0\), and the lift exists automatically\cite{lee2013,kobayashi1963}.

\subsection{Phase Covariance Under Proper Complex Noise}
Let
\begin{equation}
 z_i=A_i\ee^{\jj\phi_i}+n_i,
\qquad
 x_i=n_i\ee^{-\jj\phi_i}.
\end{equation}
When \(\abs{n_i}/A_i\) is sufficiently small,
\begin{equation}
\Arg(z_i)
=\phi_i+
\Ima\log\!\left(1+\frac{x_i}{A_i}\right)
=\phi_i+\frac{\Ima x_i}{A_i}
+O_p\!\left(\frac{\abs{n_i}^2}{A_i^2}\right).
\label{eq:arg-expansion-appendix}
\end{equation}
For a jointly proper complex random vector, \(\E[x_ix_j]=0\), and
\begin{equation}
\E[x_ix_j^{*}]
=\ee^{-\jj(\phi_i-\phi_j)}C_{ij}.
\end{equation}
Using
\begin{equation}
\cov(\Ima x_i,\Ima x_j)
=\frac12\Rea\E[x_ix_j^{*}],
\end{equation}
one obtains \Cref{eq:phase-error-covariance}, and subsequently \Cref{eq:transport-phase-variance}.

\subsection{Closed-Form Expression for the Dynamic Equivalent Phase Center}
The ideal composite response is
\begin{equation}
 g_s
=\sum_{m=1}^{M}w_m^{*}\Gamma_{m,s}
\ee^{-\jj\kappa(\bm b_m^a)\trans\bmu_s^a}.
\end{equation}
Its directional gradient is
\begin{equation}
\nabla_{\bmu}g_s
=-\jj\kappa
\sum_{m=1}^{M}w_m^{*}\Gamma_{m,s}
\ee^{-\jj\kappa(\bm b_m^a)\trans\bmu_s^a}\bm b_m^a.
\end{equation}
Since
\begin{equation}
\nabla_{\bmu}\widetilde\psi_s^A
=\Ima\!\left(\frac{\nabla_{\bmu}g_s}{g_s}\right)
=-\kappa\Rea\!\left[
\frac{\sum_mw_m^{*}\Gamma_{m,s}
\ee^{-\jj\kappa(\bm b_m^a)\trans\bmu_s^a}\bm b_m^a}
{g_s}
\right],
\end{equation}
and \(d_s^A=\widetilde\psi_s^A/\kappa\), \Cref{eq:dynamic-phase-center-closed} follows. If \(\Gamma_{m,s}\) depends on direction, its gradient terms must also be included.

\subsection{Equivalence of Complex Projection and Real Nuisance Projection}
Let \(\bmg\in\C^r\setminus\{0\}\). The real orbit tangent space of the common complex scale is
\begin{equation}
\calN=\Span_{\R}\{\calR(\bmg),\calR(\jj\bmg)\}.
\end{equation}
For any \(\bm v\in\C^r\),
\begin{equation}
\calR(\bm P_g^{\perp}\bm v)
=\bm P_{\calN}^{\perp}\calR(\bm v).
\label{eq:real-complex-projection-equivalence}
\end{equation}
This follows because realification of the complex orthogonal decomposition \(\C^r=\Span_{\C}\{\bmg\}\oplus\Span_{\C}\{\bmg\}^{\perp}\) removes exactly the two real directions \(\bmg\) and \(\jj\bmg\).

\subsection{Differential of the Maximal Invariant}
Let \(s=\bmg\herm\bmg\), \(\bm\Pi=\bmg\bmg\herm/s\), and \(\delta\bmg=\bm D\delta\bmvartheta\). Direct differentiation gives
\begin{equation}
\delta\bm\Pi
=\frac{\delta\bmg\bmg\herm+\bmg\delta\bmg\herm}{s}
-\frac{\bmg\bmg\herm}{s^2}
(\bmg\herm\delta\bmg+\delta\bmg\herm\bmg).
\end{equation}
Rearranging,
\begin{equation}
\delta\bm\Pi
=\frac{
\bm P_g^{\perp}\delta\bmg\bmg\herm
+\bmg\delta\bmg\herm\bm P_g^{\perp}}
{s},
\end{equation}
which is \Cref{eq:maximal-invariant-differential}. The Frobenius inner product of the two rank-one terms is zero because \(\bm P_g^{\perp}\bmg=0\), yielding \Cref{eq:maximal-invariant-norm}.

\subsection{Explicit Derivation of the Conditional Fisher Increment}
For the joint Gaussian model in \Cref{eq:joint-correlated-observation}, the conditional distribution is
\begin{equation}
\bm y_a\mid\bm y_0
\sim\calN\!\left(
\bm\mu_a+\bm Q_{a0}\bm Q_0^{-1}(\bm y_0-\bm\mu_0),
\bm Q_{a|0}
\right).
\end{equation}
Differentiation with respect to the parameter gives the conditional-mean Jacobian
\begin{equation}
\bm J_a-\bm Q_{a0}\bm Q_0^{-1}\bm J_0
=\bm J_{a|0}.
\end{equation}
The joint log-likelihood separates into a marginal term and a conditional term. The conditional expectation of the conditional score given \(\bm y_0\) is zero, so the Fisher cross term between the two parts is zero. This yields \Cref{eq:conditional-fisher-increment}.

\subsection{Biased ILS Decision Boundary}
Let the float error be \(\bm v=\bm\beta_N+\bm\varepsilon_N\). The ILS decision boundary between the true integer \(\bm N\) and a competing integer \(\bm N+\bm z\) satisfies
\begin{equation}
\bm v\trans\bm Q^{-1}\bm v
=(\bm v-\bm z)\trans\bm Q^{-1}(\bm v-\bm z),
\end{equation}
or equivalently
\begin{equation}
2\bm z\trans\bm Q^{-1}\bm v
=\bm z\trans\bm Q^{-1}\bm z.
\label{eq:ils-boundary}
\end{equation}
Requiring the true-integer side for both \(\bm z\) and \(-\bm z\) gives \Cref{eq:voronoi-bias-condition}. Moreover,
\begin{equation}
\bm z\trans\bm Q^{-1}\bm\varepsilon_N
\sim\calN(0,d_{\bm z}^2),
\end{equation}
so the probability of crossing toward \(\bm N+\bm z\) is exactly \Cref{eq:biased-pairwise-error}.

\section{Principal Symbols and Unit Conventions}
\renewcommand{\arraystretch}{1.18}
\begin{longtable}{p{0.23\textwidth}p{0.68\textwidth}}
\caption{Unified SN-ASMO notation}\label{tab:symbols}\\
\toprule
Symbol & Meaning\\
\midrule
\endfirsthead
\toprule
Symbol & Meaning\\
\midrule
\endhead
\bottomrule
\endfoot
\(\calX,\bmxi\) & Physical-state manifold and a particular physical state\\
\(\calC_{\rm obs},\bmeta\) & Behavior-minimal observer-state space and a particular observer state\\
\(\calE,\pi,\calE_{\bmxi}\) & Admissible joint domain \(\calE\subseteq\calX\times\calC_{\rm obs}\), projection onto the physical state, and admissible observer fiber\\
\(\calZ,\bmz,\calH\) & Observation space, observation, and joint observation-formation mapping\\
\(\mathscr{T}_U\) & Local trivialization of the joint total space over a base-space neighborhood \(U\)\\
\(\Phi\) & Standard Gaussian cumulative distribution function\\
\(\mathsf{Hor}_e,\mathsf{Ver}_e\) & Horizontal and vertical subspaces of the tangent space of the joint total space\\
\(O\) & Physical reference point defining geometric range and ordinary propagation phase\\
\(\mu,\nu,\ell\) & Internal-reference-element indices or a PI constraint-channel index, distinguished explicitly by context\\
\(\bmu_s^n,\bmu_s^a\) & Unit direction of arrival of satellite \(s\) in the navigation and array frames\\
\(\bm C_a^n,\bm C_n^a\) & Attitude matrix from the array frame to the navigation frame, and its inverse\\
\(\bm b_m^a,\bm r_m^n\) & Position of element \(m\) relative to the physical reference point in the array and navigation frames\\
\(\kappa,\lambda,c_0\) & Wavenumber \(2\pi/\lambda\), carrier wavelength, and vacuum speed of light\\
\(\bmc_s^{(O)}\) & Complete element-response vector relative to physical reference point \(O\)\\
\(\bma_s^{[\mu]}\) & Local response coordinates normalized by element \(\mu\)\\
\(\beta_{\mu,s}^{(O)}\) & Common complex factor associated with the internal reference element\\
\(\bmw,\bm W\) & Single weight vector and multiweight matrix\\
\(g_s,\bmg_s\) & Single-branch composite response and joint multiweight response vector\\
\(\varrho_s^A,h_s^A\) & Positive magnitude and \(\Uone\) phase element of the composite response\\
\(\psi_s^A,\widetilde\psi_s^A\) & Principal composite array-response phase and its real-valued continuous lift, in radians\\
\(\phi_s^{\nav}\) & Ordinary navigation propagation phase, in radians\\
\(\alpha_s^{\nav}\) & Shared navigation complex factor \(A_s\ee^{\jj\phi_s^{\nav}}\)\\
\(T_{12,s}^{A,\rm mod}\) & Model-defined algebraic endpoint phase factor; a PTFE transport only when an admissible nonzero path is supplied\\
\(\omega_s^A\) & Response-induced phase one-form \(\Ima(\dd g_s/g_s)\)\\
\(\Omega_{g_s}\) & Response-tracking \(\Uone\) connection whose horizontal equation is the PTFE\\
\(\kappa_{w,s},\kappa_{c,s}\) & Composite-phase condition numbers for weight and manifold perturbations\\
\(d_s^A\) & Equivalent range associated with composite array-response phase, \(\widetilde\psi_s^A/\kappa\), in metres\\
\(\bm r_{{\rm pc},s}^{\rm tan,a}\) & Tangential component of the weight-dependent dynamic equivalent phase center, in metres\\
\(\bm\Pi(\bmg)\) & Rank-one maximal invariant under a common complex scale\\
\(\bmvartheta\) & Task-declared physical, observer, or joint subcoordinate to be identified\\
\(\bm J^{\rm int}\) & Intrinsic Jacobian after whitening and projection of genuine nuisances\\
\(\bm F^{\rm eff}\) & Effective Fisher information after elimination of genuine nuisances\\
\(\bm G_{\rm FS}\) & Pullback of the Fubini--Study metric to the state space\\
\(L_{i,s}\) & Receiver carrier-phase observation, in cycles\\
\(N_{i,s}\) & Arc-initialized integer ambiguity associated with the continuous carrier lift, an element of \(\Z\)\\
\(B_{i,s}^{\hw}\) & Noninteger hardware bias, in metres\\
\(\Delta_{sr}\) & Inter-satellite difference: current satellite \(s\) minus reference satellite \(r\)\\
\(\nabla_{RB}\) & Between-receiver single difference: rover \(R\) minus base \(B\)\\
\(b_A,d_A\) & Equivalent array-phase bias in cycles and distance\\
\(\bm Q_{\widehat N}\) & Float integer-ambiguity covariance, in cycles squared\\
\(\bm\beta_N\) & Deterministic mean bias of the float ambiguity, in cycles\\
\(d_{\bm z},d_{\min}\) & Distance to a competing integer direction and shortest integer-lattice distance under the ILS metric\\
\end{longtable}

\section*{Resource Availability}
\subsection*{Lead Contact}
Further information and requests should be directed to the lead contact, Xianwei Meng (\href{mailto:mxianwei@hfut.edu.cn}{mxianwei@hfut.edu.cn}).

\subsection*{Materials Availability}
No new physical materials were generated in this theoretical study.

\subsection*{Data and Code Availability}
This theoretical study generated no new experimental dataset or research software. The manuscript specifies the observable quantities, assumptions, and reproducible validation protocols required to test its claims. Any future experimental data and implementation code should be released with the corresponding empirical study, subject to applicable security and licensing constraints.

\subsection*{Supplemental Information}
The supplementary mathematical derivations required to audit the main results are included in the Appendix of this manuscript; no separate supplemental dataset accompanies this version.

\section*{Author Contributions}
X.M. conceived the theory, developed the mathematical framework, performed the derivations and logical audit, designed the validation protocols, and wrote and revised the manuscript.

\section*{Declaration of Interests}
The author declares no competing interests.

\section*{Acknowledgments and Funding}
The author reports no external funding for this work and declares no additional acknowledgments.

\section*{Declaration of AI and AI-Assisted Technologies in the Writing Process}
During the preparation of this work, the author used Deepseek to assist with language refinement, structural organization, and consistency checking. The author independently developed the theory, verified all derivations, claims, and references, reviewed and edited the resulting text, and takes full responsibility for the content of the manuscript.


\begin{thebibliography}{99}
\bibitem{meng2026dualaxis}
X. Meng,
``From primitive observation records to the dual-axis axiom: Set-theoretic foundations of observational geometry,''
ResearchGate Preprint, rev. R3.1, 2026, doi: 10.13140/RG.2.2.15815.05281.

\bibitem{meng2026ptfe}
X. Meng,
``Mathematical and theoretical foundations of the basic phase-transport equation: A unified differential, covariant, and path-functional theory,''
ResearchGate Preprint, rev. R4.1, 2026, doi: 10.13140/RG.2.2.19170.49609.

\bibitem{delorenzo2006}
D. S. De Lorenzo, J. Rife, P. Enge, and D. M. Akos,
``Navigation accuracy and interference rejection for an adaptive GPS antenna array,''
in \emph{Proc. ION GNSS 2006}, Fort Worth, TX, USA, 2006, pp. 763--773.

\bibitem{church2007}
C. Church, I. Gupta, and A. O'Brien,
``Adaptive antenna induced biases in GNSS receivers,''
in \emph{Proc. 63rd Annual Meeting of the Institute of Navigation}, Cambridge, MA, USA, 2007, pp. 204--212.

\bibitem{church2009}
C. M. Church and I. J. Gupta,
``Estimation of adaptive antenna induced code and carrier phase bias in GNSS receivers,''
\emph{NAVIGATION}, vol. 56, no. 3, pp. 151--160, 2009.

\bibitem{obrien2008}
A. J. O'Brien and I. J. Gupta,
``Mitigation of adaptive antenna-induced biases in GNSS receivers,''
in \emph{Proc. 2008 National Technical Meeting of the Institute of Navigation}, San Diego, CA, USA, 2008, pp. 177--185.

\bibitem{xie2021}
Y. Xie, Y. Chen, F. Chen, and F. Wang,
``Carrier phase correction for GNSS space-time array processing based on frequency jerk detection,''
\emph{IET Radar, Sonar \& Navigation}, vol. 15, no. 3, pp. 240--249, 2021, doi: 10.1049/rsn2.12033.

\bibitem{li2023}
S. Li, F. Wang, X. Tang, S. Ni, and H. Lin,
``Anti-jamming GNSS antenna array receiver with reduced phase distortions using a robust phase compensation technique,''
\emph{Remote Sensing}, vol. 15, no. 17, Art. no. 4344, 2023, doi: 10.3390/rs15174344.

\bibitem{wang2024}
Y. Wang,
``Distortion-less carrier phase tracking space-time adaptive processor based on power inversion criterion for GNSS anti-jamming receiver,''
\emph{IET Radar, Sonar \& Navigation}, vol. 18, no. 5, pp. 754--764, 2024, doi: 10.1049/rsn2.12515.

\bibitem{wang2025}
Y. Wang, X. Ye, S. Chen, and Z. Liu,
``Study on the influence of space-time adaptive processor on single point position and real-time kinematic for GNSS antenna array anti-jamming receiver,''
\emph{IET Radar, Sonar \& Navigation}, vol. 19, Art. no. e70035, 2025, doi: 10.1049/rsn2.70035.

\bibitem{li2025artk}
B. Li, Z. Zhang, and W. Miao,
``ARTK: Antenna-array aided RTK,''
in \emph{GNSS Real-Time Kinematic Positioning}. Singapore: Springer, 2025, pp. 255--276, doi: 10.1007/978-981-96-9116-6\_12.

\bibitem{bamberg2023}
T. Bamberg, A. Konovaltsev, and M. Meurer,
``Enabling RTK positioning under jamming: Mitigation of carrier-phase distortions induced by blind spatial filtering,''
\emph{NAVIGATION}, vol. 70, no. 1, 2023, doi: 10.33012/navi.556.

\bibitem{compton1979}
R. T. Compton,
``The power-inversion adaptive array: Concept and performance,''
\emph{IEEE Transactions on Aerospace and Electronic Systems}, vol. AES-15, no. 6, pp. 803--814, 1979, doi: 10.1109/TAES.1979.308765.

\bibitem{lee2013}
J. M. Lee,
\emph{Introduction to Smooth Manifolds}, 2nd ed. New York, NY, USA: Springer, 2013, doi: 10.1007/978-1-4419-9982-5.

\bibitem{kobayashi1963}
S. Kobayashi and K. Nomizu,
\emph{Foundations of Differential Geometry}, vol. I. New York, NY, USA: Wiley, 1963.

\bibitem{kay1993}
S. M. Kay,
\emph{Fundamentals of Statistical Signal Processing, Volume I: Estimation Theory}. Upper Saddle River, NJ, USA: Prentice Hall, 1993.

\bibitem{amari2016}
S. Amari,
\emph{Information Geometry and Its Applications}. Tokyo, Japan: Springer, 2016, doi: 10.1007/978-4-431-55978-8.

\bibitem{vallisneri2008}
M. Vallisneri,
``Use and abuse of the Fisher information matrix in the assessment of gravitational-wave parameter-estimation prospects,''
\emph{Physical Review D}, vol. 77, Art. no. 042001, 2008, doi: 10.1103/PhysRevD.77.042001.

\bibitem{teunissen1995}
P. J. G. Teunissen,
``The least-squares ambiguity decorrelation adjustment: A method for fast GPS integer ambiguity estimation,''
\emph{Journal of Geodesy}, vol. 70, pp. 65--82, 1995, doi: 10.1007/BF00863419.

\bibitem{teunissen1998}
P. J. G. Teunissen,
``Success probability of integer GPS ambiguity rounding and bootstrapping,''
\emph{Journal of Geodesy}, vol. 72, pp. 606--612, 1998, doi: 10.1007/s001900050199.

\bibitem{antex2026}
O. Montenbruck,
\emph{ANTEX: The Antenna Exchange Format, Version 2.0}. International GNSS Service, Issue Apr. 2, 2026.

\end{thebibliography}
\end{document}